\newtheorem{theorem}{Theorem}
\newtheorem{definition}{Definition}
\newtheorem{proposition}{Proposition}
\newtheorem{lemma}{Lemma}
\newcommand{\cA}{{\mathcal A}}
\newcommand{\cG}{{\mathcal G}}
\newcommand{\cP}{{\mathcal P}} 
\newcommand{\cR}{{\mathcal R}} 
\newcommand{\cS}{{\mathcal S}} 
\newcommand{\cM}{{\mathcal M}}
\newcommand{\cY}{{\mathcal Y}}
\newcommand{\cX}{{\mathcal X}}
\newcommand{\cZ}{{\mathcal Z}}
\newcommand{\cN}{\mathcal N}
\providecommand{\customgenericname}{}
\newcommand{\newcustomtheorem}[2]{%
  \newenvironment{#1}[1]
  {%
  \renewcommand\customgenericname{#2}%
  \renewcommand\theinnercustomgeneric{##1}%
  \innercustomgeneric
  }
  {\endinnercustomgeneric}
}
\begin{document}
\twocolumn[
\icmltitle{MANSA: Learning Fast and Slow in Multi-Agent Systems}




\begin{icmlauthorlist}
\icmlauthor{David Mguni}{h}
\icmlauthor{Haojun Chen}{pk}
\icmlauthor{Taher Jafferjee}{h}
\icmlauthor{Jianhong Wang}{imp}
\icmlauthor{Longfei Yue}{pk}
\icmlauthor{Xidong Feng}{h,ucl}
\icmlauthor{Stephen McAleer}{ind}
\icmlauthor{Feifei Tong}{h}
\icmlauthor{Jun Wang}{ucl}
\icmlauthor{Yaodong Yang$^\dag$}{pk}
\end{icmlauthorlist}

\icmlaffiliation{h}{Huawei R\&D}

\icmlaffiliation{pk}{Institute for AI, Peking University}
\icmlaffiliation{imp}{University of Manchester}
\icmlaffiliation{ucl}{University College, London}
\icmlaffiliation{ind}{Independent Researcher}
\icmlcorrespondingauthor{}{davidmguni@hotmail.com}
\icmlcorrespondingauthor{}{j.wang@ucl.ac.uk}
\icmlcorrespondingauthor{}{yaodong.yang@pku.edu.cn}


\vskip 0.3in
]



\printAffiliationsAndNotice
\begin{abstract}
In multi-agent reinforcement learning (MARL), 
independent learning (IL) often shows remarkable performance 
and easily scales with the number of agents. Yet, using IL can be inefficient and runs the risk of failing to successfully train,  particularly in scenarios that require agents to coordinate their actions. 
Using 
centralised learning (CL) enables MARL agents to quickly learn how to coordinate their behaviour but employing CL everywhere is often prohibitively expensive in real-world applications. Besides, using CL in value-based methods often needs strong representational constraints (e.g. individual-global-max condition) that can lead to poor performance if violated.  
In this paper, we introduce a novel plug \& play IL framework named \textbf{M}ulti-\textbf{A}gent \textbf{N}etwork \textbf{S}election \textbf{A}lgorithm (MANSA) which selectively employs CL only at states that require coordination. At its core, MANSA has an additional agent that uses \textit{switching controls} to quickly learn the best states to activate CL during training,  
using CL only where necessary and vastly reducing the computational burden of CL. 
Our theory proves MANSA
preserves cooperative MARL convergence properties, boosts IL performance and can optimally make use of a fixed budget on the number CL calls. We show empirically in Level-based Foraging (LBF) and StarCraft Multi-agent Challenge (SMAC) that MANSA achieves fast, superior and more reliable  performance while making 40\% fewer CL calls in SMAC and using CL at only 1\% CL calls in LBF. 
\end{abstract}

\section{Introduction}













Multi-agent reinforcement learning (MARL) has emerged as a powerful framework that enables autonomous agents to complete various tasks in areas such as autonomous driving \citep{zhou2020smarts}, swarm robotics \citep{mguni2018decentralised,mguni2019coordinating} and smart grids \cite{wang2021multi,qiu2021multi,qiu2022mean}. Among MARL methods are a class of algorithms known as  independent learners (IL) e.g. independent Q learning \citep{tan1993multi}. IL decomposes a MARL problem with $N$ agents into $N$ decentralised single-agent problems. In this way, each agent treats other agents as part of the environment  which provides a straightforward way of training agents in a decentralised manner. Since the agents ignore other agents, IL can be trained quickly  as each agent's learning process is contingent on only its local observations and own actions. This is efficient in scenarios that require only weak interactions between agents \citep{kok2004sparse}.  

Despite these apparent benefits, training MARL using IL has several formidable drawbacks: with no ability to observe the actions of other agents, random occurrences of successful coordination among IL agents are improbable, causing IL methods to sometimes struggle in tasks that require coordination
\citep{hernandez2017survey}. Also, ignoring other agents' influence on the system means from the agent's perspective, the environment can appear non-stationary which precludes convergence guarantees \cite{yang2020overview}.  

 
%
%
%

On the other hand, MARL learners can be trained in simulated environments in which agents can be provided with other agents' observations and other state information. Centralised training and decentralised execution (CT-DE) \citep{kraemer2016multi, foerster2018counterfactual, mcaleer2022escher} is a framework that uses a centralised critic that exploits global information during training while performing execution in a decentralised fashion. With this added information during training, agents can learn to condition their policies on other agents' actions which mitigates the appearance of non-stationarity.  The CT-DE framework has become a central MARL paradigm and is the basis of popular methods such as QMIX \citep{rashid2018qmix}, SPOT-AC \citep{mguni2021learning} and COMA \citep{foerster2018counterfactual}. 
Various studies have conjectured that CT-DE can speed up training
by fostering cooperative behaviour and stabilising training. This is useful when 
there is a strong coordination component that produces a need for global observations during training \citep{sharma2021survey}. 
%
%
Nevertheless, CT-DE suffers from an explosive growth in complexity since the joint action-state space grows exponentially with the number of agents \citep{deng2023complexity}. Consequently, CT-DE methods can  require large numbers of samples to complete training. In regions in which the agents do not strongly interact, this added complexity can prove to be an unnecessary burden as agents do not benefit from global information \citep{kok2004sparse}. Fig. \ref{figure:traffic_junction} shows an example scenario in which the agents are required to coordinate only at a small subregion. 

To mitigate the explosive growth in complexity and enable CT-DE to scale, various CT-DE algorithms such as QMIX \citep{rashid2018qmix}, VDN \cite{sunehag2018value} decompose the joint value function into factors that depend only on individual agents. The representational constraints needed to achieve such decompositions can lead to provably poor exploration and suboptimality \citep{mahajan2019maven}. For example, QMIX requires a monotonicity constraint that can produce suboptimal value approximation. 
%
%
%
\begin{figure}[h]
    \centering
    \begin{subfigure}[b]{0.23\textwidth}
        
        \includegraphics[width=\textwidth, height=3cm]{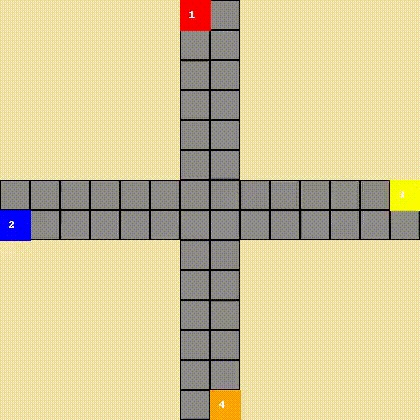}
    \end{subfigure}
    \begin{subfigure}[b]{0.23\textwidth}
        \includegraphics[width=\textwidth, height=3cm]{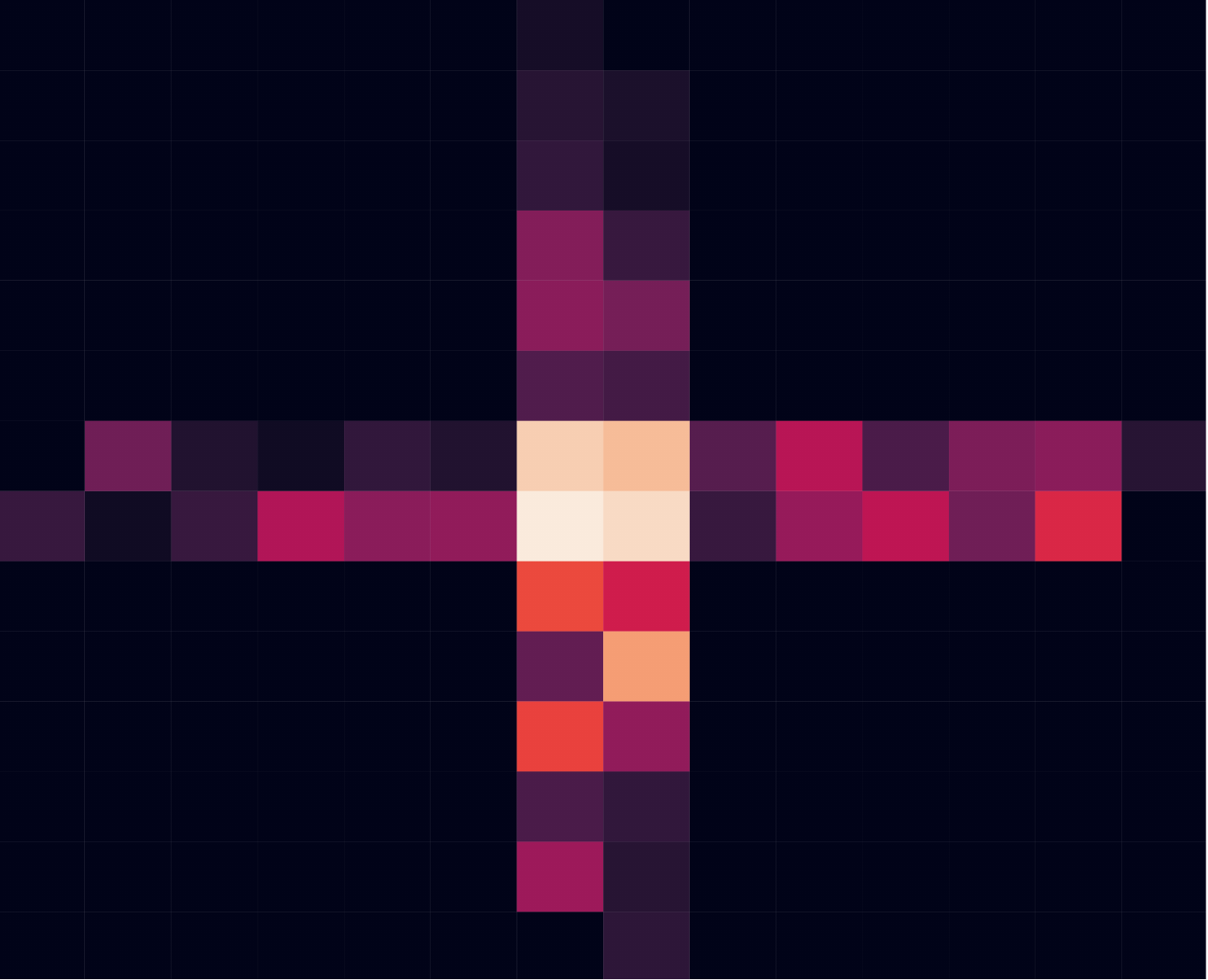}
    \end{subfigure}
    \caption{\textit{Left.} In this Traffic Junction scenario,  to avoid collisions agents (coloured squares) need only coordinate at the intersection. Before, their actions do not affect others so using IL at these states is sufficient. \emph{Right.} Heatmap of MANSA's CL calls. MANSA activates CL most at the intersection where coordination is needed.}
    \label{figure:traffic_junction} \vspace{-0.4 cm}
\end{figure}


To tackle these issues, we introduce a general plug \& play MARL framework, MANSA which optimally selects where in the environment to call on centralised learners to boost IL during training. MANSA involves  a decentralised learning method, 
a centralised critic network, and, an \textit{adaptive} reinforcement learning (RL) agent that presides over when CL or IL is used. 
Specifically, the additional agent determines at which states to activate CL 
while IL 
is used at all other states. This is in contrast to current MARL methods that use solely either CL or IL at all states throughout training. A key feature of MANSA is the novel combination of RL and a form of policy known as \textit{switching controls} \citep{mguni2022timing,mguniligs,mguni2021learning}. Switching controls are policies that introduce a switch mechanism that affects some control process in a dynamical system \cite{mguni2018viscosity}. 
In our case, as we show, this enables the adaptive RL agent to quickly determine where to switch to CL while the off-policy IL and off-policy CL jointly learn
from the gathered experience of whichever learner interacts with the environment (at any one time only one of CL or IL interacts with the environment) and minimise unnecessary CL calls during training.  This allows the benefits of both algorithm classes to be leveraged while overcoming some of the issues of any one class. Moreover, the binary decision space of switching controls means that the adaptive RL agent can rapidly determine the states where CL is beneficial while the MARL agents learn.


Since CL calls are expensive, it can be useful to consider enforcing a fixed budget on the number of CL calls during training. To this end in Sec. \ref{sec:mansa_budget}, we extend MANSA to enable it to solve MARL problems while respecting a budgetary constraint of the number of allowed CL calls during training.

Overall, MANSA has several advantages:  
%

    $\bullet$ By switching to CL only at the set of states in which it is beneficial while leveraging the benefits of IL, MANSA increases the learning efficiency of CT-DE (see Sec. 6.1).
    \newline
    $\bullet$ MANSA activates CL when (and only when) required resulting in MANSA boosting IL performance and enabling IL to tackle tasks which using IL would otherwise lead to coordination failure (see Sec. 6.2.2).
    \newline
    $\bullet$  MANSA minimises the number of times that CL is called (and hence the global information is used during training) while either matching or improving the performance of fully CL methods (see Sec. 6.2.1).    Additionally, MANSA allows for a fixed budget for calls of CL (see Sec. 6.3). 
    \newline
    $\bullet$ MANSA is a plug \& play framework which seamlessly adopts any MARL algorithm (see Sec. 6.2). 

%
To enable MANSA to perform successfully, we tackle several challenges.
First, including a new adaptive RL agent that learns while the $N$ MARL agents are training can occasion convergence issues. Second,
the adaptive RL agent uses switching controls which differs from the frameworks of standard RL. To this end, we prove MANSA preserves the MARL convergence properties (Theorem \ref{theorem:existence}) and boosts the performance of IL agents (Prop. \ref{NE_improve_prop}). We then characterise the optimal CL activation points with an online condition enabling it to quickly determine where switching to CL is beneficial during the agents' training phase (Prop. \ref{prop:switching_times}). 

When the problem includes budgetary constraints on the number of allowed CL calls, as the number of CL calls accumulates there is less freedom to execute more CL activations further on during training. Therefore to make optimal use of the allowed number CL calls, it is necessary to learn a policy that optimally decides whether to activate CL calls \textit{given its remaining budget}.  We resolve this by using a \textit{state augmentation} technique which treats the remaining budget as a state component (Theorem \ref{thm:optimal_policy_budget}). State augmentation techniques originated in control theory \cite{daryin2005nonlinear} and have recently been adapted to single agent RL \cite{sootla2022saute,mguni2022timing}.     

\section{Related Work}
A key aim of the CT-DE framework is to ensure the policies it generates are consistent with the desired system goal. One framework to fulfil this from a game theoretical perspective is called Markov Convex game (MCG) \cite{wang2020shapley, wang2022shaq}. A necessary condition for the MCG is the Individual-Global-Max (IGM) principle  \citep{son2019qtran}. To realise the IGM in the CT-DE framework, QMIX \cite{rashid2018qmix} and VDN \cite{sunehag2018value} propose two sufficient conditions of IGM to factorise the joint action-value function. Crucially, such decompositions and limited by the action-value function classes and the systems that do not adhere to these conditions \citep{wangqplex}. 

Several methods have been proposed to address this structural limitation. QPLEX \citep{wangqplex} uses a dueling network architecture to factor the joint action-value function avoiding representational restrictions. Nevertheless, QPLEX has been shown to fail in simple tasks with non-monotonic value functions \citep{rashid2020weighted}. 
QTRAN \citep{son2019qtran} formulates the MARL problem as a constrained optimisation problem with L2 penalties for decentralisation. Nevertheless, QTRAN has been shown to scale poorly in complex
MARL tasks such as SMAC \citep{peng2021facmac}. WQMIX \citep{rashid2020weighted} considers a weighted projection which is weighted towards better performing joint actions. At the core of these techniques are heuristics that do not guarantee IGM consistency. Consequently, achieving full expressiveness of the IGM function class with scalability remains an open challenge for MARL. 

Actor-critic methods such as COMA \citep{foerster2018counterfactual} and MADDPG \citep{lowe2017multi} are popular methods within MARL. These methods involve a centralised critic but nonetheless do not impose restrictions to represent the joint-action value function. Nevertheless, these methods are outperformed by value-based methods such as QMIX \cite{rashid2018qmix} and SHAQ \cite{wang2022shaq} on standard MARL benchmarks e.g. StarCraft Multi-Agent Challenge (SMAC) \citep{peng2021facmac}. MAPPO \citep{yu2022surprising} which is a leading actor-critic method with a centralised value function, extends a popular single-agent RL method, PPO \citep{schulman2017proximal} to MARL. Nevertheless, in some tasks,  MAPPO has been shown to be outperformed by IL, specifically, PPO \cite{schulman2017proximal} with only modest hyperparameter tuning \cite{de2020independent}. 
Consequently, in this paper, we realise our framework within value-based methods. Nevertheless, MANSA's plug \& play facility supports the extension to actor-critic methods.  

Several papers have explored the issue of exploiting localility of the agents' interactions
in different ways. Early works such as \citep{kok2004sparse} tackle the problem in learning in systems with sparse subregions. Such works make stringent assumptions that require the global coordination requirements of the system to be known beforehand. Moreover, other works centered on detecting where in the state space global or extra information is required to obtain a good policy. These works take the approach of detecting the influence of other agents on the reward signal. This approach is highly limited in our setting where the reward signal is allowed to be both a priori unknown and noisy.

\section{MANSA} 
%
A fully cooperative multi-agent system is modelled by a decentralised-Markov decision process (dec-MDP). A dec-MDP is an augmented MDP involving two or more agents $\{1,\ldots, N\}=:\mathcal{N}$ with a common goal that each independently decide actions to take which they do so simultaneously over many time steps. Formally, a dec-MDP is a tuple $\mathfrak{M}=\langle \mathcal{N},\mathcal{S},\left(\mathcal{A}_{i}\right)_{i\in\mathcal{N}},P,R,\gamma\rangle$ where $\mathcal{S}$ is the finite set of states, $\mathcal{A}_i$ is an action set for agent $i\in\mathcal{N}$. At each time $t\in 0,1,\ldots,$ the system is in state $s_t\in\mathcal{S}$ and each agent $i\in\mathcal{N}$ takes an action $a^i_t\in\mathcal{A}_i$. The \textit{joint action}\ $\boldsymbol{a}_t=(a^1_t,\ldots, a^N_t)\in\boldsymbol{\mathcal{A}}\equiv\times_{i=1}^N\cA_i$  produces an immediate reward $r\sim R(s_t,\boldsymbol{a}_t)$ where $R:\mathcal{S}\times\boldsymbol{\mathcal{A}}\to\mathcal{P}(D)$ is the team reward function that all agents jointly seek to maximise and
where $D$ is a compact subset of $\mathbb{R}$ and $\cP$ is some distribution on $\mathbb{R}$. Lastly, $P:\mathcal{S} \times \boldsymbol{\mathcal{A}} \times \mathcal{S} \rightarrow [0, 1]$ is the probability function describing the system dynamics.  We consider a partially observable system so that given the system is in the state $s_t\in \cS$, each agent $i\in \cN$ makes only local observations $\tau_{t,i}=O(s_t,i)$ where $O:\cS\times \cN \to \cZ_i$ is the observation function and $\cZ_i$ is the set of local observations for agent $i$.  To decide its action each agent samples its \textit{Markov policy} $\pi_{i,\theta_i}: \cZ_i \times \mathcal{A}_i \rightarrow [0,1]$ which is parameterised by the vector $\theta_i\in\mathbb{R}^d$ and is contained in $\Pi_i$. We occasionally drop the  parameter $\theta_i$ and write $\pi_{i}$ and we denote by $\boldsymbol{\Pi}:=\times_{i\in\mathcal{N}}\Pi_i$. For any agent and for any joint policy $\boldsymbol{\pi}\in \boldsymbol{\Pi}$, the state value and state-action value function are: $
v(s|{\boldsymbol{\pi}})=\mathbb{E}\left[\sum_{t=0}^\infty \gamma^tr\Big|s_0=s, \boldsymbol{a}\sim\boldsymbol{\pi}\right]$ and $
Q(s,\boldsymbol{a}|\boldsymbol{\pi})=\mathbb{E}\left[\sum_{t=0}^\infty \gamma^tr\Big|s_0=s,\boldsymbol{a_0}=\boldsymbol{a}; \boldsymbol{a}\sim\boldsymbol{\pi}\right]$ respectively. 

We now describe the core details of MANSA, how it learns to determine when to use a centralised learning process, and how it improves learning and performance. We then describe the agents' objectives and learning processes.

\subsection{Framework}

To tackle the challenges described, we equip each MARL agent with access to both a centralised learner, which we call {\fontfamily{cmss}\selectfont Central} and an independent learner, which we call {\fontfamily{cmss}\selectfont Independent}. MANSA includes an additional RL agent, {\fontfamily{cmss}\selectfont Global}, i.e., the switching controller, that decides on the states to activate {\fontfamily{cmss}\selectfont Central} during the agents' training phase while using {\fontfamily{cmss}\selectfont Independent} as the learning algorithm everywhere else. 

Fig. \ref{figure:mansa_schema} shows a schematic representation of MANSA. {\fontfamily{cmss}\selectfont Global} observes the global state $s_t$ of the environment and samples the discrete policy of the switching controller $g_t \sim \mathfrak{g}:\mathcal{S} \to \{0,1\}$. If $g_t = 0$, each of the $N$ agents in the environment use their respective local observations of the environment to generate actions $\boldsymbol{a}_t$ from the policy of {\fontfamily{cmss}\selectfont Independent}. If $g_t = 1$, $\boldsymbol{a}_t$ is generated from the policy of {\fontfamily{cmss}\selectfont Central} using the global state. The agents' actions $\boldsymbol{a}_t$ are executed in the environment and the loop repeats. The trajectories generated by this process are stored in a replay buffer from which {\fontfamily{cmss}\selectfont Global}, {\fontfamily{cmss}\selectfont Independent}, and {\fontfamily{cmss}\selectfont Central} are trained. MANSA includes an (additional) feature that imposes the condition that CL updates can only occur when the Global agent makes a CL call (i.e. when $g=1$). This feature serves as a useful tool when there is a need to ensure that communication costs are minimised during training while at the same time leveraging the benefits of both IL and CL.   


\begin{figure}[h!]
    \centering    
    \includegraphics[width=0.4\textwidth]{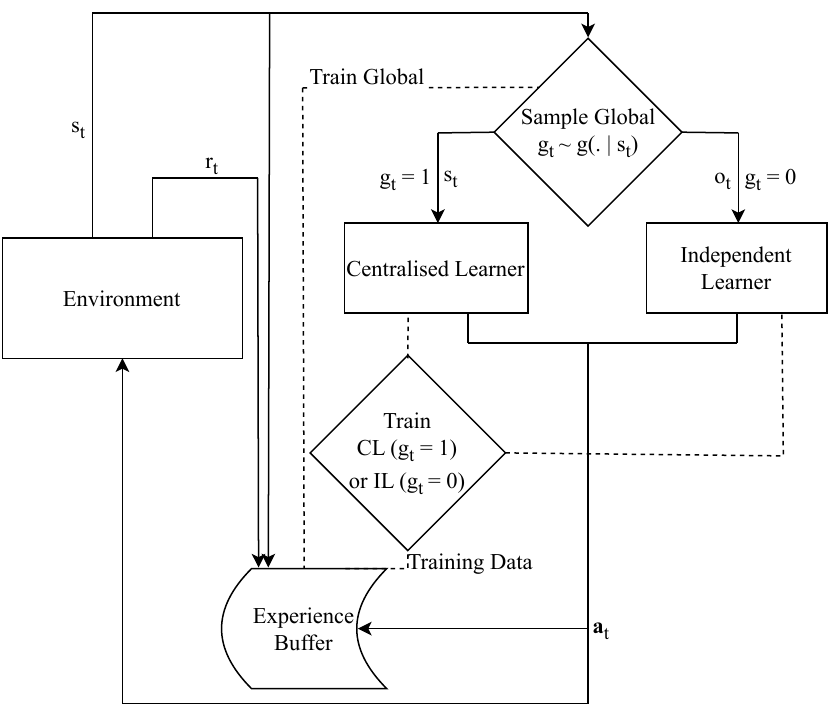}\vspace{-.4cm}
    \caption{\textcolor{black}{MANSA schematic.}}
    \label{figure:mansa_schema}
\end{figure}   

\begin{figure}[h!]
        \centering
        \includegraphics[width=0.45\textwidth]{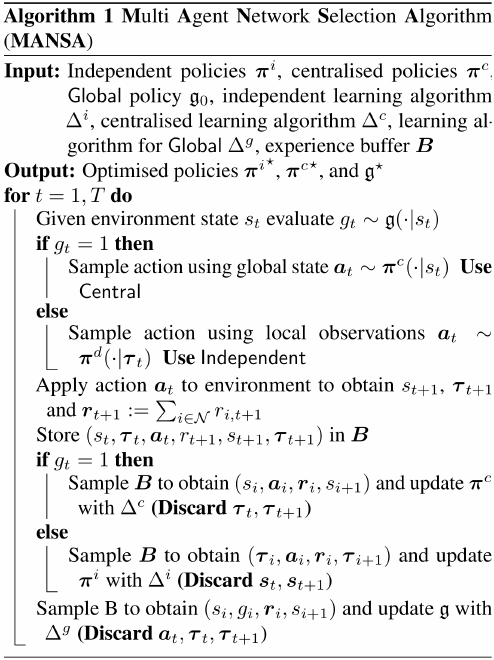}
    \label{fig:algo}
\end{figure}
The  {\fontfamily{cmss}\selectfont Global} agent is endowed with its own objective which captures its goal to improve the learning process and maximise the performance of the system of the $N$ MARL agents through its decisions of where to activate {\fontfamily{cmss}\selectfont Central}. 
To induce {\fontfamily{cmss}\selectfont Global} to selectively choose when to perform an activation, each activation incurs a fixed cost for {\fontfamily{cmss}\selectfont Global} which is quantified by a fixed constant $c>0$. These costs ensure that any activation of the CL critic must be beneficial to the performance of the system either at the current or subsequent states. The objective for {\fontfamily{cmss}\selectfont Global} is: 
\begin{align*}
\hspace{-3 mm}v_G(s|\boldsymbol{\pi},\mathfrak{g})  = \mathbb{E}_{g\sim\mathfrak{g}}\left[ \sum_{t=0}^\infty \gamma^t\left(r -c\cdot \boldsymbol{1}(g(s_t))\right)\Big|s_0=s; \boldsymbol{a}\sim\boldsymbol{\pi}\right],
\end{align*}
and {\fontfamily{cmss}\selectfont Global}'s action-value function is $Q_G(s,\boldsymbol{a}|\boldsymbol{\pi},\mathfrak{g})=\mathbb{E}_{g\sim\mathfrak{g}}\left[\sum_{t=0}^\infty \gamma^t(r-c\cdot \boldsymbol{1}(\mathfrak{g}(s_t)))|s_0=s,\boldsymbol{a_0}=\boldsymbol{a}; \boldsymbol{a}\sim\boldsymbol{\pi}\right]$.  

With this objective, {\fontfamily{cmss}\selectfont Global}'s goal is to maximise the system performance by activating {\fontfamily{cmss}\selectfont Central} at the required set of states to enable the agents to solve $\mathfrak{M}$ with the minimal number of CL activations. Therefore, by learning an optimal $\mathfrak{g}$, {\fontfamily{cmss}\selectfont Global} acquires the optimal policy for activating  {\fontfamily{cmss}\selectfont Central}.

Adding the agent {\fontfamily{cmss}\selectfont Global} with an objective distinct from the $N$ agents results in a non-cooperative Markov game $\cG=\langle \cN\times\{G\},\cS,\left((\cA_i)_{i\in\cN},\cA_G\right),P,(R,R_G),\gamma\rangle$ where $G$, $\cA_G:=\{0,1\}$ and $R_G(s,a,g):=R(s,a) -c\cdot \boldsymbol{1}(g)$ denote the {\fontfamily{cmss}\selectfont Global} agent, its action set and its reward function respectively. In MARL, having multiple learners with a payoff structure that is neither zero-sum nor a team game can occasion convergence issues \cite{shoham2008multiagent}. Moreover, unlike standard MARL frameworks, MANSA incorporates switching controls used by {\fontfamily{cmss}\selectfont Global}. Nevertheless in Sec. \ref{sec:convergence} we prove the convergence of MANSA under standard assumptions. 

%

\textbf{\textcolor{black}{Details on Architecture }}\newline
\textbf{MANSA's components.} 
We now describe a concrete realisation of MANSA's core components which consist of $N$ MARL agents, a CL RL algorithm as \textbf{{\fontfamily{cmss}\selectfont Central}}, an IL RL algorithm as  {\fontfamily{cmss}\selectfont Decentral} and a switching control RL algorithm as {\fontfamily{cmss}\selectfont Global}. Each (MA)RL component can be replaced by various other (MA)RL algorithms.

    $\bullet$ \textbf{$N$ MARL agents}. Each agent has two value-based policies. That is, each agent has (1) a policy induced by a value function that takes as input agent's \textit{global observation} which includes the joint action and global state, and (2) an action policy induced by a value function that takes as input only the agent's local observation.
    \newline
    $\bullet$ \textbf{Independent Q-Learning (IQL)}. In this paper, we use IQL \citep{tan1993multi} to train {\fontfamily{cmss}\selectfont Decentral}. IQL is a popular RL algorithm which is off-policy. 
    \newline
    $\bullet$ \textbf{QMIX}. For training {\fontfamily{cmss}\selectfont Central}, we use QMIX \citep{rashid2018qmix}, an off-policy MARL value-based method that accommodates only action value functions that adhere to a monotonicity constraint in the combination of the agents’ individual value functions.
    \newline
    $\bullet$ \textbf{Switching Control Policy} \cite{mguni2022timing}. A soft actor-critic (SAC) \cite{haarnoja2018soft} agent called {\fontfamily{cmss}\selectfont Global} whose policy's action set consists of $2$ actions: 1) use the centralised policy \textcolor{black}{(perform CL updates)}, 2) do not use the centralised policy \textcolor{black}{(perform IL updates)}. {\fontfamily{cmss}\selectfont Global} updates its policy $\mathfrak{g}$ while each agent learns their individual policy.

\textcolor{black}{The MANSA framework includes a feature that enables it to restrict the CL updates to only when {\fontfamily{cmss}\selectfont Global} executes a CL call (i.e. when $g_t=1$). In this way, communication occurs between the CL agents solely when {\fontfamily{cmss}\selectfont Global} performs a CL activation (no information is shared between IL and CL). This ensures the communication burden between agents is strictly limited during training.}


Note also the switching control mechanism results in a framework in which the problem facing {\fontfamily{cmss}\selectfont Global} has a markedly reduced computational complexity as compared with that facing the {\fontfamily{cmss}\selectfont Central} and {\fontfamily{cmss}\selectfont Decentral}  (though the learners share the same experiences). Crucially, the decision space for {\fontfamily{cmss}\selectfont Global} is $\mathcal{S}\times\{0,1\}$ i.e at each state it makes a binary decision. Consequently, the learning process for $\mathfrak{g}$ is much quicker than either {\fontfamily{cmss}\selectfont Central} or {\fontfamily{cmss}\selectfont Decentral}'s policy which must optimise over a decision space which is $|\mathcal{S}||\mathcal{A}|$ (choosing an action from its action space at every state) and $|\mathcal{S}||\mathcal{A}|^N$ respectively. This results in {\fontfamily{cmss}\selectfont Global} rapidly learning its optimal policy (relative to the base MARL learners).

\section{Convergence and Optimality of MANSA} \label{sec:convergence}

 \textcolor{black}{We now show that the MANSA framework, which induces an $N+1$ \textit{non-cooperative Markov game}, converges to the solution that both maximises the {\fontfamily{cmss}\selectfont Global} agent's value function and the agents' joint objective. With this, the {\fontfamily{cmss}\selectfont Global} agent learns to activate CL only at the set of states at which doing so improves the system performance of the MARL agents. The result is achieved through several steps: Theorem 1 shows MANSA learns the optimal solution for {\fontfamily{cmss}\selectfont Global} so that it activates CL only when it is  profitable to do so over the horizon of the problem (recall that each activation incurs a CL cost) while the agents' learn to maximise their objective. Prop. 1 proves the MANSA framework leads to higher system performance as compared to training the underlying base MARL method on its own. Finally, we characterise the optimal CL activation points and show that {\fontfamily{cmss}\selectfont Global} can use a condition on its action-value function that can be evaluated online to determine when to activate CL (for the case when {\fontfamily{cmss}\selectfont Global} uses a Q-learning variant).  All our results are built under Assumptions 1 - 7 (Sec. \ref{sec:notation_appendix} of the Appendix) which are standard in RL and stochastic approximation theory.
 }
 
 \textcolor{black}{The following theorem shows that for a fixed set of joint IL and CL policies, the solution of {\fontfamily{cmss}\selectfont Global}'s problem is a limit point of a sequence of Bellman operations acting on a value function (i). It then shows that the system in which both the IL, CL and {\fontfamily{cmss}\selectfont Global} agents train concurrently within the MANSA framework converges to the solution (ii). }





%

\begin{theorem}\label{theorem:existence}
\textbf{i)} Let $v_G:\mathcal{S}\to\mathbb{R}$ then for any fixed joint policies $\boldsymbol{\pi}^c,\boldsymbol{\pi}\in \boldsymbol{\Pi}$ the solution of {\fontfamily{cmss}\selectfont Global}'s problem is given by 
\begin{align}
\underset{k\to\infty}{\lim}T_G^kv_G(\cdot|\boldsymbol{\pi},\mathfrak{g})=\underset{\hat{\mathfrak{g}}}{\max}\;v_G(\cdot|\boldsymbol{\pi},\hat{\mathfrak{g}}),\end{align}
where $T_G$ is given by $
T_G v_G:=\max\Big\{\mathcal{M}^{\mathfrak{g},\boldsymbol{\pi}^c}Q_G,\underset{\boldsymbol{a}\in\boldsymbol{\mathcal{A}}}{\max}\;\left[ R_G+\gamma\sum_{s'\in\mathcal{S}}P(s';\cdot)v_G(s')\right]\Big\}$ and  
$
\mathcal{M}^{\mathfrak{g},\boldsymbol{\pi}^c}Q_G(s,\boldsymbol{a}|\cdot):=Q_G(s,\boldsymbol{\pi}^c(s)|\cdot)-c$ \textcolor{black}{which measures the expected return for {\fontfamily{cmss}\selectfont Global} following a switch to the CL joint policy minus the intervention cost $c$.}
\newline
\textbf{ii)} Given a system of convergent MARL learners of $\cM$, MANSA ensures the convergence of the system $\cG$ when {\fontfamily{cmss}\selectfont Global} uses a Q-learning variant.   

\end{theorem}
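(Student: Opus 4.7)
The plan is to recognise that, once $\boldsymbol{\pi}^c$ and $\boldsymbol{\pi}$ are frozen, {\fontfamily{cmss}\selectfont Global}'s problem is a single-agent switching-control MDP on $\cS$ with a binary action space, and that the operator $T_G$ is the associated quasi-variational Bellman operator. The first step is to show $T_G$ is a $\gamma$-contraction on $(\mathcal{B}(\cS),\|\cdot\|_\infty)$. For any $v,v'$, I would use the elementary bound $|\max(a,b)-\max(a',b')|\le \max(|a-a'|,|b-b'|)$ applied pointwise. The "continuation" branch $\max_{\boldsymbol{a}}[R_G+\gamma\sum_{s'}P(s';\cdot)v_G(s')]$ is a standard $\gamma$-contraction by the usual Bellman argument. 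The "intervention" branch $\mathcal{M}^{\mathfrak{g},\boldsymbol{\pi}^c}Q_G(s,\boldsymbol{a}|\cdot)=Q_G(s,\boldsymbol{\pi}^c(s)|\cdot)-c$ is $\gamma$-contractive in $v_G$ because the cost $c$ cancels in the difference and the dependence of $Q_G$ on $v_G$ is through one application of $P$ and $\gamma$. Combining the two via the elementary inequality above delivers $\|T_G v - T_G v'\|_\infty \le \gamma\|v-v'\|_\infty$.

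With contraction in hand, Banach's fixed-point theorem yields a unique $v_G^\star$ and $\lim_{k\to\infty}T_G^k v_G = v_G^\star$. It then remains to identify $v_G^\star$ with $\max_{\hat{\mathfrak{g}}} v_G(\cdot|\boldsymbol{\pi},\hat{\mathfrak{g}})$. I would do this by the standard verification/optimality argument for switching controls: for any admissible $\hat{\mathfrak{g}}$, one shows $v_G(\cdot|\boldsymbol{\pi},\hat{\mathfrak{g}})\le v_G^\star$ by noting that either branch inside the $\max$ of $T_G$ dominates the corresponding expected one-step return under $\hat{\mathfrak{g}}$; conversely, the greedy switching rule
\begin{equation*}
\mathfrak{g}^\star(s)=\mathbf{1}\Big\{\mathcal{M}^{\mathfrak{g},\boldsymbol{\pi}^c}Q_G(s,\cdot) \ge \max_{\boldsymbol{a}}[R_G+\gamma\sum_{s'}P(s';s,\boldsymbol{a})v_G^\star(s')]\Big\}
\end{equation*}
attains $v_G^\star$ by construction. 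This simultaneously proves part i) and shows that the activation set is characterised by an inequality that can be checked online from the learned $Q_G$ and $\boldsymbol{\pi}^c$ (thereby laying the groundwork for Prop.~\ref{prop:switching_times}).

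For part ii), the strategy is a two-timescale stochastic approximation argument in the spirit of Borkar and Tsitsiklis. Under Assumptions 1--7, the MARL learners produce sequences $\boldsymbol{\pi}^c_k,\boldsymbol{\pi}_k$ that converge on the slow timescale; on the fast timescale, {\fontfamily{cmss}\selectfont Global} runs a Q-learning variant whose operator is exactly $T_G$ but with its expectations replaced by samples drawn from the replay buffer. I would first establish that for each frozen pair $(\boldsymbol{\pi}^c,\boldsymbol{\pi})$, the associated Q-learning recursion with the intervention operator $\mathcal{M}$ converges a.s.\ to the fixed point of $T_G$. The ingredients are (a) the contraction proved in part i), (b) Robbins--Monro stepsizes, (c) bounded second moments of the TD error, and (d) infinite visitation of all state--action pairs (Assumption on exploration). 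Because $\mathcal{M}$ depends on $\boldsymbol{\pi}^c$ only through a single action evaluation, the usual maximum-norm contraction proof for Q-learning transfers, with $\mathcal{M}$ treated as a deterministic projection into the same $\gamma$-contractive family.

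The main obstacle, and the step that will need the most care, is handling the coupling between the slowly drifting MARL policies and the switching-controller updates: one must show that $v_G^\star$ varies continuously with $(\boldsymbol{\pi}^c,\boldsymbol{\pi})$ so that the limit set of the fast iterate tracks the moving target of the slow one, giving joint convergence. I would establish the requisite continuity via a standard perturbation bound on contraction fixed points (Lipschitzness of $v_G^\star$ in the transition/reward kernels induced by the policies), then invoke a two-timescale result (e.g.\ Borkar's Theorem~2, Ch.~6) to conclude that the joint iterate converges to the fixed point $v_G^\star$ corresponding to the limiting MARL policies, which by part i) coincides with the solution of $\cG$.
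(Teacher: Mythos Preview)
Your part~i) is essentially the paper's argument: both show $T_G$ is a $\gamma$-contraction in sup norm and invoke the fixed-point theorem. Your use of $|\max(a,b)-\max(a',b')|\le\max(|a-a'|,|b-b'|)$ packages this more cleanly than the paper, which instead splits into three cases (both branches continuation; both intervention; one of each) and handles the mixed case by hand. The paper also embeds the Q-learning convergence here via the Jaakkola--Jordan--Singh stochastic-approximation theorem, checking the contraction-of-conditional-expectation and bounded-variance hypotheses; you sketch the same ingredients.

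Your part~ii) takes a genuinely different route. The paper does \emph{not} use a two-timescale argument. It observes that $R_G=R-c\cdot\mathbf{1}(g)$ differs from the team reward only by a term independent of the agents' actions, and (under Assumption~6 on finitely many interventions) exhibits a common potential $B(s\,|\,\boldsymbol{\pi},\mathfrak{g})=\mathbb{E}\big[\sum_t\gamma^t r\big]$ such that unilateral changes in $v$ and in $v_G$ both equal the corresponding change in $B$. This recasts $\cG$ as a Markov team/potential game, so convergence of the joint system reduces to the assumed convergence of the cooperative learners together with the part-i) result for {\fontfamily{cmss}\selectfont Global}. The potential-game reduction is short and exploits the specific cost structure; your Borkar-style two-timescale plan is more general-purpose machinery but must deal with the coupling you flag---the MARL learners' data depends on {\fontfamily{cmss}\selectfont Global}'s switches, so treating them as an autonomously converging slow process requires an explicit off-policy/exploration argument that the paper's potential-game step sidesteps entirely.
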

\textcolor{black}{Therefore, Theorem \ref{theorem:existence} proves the solution to {\fontfamily{cmss}\selectfont Global}'s problem in which {\fontfamily{cmss}\selectfont Global} optimally selects the set of states to activate CL can be obtained by computing the  limit of a (RL) dynamic programming procedure  (when {\fontfamily{cmss}\selectfont Global} uses a Q-learning variant). Secondly, it proves the MANSA system of $N+1$ agents jointly converges to the solution of $\cG$.} It is easy to see that an immediate consequence of the theorem is that MANSA learns to make the minimum number of CL calls required to learn the solution to the agents' joint problem since any additional CL calls would render the {\fontfamily{cmss}\selectfont Global} agent's policy suboptimal.
%

\textcolor{black}{Next we show MANSA improves performance outcomes:}

\begin{proposition}\label{NE_improve_prop}
There exists some finite integer $N$ such that $v(s|\boldsymbol{\tilde{\pi}}_m)\geq v(s|\boldsymbol{\pi}_m),\;\forall s \in\mathcal{S}$ for any $m\geq N$ where $\boldsymbol{\tilde{\pi}}_m$ and $\boldsymbol{\pi}_m$ are the joint policies after the $m^{th}$ learning iteration with and without {\fontfamily{cmss}\selectfont Global}'s influence respectively.
\end{proposition}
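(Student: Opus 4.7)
The plan is to combine Theorem~\ref{theorem:existence} with the structure of {\fontfamily{cmss}\selectfont Global}'s objective in two steps: (i) establish that MANSA's joint policy dominates the IL-only policy \emph{at the limit} of learning, and (ii) transfer this inequality to finite iterations $m \geq N$ using the convergence of MANSA established in Theorem~\ref{theorem:existence}(ii).

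Step (i) rests on the identity $v_G(s\mid\boldsymbol{\pi},\mathfrak{g}) = v(s\mid\boldsymbol{\pi}_{\mathfrak{g}}) - c\,\mathbb{E}\!\left[\sum_{t\geq 0}\gamma^{t}\boldsymbol{1}(g(s_t))\mid s_0=s\right]$, where $\boldsymbol{\pi}_{\mathfrak{g}}$ denotes the joint policy that deploys $\boldsymbol{\pi}^c$ at states where $\mathfrak{g}=1$ and $\boldsymbol{\pi}^i$ elsewhere. Now consider the trivial never-activate rule $\mathfrak{g}^{0}\equiv 0$: under $\mathfrak{g}^{0}$ MANSA collapses to pure IL, so $v_G(s\mid\boldsymbol{\pi},\mathfrak{g}^{0}) = v(s\mid\boldsymbol{\pi}^{i})$. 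By Theorem~\ref{theorem:existence}, {\fontfamily{cmss}\selectfont Global}'s limiting switching policy $\mathfrak{g}^{\star}$ maximises $v_G$ pointwise, hence $v_G(s\mid\cdot,\mathfrak{g}^{\star}) \geq v_G(s\mid\cdot,\mathfrak{g}^{0})$. Rearranging the identity and using the non-negativity of the discounted activation count yields $v(s\mid\tilde{\boldsymbol{\pi}}_{\infty}) \geq v(s\mid\boldsymbol{\pi}^{i}_{\infty})$ for every $s\in\mathcal{S}$.

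Step (ii) invokes Theorem~\ref{theorem:existence}(ii): the sequences $\tilde{\boldsymbol{\pi}}_{m}$ and $\boldsymbol{\pi}_{m}$ converge to their respective stable points, and $v(s\mid\cdot)$ is continuous in the joint policy via the contraction property of the Bellman evaluation operator. Since $\mathcal{S}$ is finite, a single $N$ can be chosen uniformly across states so that the limit inequality survives approximation for all $m \geq N$.

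The main obstacle is the case where the limit inequality is tight at some state $s$, i.e. $\mathfrak{g}^{\star}$ almost surely refrains from activating on trajectories emanating from $s$: na\"ive continuity cannot preserve a weak inequality under approximation error. I would resolve this by observing that at such states MANSA's executed joint policy coincides with the IL baseline, and because both learners are off-policy and share the same replay buffer, {\fontfamily{cmss}\selectfont Decentral} inside MANSA produces asymptotically the same iterates as the standalone IL baseline on the relevant trajectory, so the inequality holds with equality for all sufficiently large $m$.
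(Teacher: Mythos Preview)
Your Step~(i) matches the paper's argument for the limit inequality: the paper argues by contradiction that the stable profile $(\hat{\boldsymbol{\pi}},\mathfrak{g})$ cannot be dominated by the never-intervene profile $\mathfrak{g}\equiv 0$, which is exactly your comparison to $\mathfrak{g}^{0}$ written in equilibrium language rather than via the explicit decomposition of $v_G$.

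Step~(ii) is where you diverge from the paper, and your tight-case resolution has a gap. The paper does not pass the inequality to finite iterations via continuity of $v$ in the policy; instead it invokes that the IL limit is \emph{maximal} over the whole IL sequence, i.e.\ $v(s\mid\boldsymbol{\pi})\geq v(s\mid\boldsymbol{\pi}_n)$ for every $n$, and then combines this with Cauchy convergence of the MANSA iterates to $v(s\mid\tilde{\boldsymbol{\pi}})\geq v(s\mid\boldsymbol{\pi})$. This monotonicity of the IL baseline is precisely what absorbs the tight case in the paper's argument and is absent from yours.

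Your proposed fix for the tight case does not work as stated: even if the \emph{limiting} switching rule $\mathfrak{g}^{\star}$ never activates on trajectories from $s$, the intermediate rules $\mathfrak{g}_m$ typically will ({\fontfamily{cmss}\selectfont Global} is still learning), so the executed joint policies $\tilde{\boldsymbol{\pi}}_m$ and $\boldsymbol{\pi}_m$ differ at finite $m$. The shared-buffer/off-policy claim does not rescue this either, because MANSA's buffer contains transitions generated under {\fontfamily{cmss}\selectfont Central} at activated states, so {\fontfamily{cmss}\selectfont Decentral}-inside-MANSA is trained on a different data distribution from standalone IL and there is no reason its finite-iteration values should coincide with $v(s\mid\boldsymbol{\pi}_m)$.
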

\textcolor{black}{The result shows that using the MANSA framework leads to improvements in the underlying MARL algorithm (as compared to training the MARL algorithm on its own). Note that \textit{a fortiori} Prop. \ref{NE_improve_prop} implies $v(s|\tilde{\boldsymbol{\pi}})\geq v(s|\boldsymbol{\pi}),\;\forall s \in\mathcal{S}$. }

The following result characterises {\fontfamily{cmss}\selectfont Global}'s policy $\mathfrak{g}$:

\begin{proposition}\label{prop:switching_times}
For any $s_t\in\mathcal{S}$ and for all $\boldsymbol{a}_t\in\boldsymbol{\mathcal{A}}$, the policy $\mathfrak{g}$ is given by: 
\begin{align}
\mathfrak{g}(\cdot|s_t)=\boldsymbol{1}_{\mathbb{R}_+}\left(\mathcal{M}^{\mathfrak{g},\boldsymbol{\pi}^c}Q_G(s_t,\boldsymbol{a}_t|\cdot)-\underset{\boldsymbol{a}_t\in\boldsymbol{\cA}}\max\; Q_G(s_t,\boldsymbol{a}_t|\boldsymbol{\pi},\mathfrak{g})\right),
\end{align} 
where $\boldsymbol{1}$ is the indicator function.
\end{proposition}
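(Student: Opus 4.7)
The plan is to derive the expression for $\mathfrak{g}$ directly from the Bellman optimality equation for {\fontfamily{cmss}\selectfont Global} established in Theorem \ref{theorem:existence}. That theorem gives $v_G^\star(s) = T_G v_G^\star(s) = \max\bigl\{\mathcal{M}^{\mathfrak{g},\boldsymbol{\pi}^c}Q_G(s,\boldsymbol{a}|\cdot),\,\max_{\boldsymbol{a}\in\boldsymbol{\mathcal{A}}}[R_G + \gamma\sum_{s'}P(s';\cdot)v_G^\star(s')]\bigr\}$, in which the first argument of the outer $\max$ corresponds to {\fontfamily{cmss}\selectfont Global} activating CL (i.e.\ $g=1$) and the second corresponds to letting the independent learners act ($g=0$). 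An optimal $\mathfrak{g}^\star$ must therefore be the arg$\max$ selector between these two branches, so it suffices to rewrite each branch in the notation of the proposition and read off the selector.

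First I would unpack the intervention branch: by the definition of the intervention operator, $\mathcal{M}^{\mathfrak{g},\boldsymbol{\pi}^c}Q_G(s,\boldsymbol{a}|\cdot) = Q_G(s,\boldsymbol{\pi}^c(s)|\cdot) - c$, i.e.\ the expected discounted return under the centralised policy net of the one-off activation cost $c$. Second, I would rewrite the non-intervention branch as a max over $Q_G$: since $g=0$ incurs no cost, $R_G(s,\boldsymbol{a},0) = R(s,\boldsymbol{a})$, and hence $\max_{\boldsymbol{a}}\bigl[R_G(s,\boldsymbol{a},0)+\gamma\sum_{s'}P(s';s,\boldsymbol{a})v_G^\star(s')\bigr] = \max_{\boldsymbol{a}}Q_G(s,\boldsymbol{a}|\boldsymbol{\pi},\mathfrak{g})$. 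Substituting into the Bellman equation yields $v_G^\star(s) = \max\{\mathcal{M}^{\mathfrak{g},\boldsymbol{\pi}^c}Q_G(s,\boldsymbol{a}|\cdot),\,\max_{\boldsymbol{a}}Q_G(s,\boldsymbol{a}|\boldsymbol{\pi},\mathfrak{g})\}$, which is exactly the binary choice facing {\fontfamily{cmss}\selectfont Global} at state $s$.

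Finally, the greedy selector of this binary $\max$ sets $g=1$ precisely when the intervention branch weakly dominates the continuation branch, which is $\mathfrak{g}(s_t) = H\bigl(\mathcal{M}^{\mathfrak{g},\boldsymbol{\pi}^c}Q_G(s_t,\boldsymbol{a}_t|\cdot) - \max_{\boldsymbol{a}_t}Q_G(s_t,\boldsymbol{a}_t|\boldsymbol{\pi},\mathfrak{g})\bigr)$. The expression is independent of the dummy $\boldsymbol{a}_t$: the first term evaluates $Q_G$ at $\boldsymbol{\pi}^c(s_t)$ (so $\boldsymbol{a}_t$ is absorbed) while the second term is maximised over $\boldsymbol{a}_t$, confirming that $\mathfrak{g}$ is Markov in $s_t$ alone. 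The main obstacle I anticipate is the tie case where the two branches coincide; since this forms a subset on which either choice of $g$ preserves $v_G^\star$, either convention for $H$ at the origin is admissible, so the characterization holds without further assumption.
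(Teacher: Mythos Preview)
Your proposal is correct but proceeds quite differently from the paper. The paper argues by contradiction at the level of \emph{activation times}: it fixes the sequence of switching times $\tau_k$ induced by the rule in the statement, supposes an alternative optimal activation time $\tau_1' > \tau_1$ (or $<\tau_1$), and then unrolls the return over the interval $[\tau_1,\tau_1')$ to exhibit a strict improvement, contradicting optimality of $\tau_1'$. The key inequality in that argument is the observation that a one-step expectation is dominated by the outer $\max$ defining $T_G$, which lets the paper replace the continuation at $\tau_1$ by $T_Gv_G$ and close the contradiction.

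Your route is the standard ``greedy policy from the fixed-point'' argument: you take the Bellman equation $v_G^\star = T_G v_G^\star$ from Theorem~\ref{theorem:existence} as given, identify its two branches with $g=1$ and $g=0$ respectively, rewrite each branch in the $Q_G$ notation of the proposition, and read off $\mathfrak{g}^\star$ as the $\arg\max$ selector. This is shorter and more transparent, and leans entirely on Theorem~\ref{theorem:existence} as a black box; the paper's activation-time argument is closer in spirit to the impulse/switching control literature and makes the temporal structure of the decision (when to intervene, not just whether) explicit. Both are valid; yours is the more economical derivation once the contraction and fixed-point result are in hand.
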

 Prop. \ref{prop:switching_times} provides characterisation of where {\fontfamily{cmss}\selectfont Global} should activate {\fontfamily{cmss}\selectfont Central}. The condition allows for the characterisation to be evaluated online during the learning phase.

\section{MANSA with a CL Call Budget} \label{sec:mansa_budget}

So far we have considered the case in which the aim is to solve the problem $\mathfrak{M}$ while using the minimum number of CL calls.  We now introduce a variant of MANSA, namely MANSA-B that aims to solve the problem while respecting a budgetary constraint of the number of allowed CL calls during training.  We show that by tracking its remaining budget the MANSA-B framework is able to learn a policy that makes optimal usage of its CL budget while respecting the budget constraint almost surely.  

The problem in which {\fontfamily{cmss}\selectfont Global} now faces a fixed budget on the number of CL calls  gives rise to the following constrained problem setting:
\begin{align*}
        \max\limits_{\mathfrak{g}}~& v_G(s|\boldsymbol{\pi},\mathfrak{g})\;\; 
        \text{s. t. } n - \sum_{k<\infty }\sum_{t_k\geq 0}\boldsymbol{1}(\mathfrak{g}(\cdot|s_{t_k}))\geq 0, \forall s\in\mathcal{S},   
\end{align*}
where $n\geq 0$ is a fixed value that represents the budget for the number CL activations and the index $k=1,\ldots$ represents the training episode count.  As in \citep{sootla2022saute,mguni2022timing}, we introduce a new variable $x_t$ that tracks the remaining number of activations: $x_t := n - \sum_{t\geq 0}\boldsymbol{1}(\mathfrak{g}(s_t))$ where the variable
$x_t$ is now treated as the new state variable which is a component in an augmented state space $\cX:=\cS\times\mathbb{N}$. We introduce the associated reward functions $\widetilde{R}:\cX\times\boldsymbol{\cA}\to\cP(D)$ and $\widetilde{R}_G:\cX\times\boldsymbol{\cA}\to\cP(D)$ and the probability transition  function $\widetilde{P}:\cX\times\boldsymbol{\cA}\times\cX\to[0,1]$ whose state space input is now replaced by $\cX$ and the {\fontfamily{cmss}\selectfont Global} value function for the game $\widetilde{\cG}=\langle \cN\times\{G\},\cS,\left((\cA_i)_{i\in\cN},\cA_G\right),\tilde{P},\tilde{R},\tilde{R}_G,\gamma\rangle$. We now prove MANSA-B ensures maximal performance for a given number of CL calls (CL call budget).

\begin{theorem} \label{thm:optimal_policy_budget} Consider 
the budgeted cooperative problem  $\widetilde \cG$, then
For any $\widetilde{v}:\cX\to \mathbb{R}$, the solution of $\widetilde{\cG}$ is given by $
\underset{k\to\infty}{\lim}\tilde{T}_G^k\widetilde{v}^{\boldsymbol{\pi}}=\underset{\mathfrak{g}}{\max}\;\widetilde v^{\boldsymbol{\pi},\mathfrak{g}}$, where {\fontfamily{cmss}\selectfont Global}'s optimal policy takes the Markovian form $\widetilde{\mathfrak{g}}(\cdot | \boldsymbol{x})$ for any $\boldsymbol{x}\equiv(x,s)\in\cX$. 
\end{theorem}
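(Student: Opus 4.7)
The plan is to mirror the argument of Theorem \ref{theorem:existence} in the augmented state space $\cX = \cS \times \mathbb{N}$, after verifying that state augmentation correctly encodes the budget constraint as a Markovian game. First, I would confirm that the augmented process $(s_t, x_t)$ is Markov under any {\fontfamily{cmss}\selectfont Global} policy measurable with respect to $\cX$: the environment transition $s_{t+1}\sim P(\cdot\,|\,s_t,\boldsymbol{a}_t)$ is unchanged, while $x_{t+1} = x_t - \boldsymbol{1}(\mathfrak{g}(\cdot|s_t,x_t))$ is a deterministic function of the current augmented state and {\fontfamily{cmss}\selectfont Global}'s action. Hence the augmented kernel $\widetilde{P}$ depends only on the current $(x,s)$ and on $(\boldsymbol{a},g)$, so $\widetilde{\cG}$ is a well-defined Markov game on $\cX$.

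Second, I would absorb the almost-sure constraint $x_t\geq 0$ into the admissible action set of {\fontfamily{cmss}\selectfont Global}: whenever $x_t = 0$, restrict $\cA_G$ to $\{0\}$ (equivalently, assign an arbitrarily large penalty in $\widetilde{R}_G$ for choosing $g=1$ with $x_t=0$). This makes the constraint automatically satisfied along every trajectory and transforms the constrained optimisation into an unconstrained problem on $\cX$ with reward $\widetilde{R}_G$ and kernel $\widetilde{P}$. The intervention cost $c$ continues to appear in $\widetilde{R}_G$ exactly as in the unconstrained setting.

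Third, I would apply Theorem \ref{theorem:existence}(i) to this reformulated game. Define
\begin{align*}
\tilde{T}_G \widetilde{v} := \max\Big\{ \widetilde{\cM}^{\mathfrak{g},\boldsymbol{\pi}^c}\widetilde{Q}_G,\; \max_{\boldsymbol{a}\in\boldsymbol{\cA}}\big[\widetilde{R}_G + \gamma \textstyle\sum_{\boldsymbol{x}'\in\cX}\widetilde{P}(\boldsymbol{x}';\cdot)\widetilde{v}(\boldsymbol{x}')\big]\Big\},
\end{align*}
with $\widetilde{\cM}^{\mathfrak{g},\boldsymbol{\pi}^c}\widetilde{Q}_G(\boldsymbol{x},\boldsymbol{a}|\cdot) := \widetilde{Q}_G(\boldsymbol{x},\boldsymbol{\pi}^c(\boldsymbol{x})|\cdot) - c$ restricted to admissible $\mathfrak{g}$. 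The same contraction argument that establishes $T_G$ is a $\gamma$-contraction in sup-norm transfers verbatim: the Bellman backup is a $\gamma$-contraction, and the max with the intervention operator preserves non-expansiveness since $\widetilde{\cM}^{\mathfrak{g},\boldsymbol{\pi}^c}$ does not involve $\widetilde{v}$. Banach's fixed point theorem then yields a unique fixed point $\widetilde{v}^\star$, and $\lim_{k\to\infty}\tilde{T}_G^k\widetilde{v}^{\boldsymbol{\pi}} = \widetilde{v}^\star = \max_{\mathfrak{g}}\widetilde{v}^{\boldsymbol{\pi},\mathfrak{g}}$. Because $\widetilde{v}^\star$ is a function on $\cX$, the greedy selection at each $\boldsymbol{x}$ yields the Markov policy $\widetilde{\mathfrak{g}}(\cdot|\boldsymbol{x})$.

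The main obstacle I anticipate is justifying that restricting to policies measurable with respect to $\cX$ loses no optimality compared to arbitrary history-dependent policies on $\cS$. The key observation is that the remaining budget $x_t$ is a sufficient statistic of the past: any history $\tau_{0:t}$ enters future dynamics and payoffs only through $(s_t,x_t)$, because the cost structure and constraint depend on the history exclusively through the count of past activations. I would formalise this via a standard sufficient-statistic argument (as in constrained-MDP reductions using state augmentation, cf.\ \citep{sootla2022saute,mguni2022timing}), so that Markovianity on $\cX$ is without loss of optimality and the limit point of $\tilde{T}_G^k$ coincides with the value of the original constrained problem.
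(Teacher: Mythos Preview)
Your proposal is essentially the paper's own argument spelled out in more detail: the paper's proof simply invokes Theorem~\ref{theorem:existence} to obtain the dynamic programming principle and then applies Theorem~2 of \cite{sootla2022saute} for the state-augmentation reduction, which is exactly your steps~1--4 (including the sufficient-statistic justification you flag as the main obstacle). One small correction: the intervention operator $\widetilde{\cM}^{\mathfrak{g},\boldsymbol{\pi}^c}$ \emph{does} depend on $\widetilde{v}$ through $\widetilde{Q}_G$, so contraction in that branch is not immediate from non-expansiveness of a constant map; rather it follows from the case analysis of Lemma~\ref{lemma:bellman_contraction}, which---as you correctly note---transfers verbatim to the augmented space $\cX$.
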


Theorem \ref{thm:optimal_policy_budget}  shows MANSA converges under standard assumptions to the solution of {\fontfamily{cmss}\selectfont Global}'s problem (and the dec-POMDP) when {\fontfamily{cmss}\selectfont Global} faces a CL call budget constraint.

\section{Experiments}\label{Section:Experiments}

%
We performed a series of experiments to test 
whether MANSA \textbf{1.} Enables MARL to solve multi-agent problems while reducing the number of CL calls \textbf{2.} Improves the performance of IL and reduces its failure modes \textbf{3.} Learns to optimise its use of CL under a CL call budget. We used the code accompanying the MARL benchmark study of \citet{papoudakis2021benchmarking} for the baselines.  For these experiments, we tested MANSA in Level-based Foraging (LBF) \citep{papoudakis2021benchmarking} and StarCraft Multi-Agent Challenge (SMAC) \citep{samvelyan2019starcraft}. These environments have specific features which in some cases are advantageous to CL, and in some cases to IL as well as a broad range of attributes as we describe below. We implemented MANSA on top QMIX \citep{rashid2018qmix} (as the CL) and IQL \citep{tan1993multi} (as the IL). We used SAC \citep{haarnoja2018soft} to learn the switching control policy itself. In all plots, dark lines represent averages over $3$ seeds and the shaded regions represent 95\% confidence intervals.

\textbf{Level-based Foraging (LBF).} In LBF an agent controls units of particular levels and there are apples of particular levels scattered around the map. Each agent's goal is to collect as much food as possible. Crucially, the agents can only collect a food if the cumulative level of the agents adjacent to the food that are executing the `collect' action is greater than or equal to the level of the food. As the agent and the food levels are randomly assigned, some food may be collectable by a single agent, while some food may require the coordination of all agents. 
LBF has the option of enforcing coordination (map names suffixed with "coop") by making the food level such that at least two agents are required to coordinate to collect any food. LBF tasks are  designed to sometimes require coordination to solve the problem, while other times needing little interaction between agents. 

\textbf{StarCraft Multi-Agent Challenge (SMAC).} The goal in SMAC  is for a team of units under an agent's control to defeat a team of units under an opponent's control.
Different maps in SMAC vary along several dimensions including heterogeneity of units, number of units, and terrain. These differences result in agents having to adopt varying degrees of coordination to solve different maps. For example, in \emph{so\_many\_baneling}, \textit{zealots} under the agent's control face a larger army of enemy \textit{banelings}. As banelings can cause significant `splash' damage, it is critical for units under the agent's control to cooperate and space out so as to minimise damage. Conversely, in \emph{corridor}, such cooperation may not be needed. Here, a small army of zealots under the agent's control face off against a large army of zerglings. The optimal strategy is for the zealots to wall-off a choke point and avoid getting surrounded. While it may seem that significant coordination is required to solve this map (i.e., all zealots converge to the choke point), in fact, it is not necessary. Due to location of the choke-point, the optimal actions for a zealot acting independently mirror those of a coordinated group -- IL is as good as CL in this case. Thus, the design of SMAC sometimes befits IL algorithms and sometimes CL algorithms. 

\subsection{Can MANSA learn to use CL less frequently in settings where CL is not required?}

For this experiment, we first studied MANSA in two normal-form (matrix) games: a \emph{coordination} game (specifically the Assurance Game) and the Non-Monotonic Team Game presented in \citet{rashid2018qmix}. We modified the reward function of the Assurance Game with a parameter $\alpha \in[0,1]$, as shown in Table \ref{table:normal_form_game_rewards}.
For $\alpha =0$, the reward function degenerates to the reward function of the standard Assurance game, while for $\alpha =1$, each agent gets a reward of $10$ irrespective of the other agent's action, that is, the game is completely decoupled. Similarly, in the non-monotonic team game, $\alpha$ parameterises the degree to which the reward structure of the game is non-monotonic. In this modified game, $\alpha=0$ represents a normal form game with a monotonic reward while $\alpha=1$ represents a non-monotonic reward function.
%
%
\begin{center}
\begin{table}[h!]
\begin{center}
\begin{tabular}{c|c|c} 
& Up & Down \\
\hline Up & $5(1+\alpha), 5(1+\alpha)$ & $10 \alpha, 10 \alpha$ \\
Down & $10 \alpha, 10 \alpha$ & \textcolor{black}{$10,10$}
\end{tabular}
\end{center}
\begin{center}
\begin{tabular}{c|c|c} 
& A & B \\
\hline A & $2 \alpha, 2 \alpha$ & 1,1 \\
\hline B & 1,1 & 8,8
\end{tabular}
\end{center}
\caption{Modified reward functions of Assurance Game (\emph{top}), and non-monotonic team game (\emph{bottom}).}
\label{table:normal_form_game_rewards}
\end{table}
\end{center}
%
%
%
\begin{figure}[t]
    \centering
    \begin{subfigure}[b]{.235\textwidth}
        \centering
        \includegraphics[width=\textwidth]{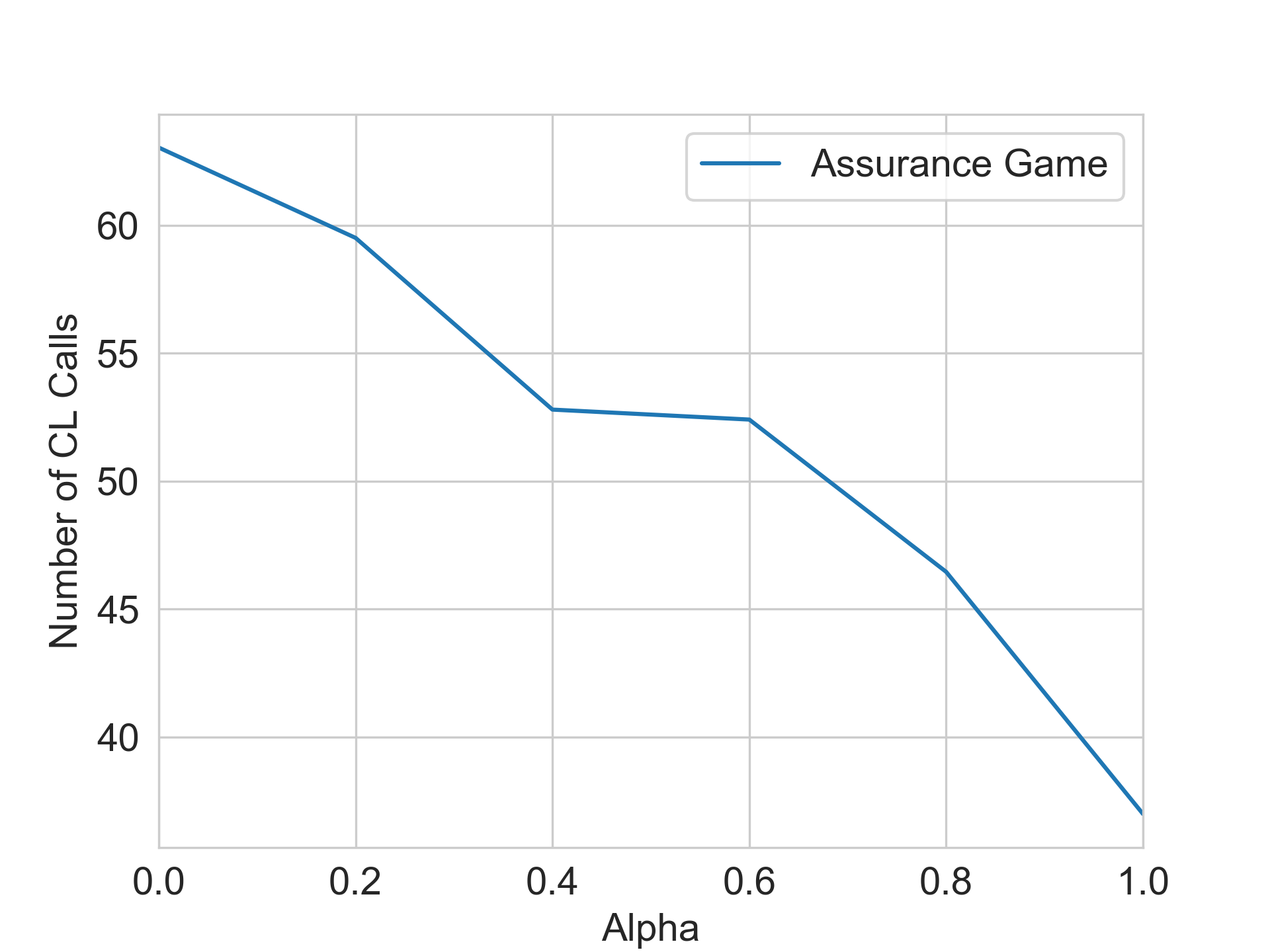}
    \end{subfigure}
    \begin{subfigure}[b]{.235\textwidth}
        \centering
        \includegraphics[width=\textwidth]{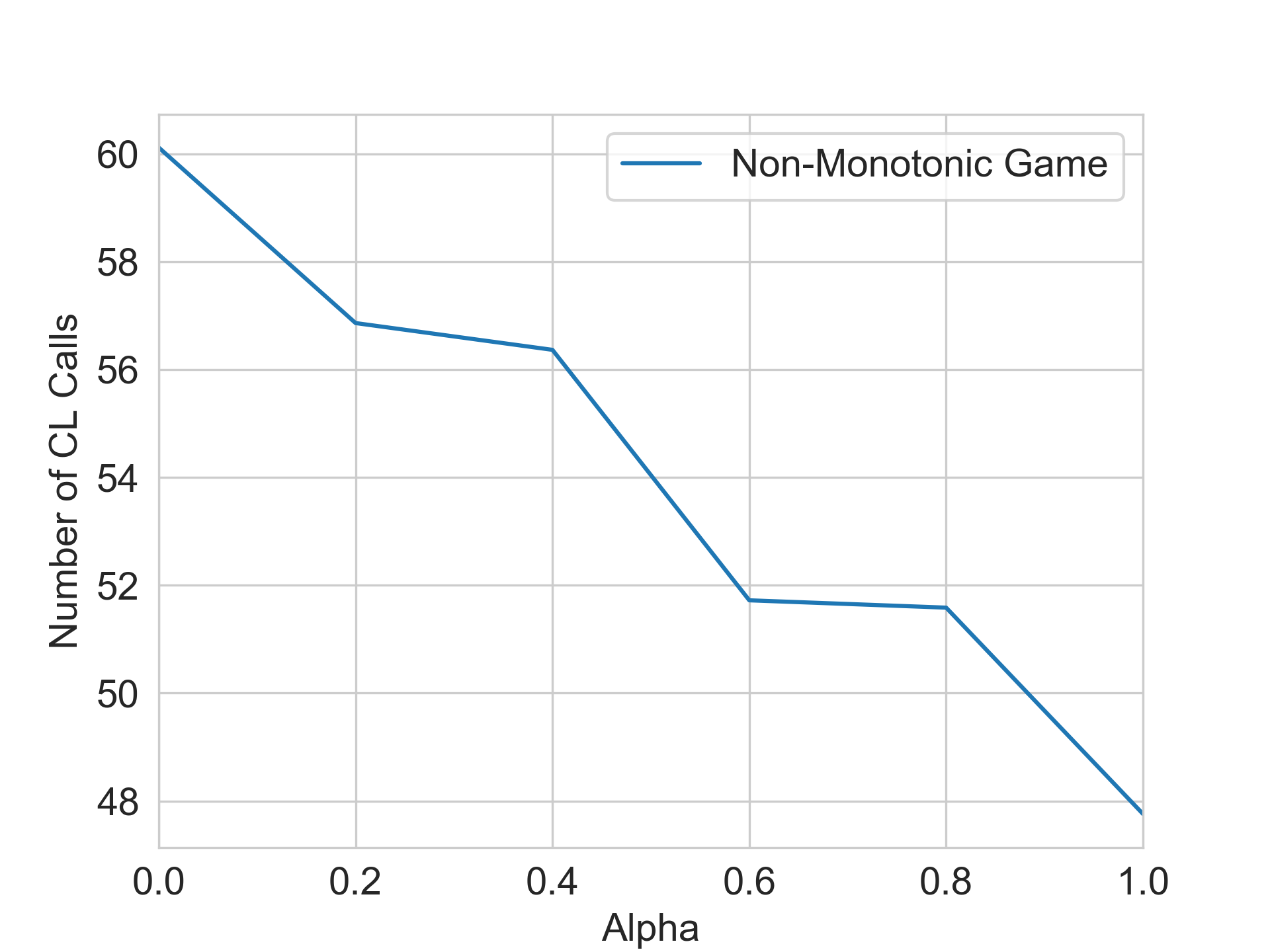}
    \end{subfigure}
        \begin{subfigure}[b]{.235\textwidth}
       \hspace{-2cm} \includegraphics[width=8.5cm, height = 3.5cm, left]{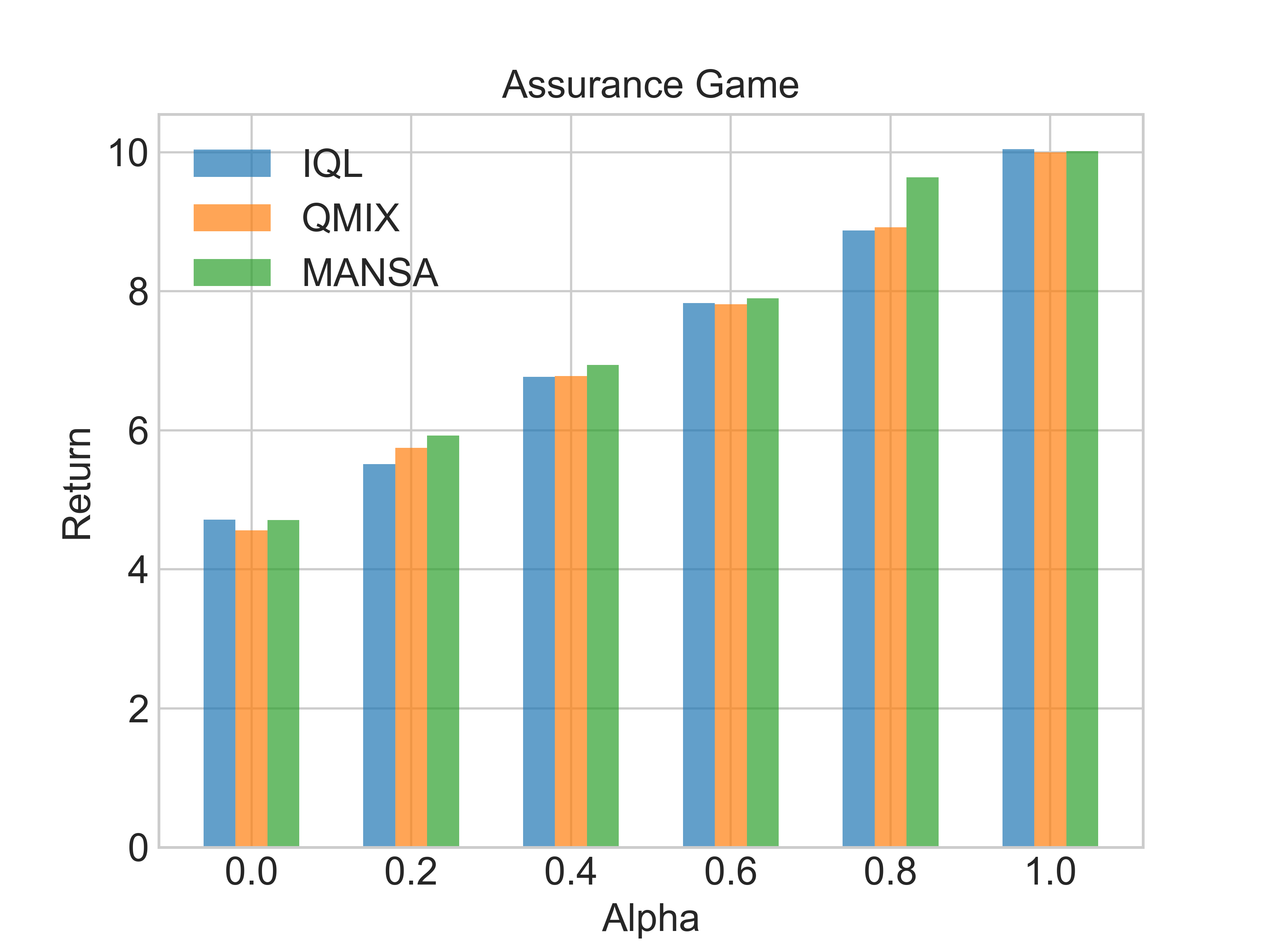}
    \end{subfigure}
\caption{\textbf{Normal form games.}  Total number of CL calls by MANSA in the Assurance Game (top left) and the non-monotonic team game (top right) \textcolor{black}{and end-of-training returns for MANSA, QMIX and IQL for various values of $\alpha$ (bottom)}. As the rewards in Assurance Game become more decoupled ($\alpha \to 1$) so the requirement for coordination becomes weaker, MANSA reduces the number of CL calls it makes during training.  In the Non-Monotonic game, as the extent of the monotonicity in the reward decreases ($\alpha \to $1), MANSA similarly reduces the number of CL (QMIX) calls. Note, in both cases MANSA makes a small number of calls to CL as {\fontfamily{cmss}\selectfont Global} initially explores both CL and IL. \textcolor{black}{Despite MANSA reducing its dependence on CL as $\alpha\to 1$, it achieves returns that are better or the same as the baselines for all $\alpha$.}}
\label{figure:normal_form_games}
\end{figure}
Fig. \ref{figure:normal_form_games} shows plots of $\alpha$ versus the number of calls to CL. In both games, higher values of $\alpha$ ought to result in less usage of CL, and as expected, as $\alpha$ increases, calls to the CL decrease and MANSA shows greater dependence on IL for training. This suggests MANSA is capable of selectively using CL with a high degree of granularity. It also provides strong evidence MANSA exercises thriftiness in its usage of CL in environments with no strong coordination aspect.


We next investigated MANSA's ability modulate its use of CL in LBF Foraging-8x8-2p-2f-coop-v1, a dynamic setting with many states and agents. To do this, we isolated three configurations of the LBF task that have strongly, medium and weakly coupled reward functions i.e. for the agents to solve the task, each case requires a specific level of coordination by the agents. The weakest case is a setting in which each food item can be collected by just one agent; in the medium level,  collecting each food item requires two agents to coordinate while in the strongest level, collecting each food item needs all agents to coordinate. For each case we measured the total number of CL calls made by MANSA over the course of training. As shown in Fig. \ref{fig:lbf_level_calls}, as the level of required coordination increases (from weak to strong), MANSA increases the number of CL calls to promote learning policies capable of coordination among the agents during their training phase.

Lastly, to confirm the usefulness of MANSA's switching control component, Section \ref{sec:ablation_switching} of the Appendix gives an ablation study in which we replaced the {\fontfamily{cmss}\selectfont Global} agent with a random switching controller.  compared to simply activating Central at random (line labelled ”random policy”). As is shown, removing MANSA's switching control aspect  leads to significant degredation in overall performance as compared with MANSA with its adaptive RL agent {\fontfamily{cmss}\selectfont Global}.

\begin{figure}[t]
        \centering
\includegraphics[width=0.3\textwidth]{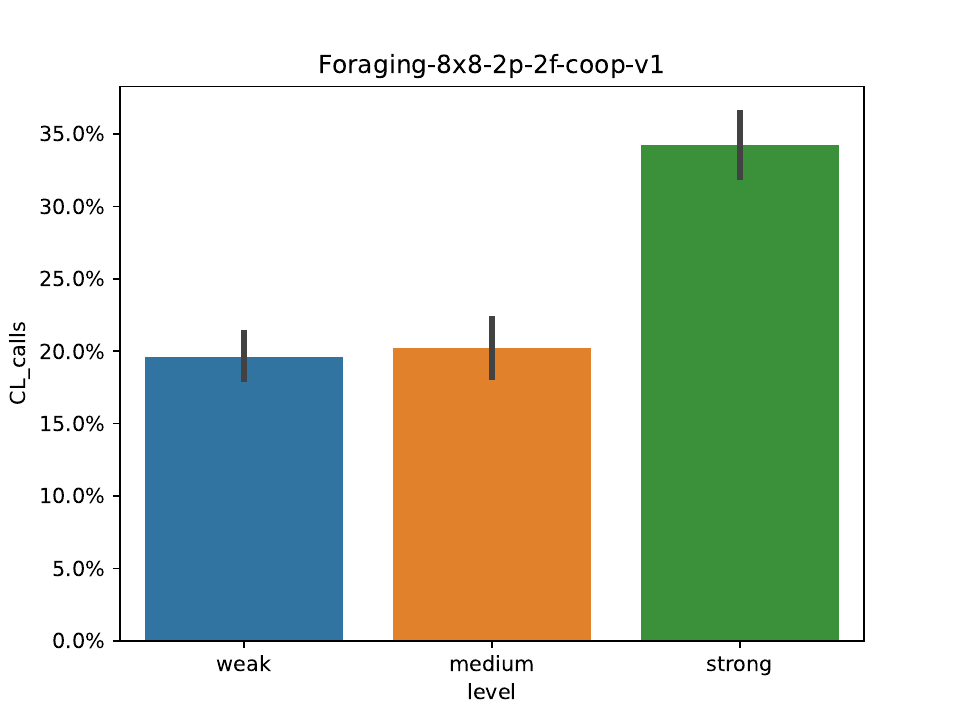}
       \caption{\textcolor{black}{Number of CL calls within LBF maps with varying degrees of coupling within the reward functions.}}\vspace{-4 mm}
    \label{fig:lbf_level_calls}
\end{figure}
\begin{figure}[b]\vspace{-0.6cm}
        \centering
\includegraphics[width=0.3\textwidth]{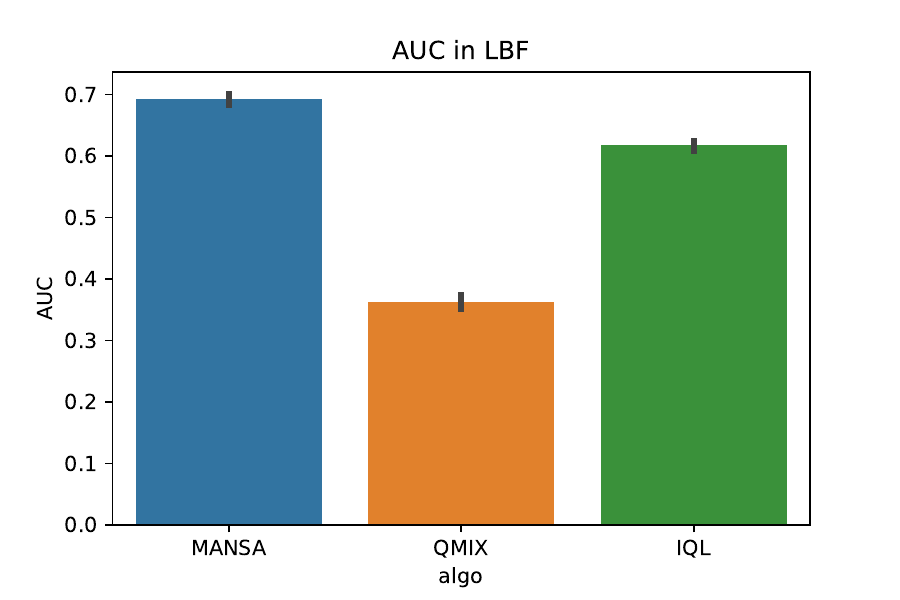}\vspace{-0.4 cm}
        \caption{\textcolor{black}{Aggregate (normalised) area under the curve (AUC) results across 10 LBF tasks. MANSA has superior aggregate performance, markedly outperforming the CL method (QMIX) and either matching or outperforming the IL method (IQL) on all tasks.}} \vspace{-2 mm}
    \label{fig:lbf}
\end{figure}
\begin{figure}[h]
    \begin{subfigure}{.235\textwidth}
        \includegraphics[width=\textwidth]{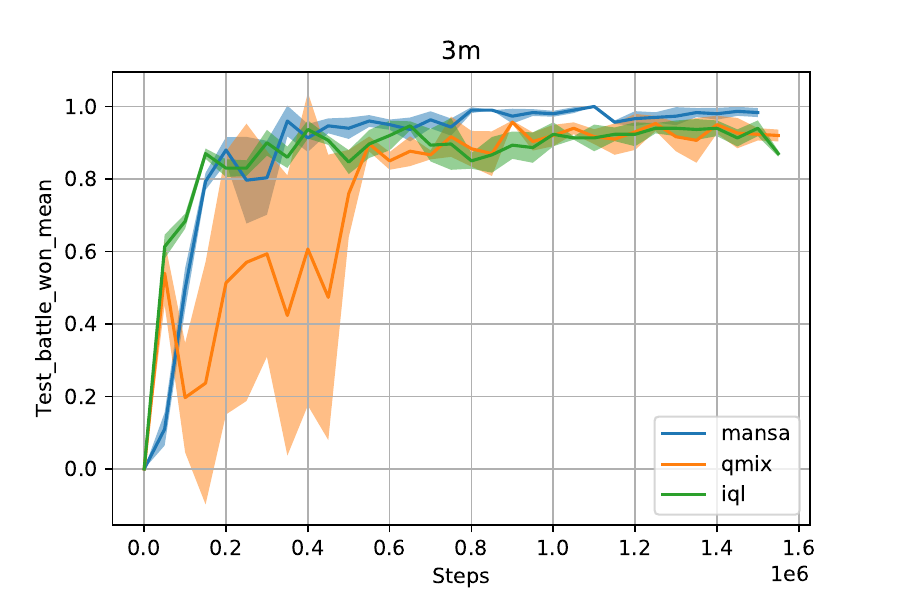}
        \includegraphics[width=\textwidth]{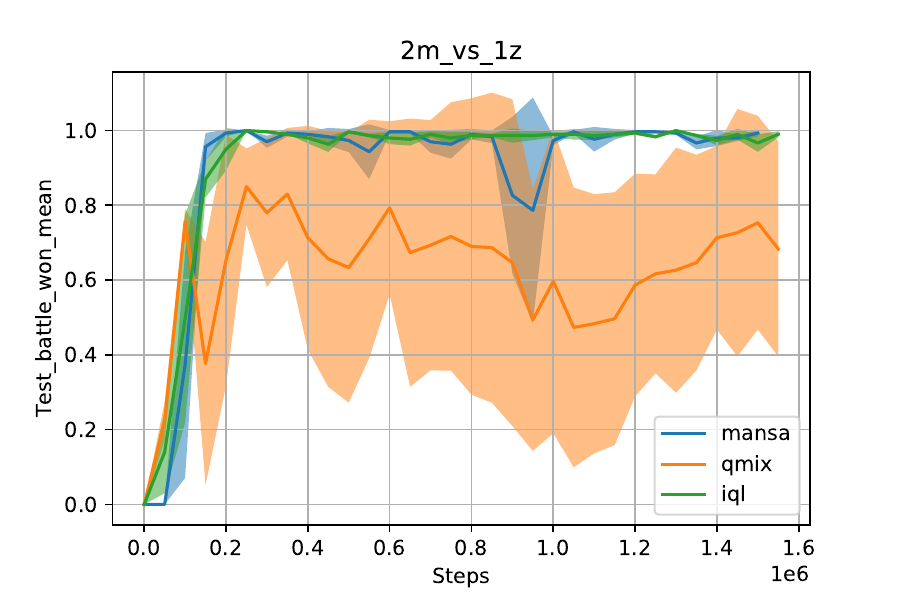}
        \end{subfigure}
    \begin{subfigure}{.235\textwidth}
        \includegraphics[width=\textwidth]{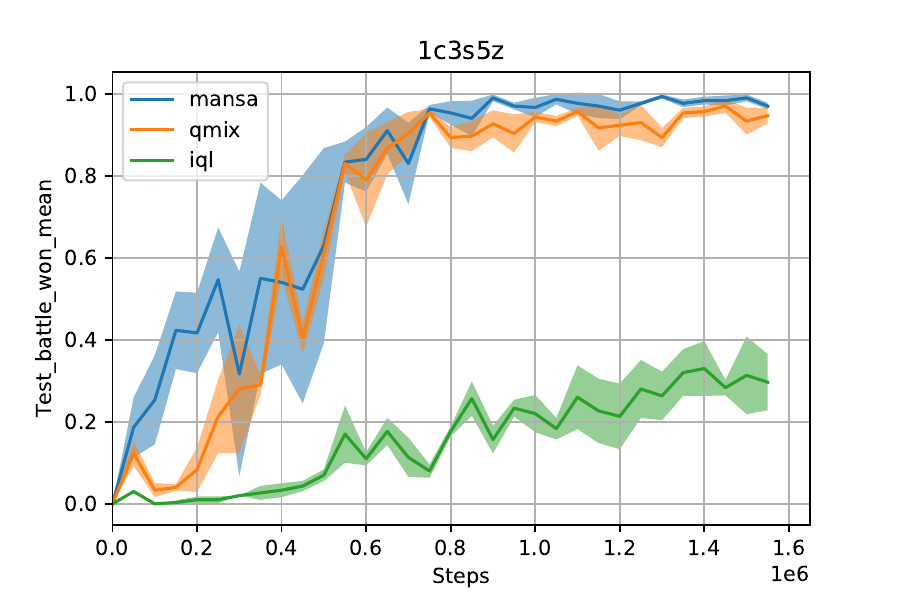}
        \includegraphics[width=\textwidth]{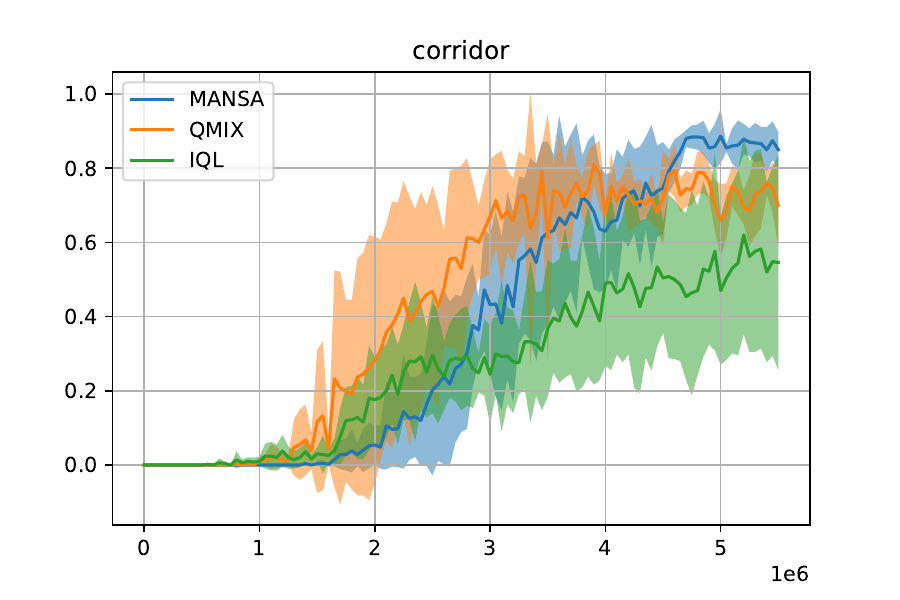}
    \end{subfigure}
\caption{\textcolor{black}{Learning curves in some individual SMAC maps. While QMIX fails to learn effective policies on all maps, and IQL on two maps, MANSA achieves high performance across the tasks.}}
\end{figure}

\subsection{Can MANSA improve the overall performance of IL and reduce failure modes?} 

We first examined this claim in LBF; Fig. \ref{fig:lbf} shows aggregated (normalised) area under the curve AUC performance curves of the tested algorithms (for individual plots see Sec. \ref{sec:performances_appendix} in the Appendix). MANSA outperforms both IQL and QMIX by a notable margin in half the maps (4 of 8). Moreover, even in maps where QMIX performs poorly, e.g., Foraging-10x10-3p-5f-v2, Foraging-10x10-5p-3f-v2, MANSA is able to use QMIX to significantly outperform IQL (compare performance of vanilla QMIX versus MANSA in plots in Sec. \ref{sec:performances_appendix}). This is due to  MANSA  correctly identifying states that benefit from CL (and those that do not) and there activating CL to achieve significant performance gains.
%
The empirical results serve to validate MANSA's preservation of MARL convergence properties and its ability to leverage both CL and IL to deliver higher performance. 
In Sec. \ref{sec:sc_ablation} of the Appendix, we show the results of an ablation study of the switching cost parameter. So long as the value of this hyper-parameter is roughly in the correct order of magnitude, MANSA performs well and thus is easy to tune.

\begin{figure}[t]
        \centering
        \includegraphics[width=0.35\textwidth] {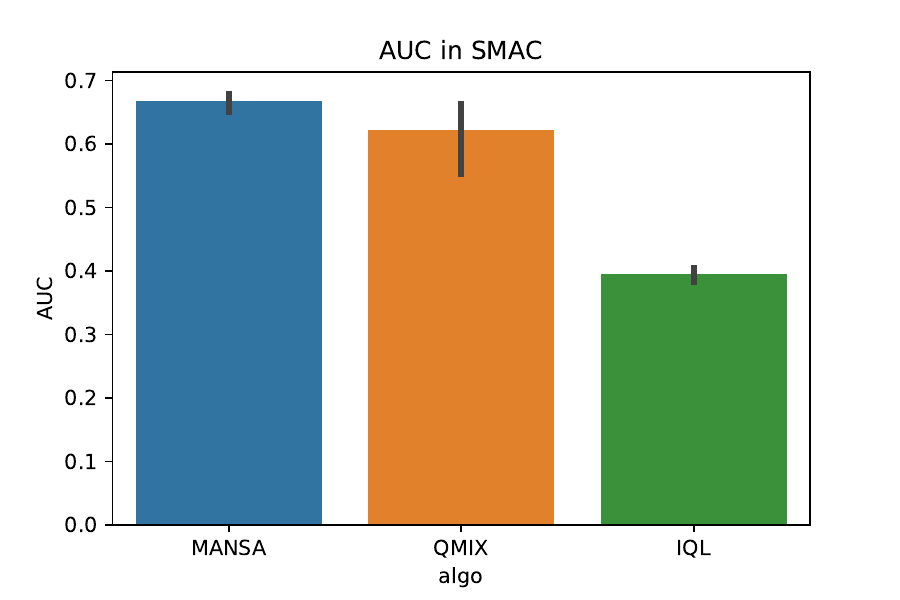} 
    \caption{\textcolor{black}{Aggregate normalised AUC  results across 9 SMAC maps. MANSA has superior performance and is not susceptible to learning failures unlike the base CL (QMIX) and IL methods (IQL).}}
\label{fig:smac} 
\end{figure}
%
%
\textcolor{black}{We next examined the claim in SMAC. Fig. \ref{fig:smac} shows the aggregated normalised AUC results across  a range of SMAC maps (for full set of plots of individual maps see Sec. \ref{sec:performances_appendix} in the Appendix).} MANSA's aggregate AUC performance is superior to both baselines. It also outperforms all baselines in all maps except \emph{3s5z\_vs\_3s6z}. MANSA's flexibile choice of MARL method allows it to avoid the failures of IQL in maps such as \emph{1c3s5z}, \emph{3s5z}, \emph{2s3z}, and \emph{MMM2} without heavily relying on CL (MANSA's CL call rates are shown in Table \ref{table:calls_to_cl_SMAC} of the Appendix). Similarly, MANSA avoids the failures of QMIX in \emph{2m\_vs\_1z} and \emph{corridor}. 

To validate the claim MANSA can reduce failure rates, we plotted the failure rates  of each algorithm (i.e. on how many tasks each algorithm failed by the total number of tasks) in Fig. \ref{figure:failure_rates}. We define a failure as achieving an end-of-training
win rate of less than $0.8$ on SMAC. IL and CL failed in 44\% (4 of 9) and 22\% (2 of 9), respectively, of the SMAC maps. 
%
\begin{figure}[h!]
        \centering
        \includegraphics[width=0.3\textwidth]{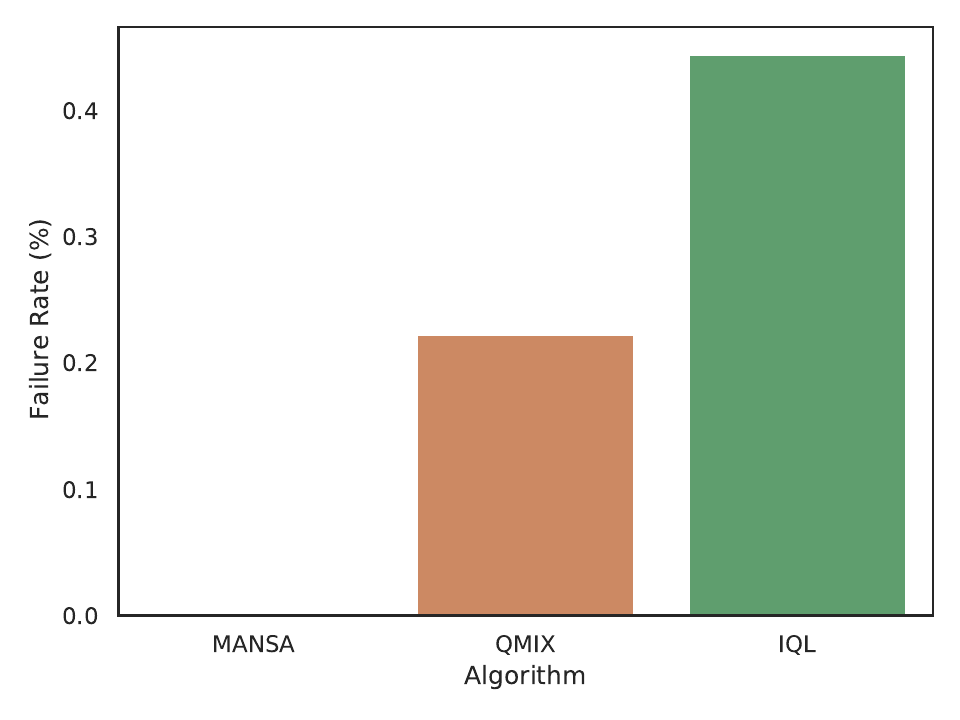}\vspace{-5 mm}
        \caption{Failure rates (number of failed tasks/total number of tasks) of each algorithm across all SMAC maps.
        }
        \label{figure:failure_rates}
\end{figure}

\subsection{MANSA is a Plug \& Play IL Enhancement Framework.}\label{sec:plugnplay}

To validate our claim that MANSA easily adopts MARL algorithms, we ran experiments with a stronger CL baseline to test if MANSA is still beneficial when the CL baseline is stronger than the IL baseline. To test this, we replaced QMIX in MANSA with a stronger CL component, W-QMIX. Fig. \ref{figure:further_lbf_exps} shows learning curves where, unlike in Figure \ref{fig:lbf}, IQL is outperformed by a CL algorithm, W-QMIX. For MANSA to achieve reasonable performance here, the switching controller ought to opt to use CL more frequently than IL even if this incurs a switching cost. Indeed, we see that in all maps, MANSA significantly outperforms the baselines, and from Table \ref{table:calls_to_cl_2} (see Sec. \ref{sec:appendix_plugplay} in the Appendix) we see that MANSA uses CL much more in these maps than the maps indicated in Table \ref{table:calls_to_cl_LBF}. Moreover, as with previous experiments, MANSA seems to have correctly identified states that benefit from CL (and those that do not) and have only used CL to achieve significant performance gains.
%
%
%
%

\begin{figure}[h]
    \centering
    \begin{subfigure}{.238\textwidth}
        \centering
        \includegraphics[width=\textwidth]{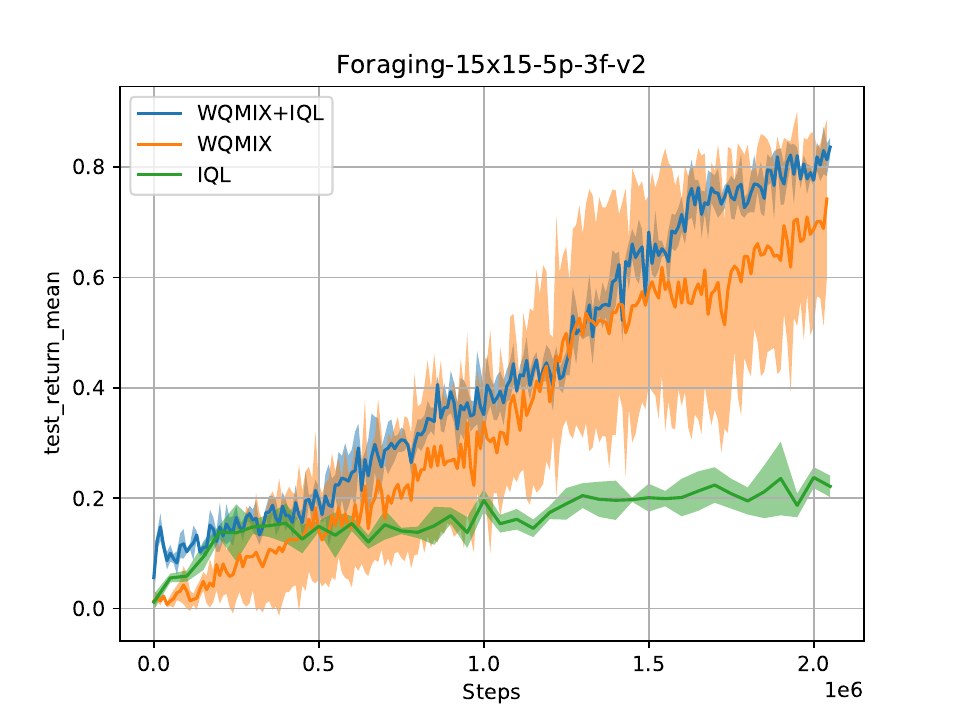}
        \caption{}
    \end{subfigure}
    \begin{subfigure}{.238\textwidth}
        \centering
        \includegraphics[width=\textwidth]{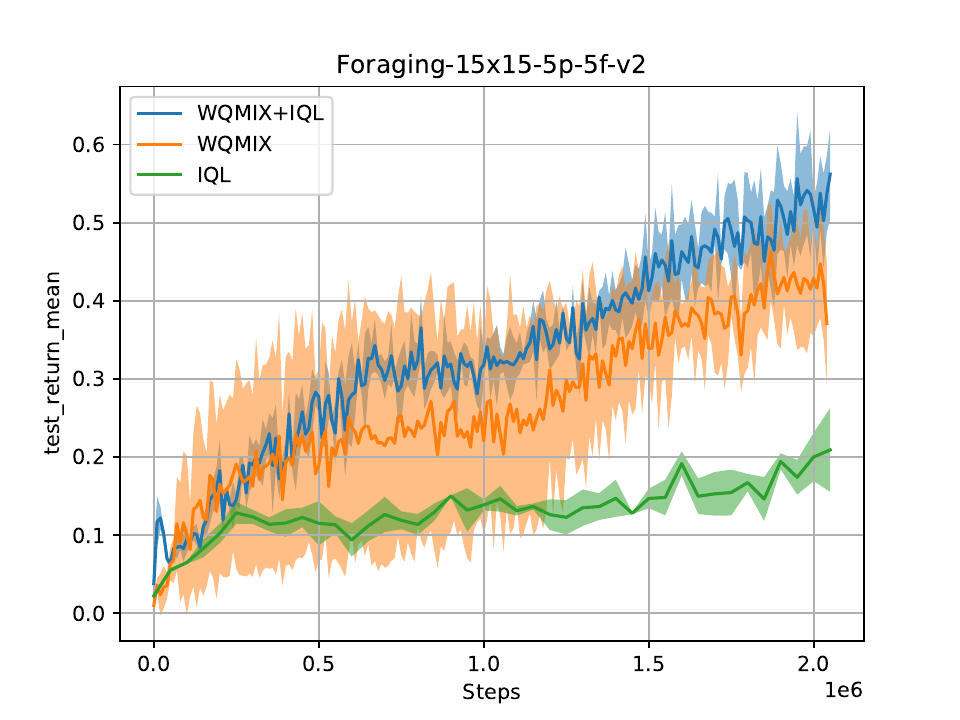}
        \caption{}
    \end{subfigure}
    \begin{subfigure}{.238\textwidth}
        \centering
        \includegraphics[width=\textwidth]{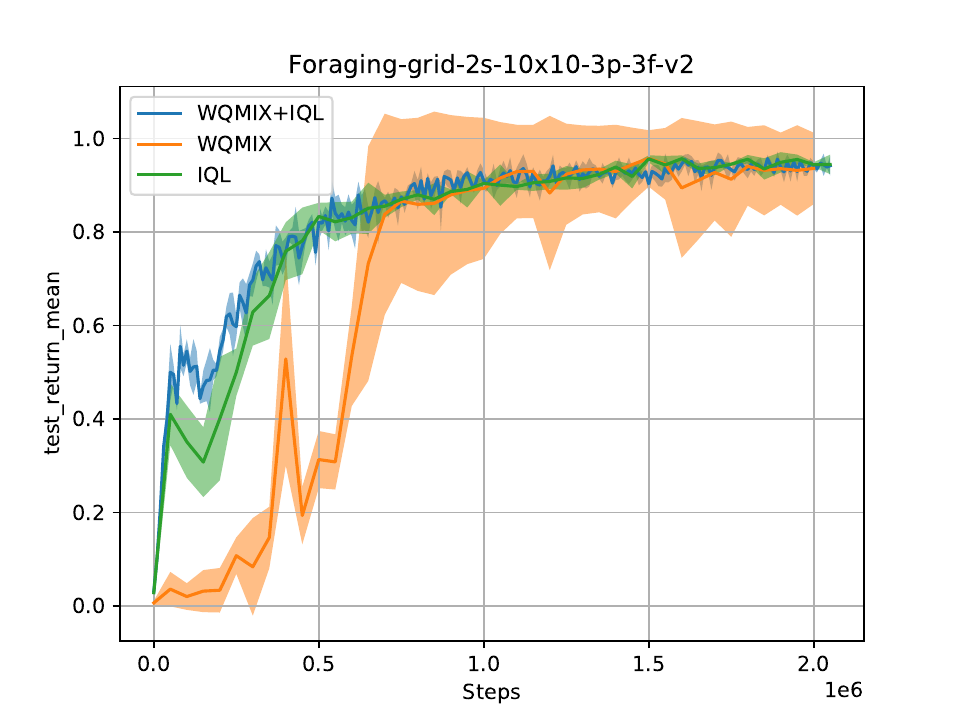}
        \caption{}
    \end{subfigure}
        \begin{subfigure}{.238\textwidth}
        \centering
        \includegraphics[width=\textwidth]{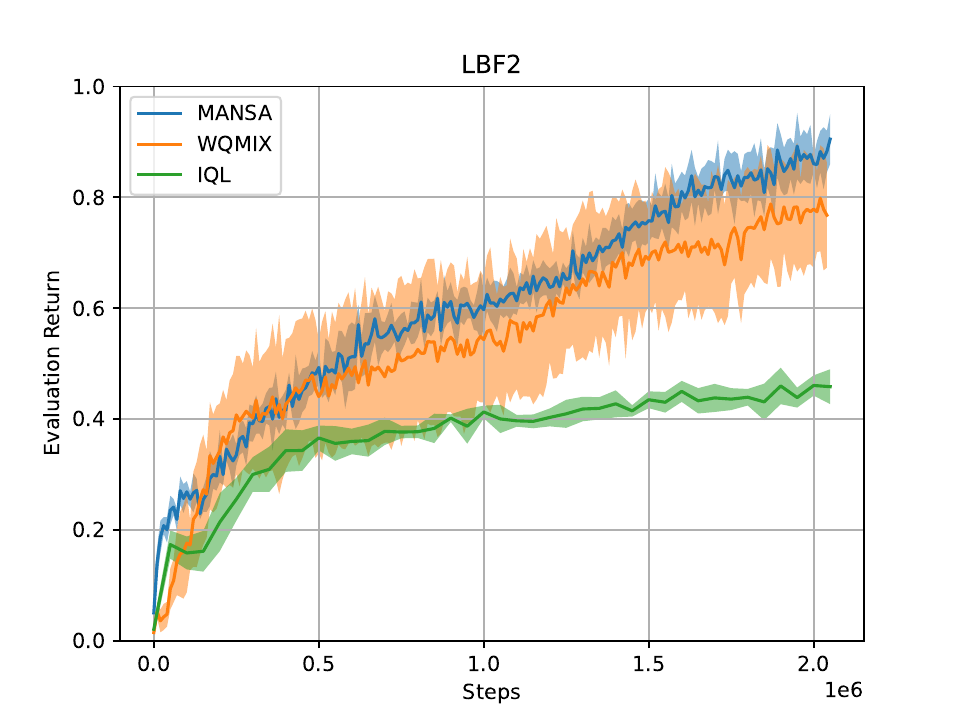}
        \caption{}
    \end{subfigure}
\caption{Learning curves on LBF with a stronger CL algorithm, W-QMIX. W-QMIX outperforms IQL on the selected maps, but MANSA is still able to leverage the advantages of both IL and CL to outperform the baselines.}
\label{figure:further_lbf_exps}
\end{figure}
%
%
%

\subsection{Can MANSA optimise use of CL under a budget?}

To validate our claim that MANSA-B optimises CL calls under a fixed budget, we ran MANSA-B in 4 SMAC maps with a varying CL call budget. Table \ref{table:budget_mansa} in the Appendix shows the Win rates comparing MANSA-B with various CL call budgets against MANSA (original). In 2 out of the 4 maps, MANSA achieves win rates of above 98\% despite a cap of 10\% on the original CL calls. As the budget increases to 50\%, MANSA achieves above 65\% win rates on all maps.

\subsection{Importance of Switching Controls}\label{sec:ablation_switching}
A key component of MANSA is the switching control mechanism. This enables the {\fontfamily{cmss}\selectfont Global} agent to select the states in which activating {\fontfamily{cmss}\selectfont Central} leads to performance improvements. To evaluate the impact of the switching control component, we compared the performance of MANSA with a version of MANSA which has the switching control replaced with an equal-chances Bernoulli Random Variable (i.e., at any given state, the {\fontfamily{cmss}\selectfont Global} decides whether or not to activate {\fontfamily{cmss}\selectfont Central} with equal probability) (note that always activating {\fontfamily{cmss}\selectfont Central} degenerates to QMIX and similarly, never activating {\fontfamily{cmss}\selectfont Central} degenerates to IQL). Figure \ref{fig:ablate_random} shows the comparison of the performances of the variants.  We examined the performance of the variants of MANSA in LBF Foraging-15x15-5p-3f-v2. As can be seen in the plot, incorporating the ability to learn an optimal switching control in MANSA (labelled "MANSA (OW-QMIX+IQL") leads to much better overall performance compared to simply activating {\fontfamily{cmss}\selectfont Central} at random (line labelled "random\_policy").
\begin{figure}[h!]\vspace{-0.7 cm}
      \centering \includegraphics[width=4cm, height=3.7cm]{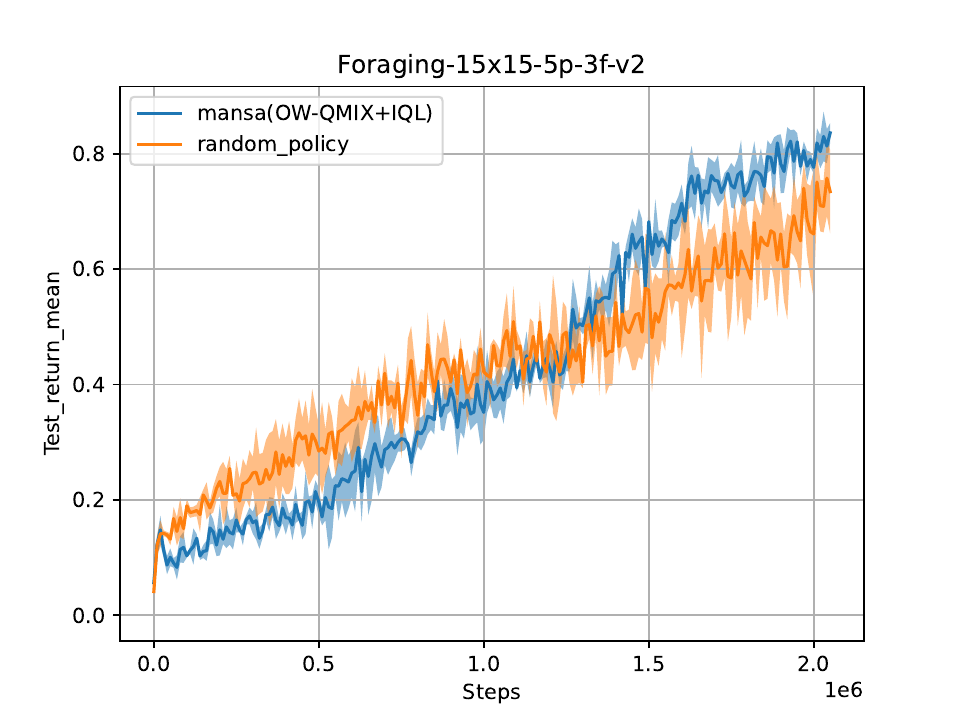}
\includegraphics[width=4cm, height=3.7cm]{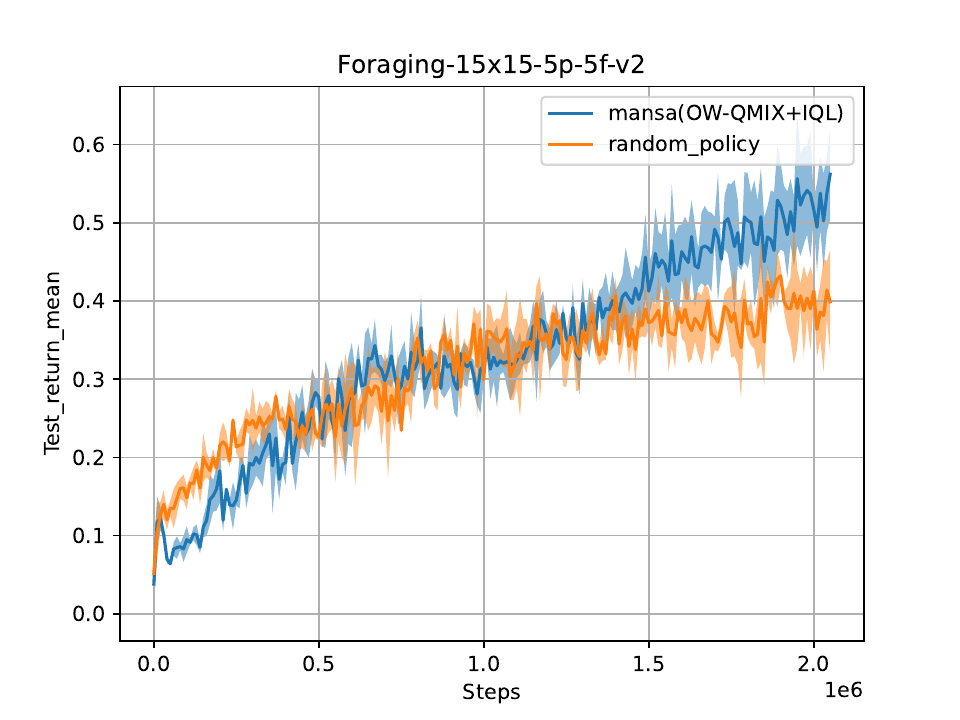}
\caption{MANSA's switching control component produces much better overall performance 
 ("MANSA (OW-QMIX+IQL") compared to randomly activating {\fontfamily{cmss}\selectfont Central} ("random\_policy").}
    \label{fig:ablate_random}\vspace{-0.7 cm}
\end{figure}
\section{Conclusion}
In this paper, we presented MANSA, a novel MARL framework for enhancing performance of IL training under a limited number of CL calls. MANSA combines IL and CL in a way that enables IL to leverage the benefits of CL while minimising the complexity burden and limitations of representational constraints suffered by CL methods. Conversely, MANSA mitigates the issues suffered by IL such as its inability to efficiently solve some coordination tasks and lack of convergence guarantees. In so doing, MANSA provides a framework that leverages each algorithm class and removes the split between IL and CL MARL training methods. Our theory proves MANSA preserves MARL convergence guarantees and improves MARL outcomes. Our empirical analyses present a detailed suite of experimental including LBF and SMAC --- in all these domains, MANSA improves performance,  reduces failure modes all the meanwhile minimising its use of CL. 
In future, we will consider the natural extension of the framework to encompass switching between various CL methods to leverage the benefits of their various factorisations. 

\section*{Acknowledgements}
 Yaodong Yang is supported in part by the National Key R\&D Program of China (2022D0114900) and CAAI-Huawei Mindspore Open Fund. Jianhong Wang is supported by UKRI Turing AI World-Leading Researcher Fellowship, EP/W002973/1.

\bibliography{main}
\bibliographystyle{icml2022}

\clearpage

\onecolumn
\part{{\Large{Appendix}}} 
\addcontentsline{toc}{section}{Appendix} 

\section{Assurance Game Construction}\label{sec:nfg}


In Sec. 6.1 we introduce a variant of the classic coordination game, the Assurance game. The matrix game presented in Sec. 6.1 is the supposition of the Assurance Game with a non-strategic component resulting in a new game whose entries are the supposition of the above components. The composition of the new game is calibrated by a parameter $\alpha$ which runs from $0$ to $1$. At its extreme points $0$ and $1$, the game degenerates into the Assurance game and the entirely non-strategic game. We begin by stating the reward function $R_i(a_i,a_j):\cA_i\times\cA_j\to\mathbb{R}$ for the agents $i,j\in\{1,2\}$.

\begin{align}
    R_i(a_i,a_j)=\alpha(\cR_i(a_i)+ \cR_j(a_j))+(1-\alpha)\mathfrak{R}_i(a_i,a_j), \;\;\; i,j\in\{1,2\}, \label{reward_interaction_ablation}
\end{align}
where $R_i:\cA_i\to\mathbb{R}$ and $R_j:\cA_j\to\mathbb{R}$ are bounded, real valued functions and $\cA_i$ and $\cA_j$ are compact sets.

We assume $\mathfrak{R}_i$ in \eqref{reward_interaction_ablation} can't be decoupled into a function of the form $\mathfrak{R}_i(a_i,a_j)=f(a_i)+ g(a_j)$. From \eqref{reward_interaction_ablation}, we see that when $\alpha = 0$, $R_i(a_i,a_j)=\mathfrak{R}_i(a_i,a_j)$ meaning that the game is strongly coupled and that as $\alpha \to 1$, $R_i(a_i,a_j)\to \cR_i(a_i)+ \cR_j(a_j)$ meaning that the game is decoupled (the agents have no effect on other agents' rewards). 

In what follows, the payoff matrix of  $\mathfrak{R}_i(a_i,a_j)$ is denoted by $A$:
This represents the coupled part of the reward in \eqref{reward_interaction_ablation}, i.e. $\mathfrak{R}_i(a_i,a_j)$.  We construct a second matrix corresponding to the independent part of the reward in \eqref{reward_interaction_ablation} and denote this matrix by $B$. Notice in this payoff matrix, the actions of the other agents have no effect on the agent's own reward (whenever an agent plays an action its reward is identical regardless of the other agents action). Thus, to construct the matrix game corresponding to \eqref{reward_interaction_ablation}, we simply compute the weighted sum entry-wise. Denote this by $C$. Now we vary the value of $\alpha$ within the interval $[0,1]$ and plot the number of CL calls used during training (this is the number of times the {\fontfamily{cmss}\selectfont Global} agent performs a switch) vs the value of $\alpha$. 

\begin{table}[h!]
\begin{subtable}{0.3\textwidth}
\centering
\[A=\begin{tabular}[t]{|m{1cm}| |m{1cm}|m{1cm}|}
\hline
&U &  D\\
\hline
U&$5$,$5$ &  $0$,$0$\\
\hline
D& $0$,$0$ & \textcolor{black}{$10$,$10$}\\
\hline
\end{tabular}\]
\end{subtable}
\begin{subtable}{0.32\textwidth}
\centering
\[B=\begin{tabular}[t]{|m{1cm}| |m{1cm}|m{1cm}|}
\hline
&U &  D\\
\hline
U&$10$,$10$ &  $10$,$10$\\
\hline
D&$10$,$10$ & $10$,$10$\\
\hline
\end{tabular}\]
\end{subtable}
\begin{subtable}{0.25\textwidth}
\centering
\[C=\begin{tabular}[t]{|c| |c|c|}
\hline
&U &  D\\
\hline
U&$5(1+\alpha),5(1+\alpha)$ &  $10\alpha,10\alpha$\\
\hline
D&$10\alpha,10\alpha$ & \textcolor{black}{$10,10$}\\
\hline
\end{tabular}\]
\end{subtable}
\end{table}

\newpage\section{Ablation studies}\label{sec:sc_ablation}
\subsection{A.1 Switching Cost Parameter}
\begin{figure}[h]
    \centering
    \includegraphics[width=0.6\textwidth]{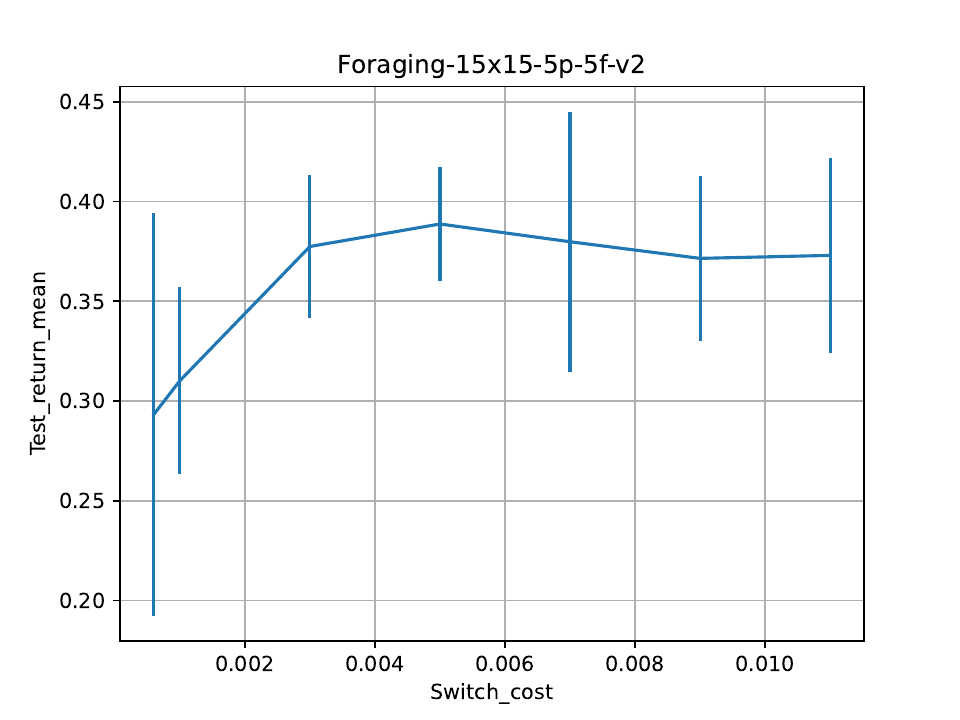}
    \label{fig:ablate}
\end{figure}
We ran a simple study to ascertain the sensitivity of MANSA to the \emph{switching cost} parameter. We picked a random map in LBF and ran MANSA with a range of values for the switching cost. As shown in the plot, while within a given order of magnitude (here $10^{-2}$), MANSA's performance is largely robust to the switching cost. However, there is a deterioration in performance if the switching cost is set too low, and thereby does not penalize usage of CL enough.

\newpage

\newpage\section{MANSA CL Call Analysis}\label{sec:CL_calls_analysis}
One of our key claims is that MANSA reduces the number of CL calls made during training. In this section, we present the CL call percentages made by MANSA in both Level-based foraging (LBF) and StarCraft Multi-Agent Challenge (SMAC). In the case of LBF, MANSA successfully solves the tasks (recall that  MANSA outperforms all baselines in all SMAC maps except \emph{3s5z\_vs\_3s6z} and MANSA outperforms both IQL and QMIX by a notable margin in most of the maps (4 of 8) in LBF, see Section \ref{sec:performances_appendix} for detailed performance plots).     
\begin{center}
\begin{table}[h!]
\centering
 
    \begin{tabular}{c|c} 
        \toprule
        LBF Map& Percentage of CL calls\\
        \midrule
        Foraging-8x8-3p-3f-v2 & 4.76\% \\
        Foraging-10x10-3p-5f-v2 & 5.23\% \\
        Foraging-10x10-5p-3f-v2 & 1.85\% \\
        Foraging-15x15-5p-5f-v2 & 0.76\% \\
        Foraging-5x5-2p-1f-coop-v2 & 3.46\% \\
        Foraging-8x8-2p-2f-coop-v2 & 16.90\% \\
        Foraging-10x10-5p-1f-coop-v2 & 0.71\% \\
        Foraging-10x10-8p-1f-coop-v2 & 0.16\% \\
      \bottomrule
    \end{tabular}
              \caption{Percentage of calls to CL in MANSA in LBF. }
    \label{table:calls_to_cl_LBF}
    \end{table}
\end{center}

\begin{table}[h!]
\centering
    \begin{tabular}{c|c} 
        \toprule
       SMAC Map& Percentage of CL calls\\
        \midrule       
       1c3s5z & 81.67\% \\
       2m\_vs\_1z & 69.01\% \\
       2s3z & 79.70\% \\
       3m & 59.22\% \\
       3s5z & 82.19\% \\
       8m & 62.96\% \\
       corridor & 80.19\% \\
       MMM2 & 80.78\% \\
       so\_many\_baneling & 74.83\% \\
        \bottomrule
    \end{tabular}
        \caption{Percentage of calls to CL in MANSA in SMAC. }
        \label{table:calls_to_cl_SMAC}
    \end{table}
\subsection{MANSA-B CL calls under Budgetary Constraints}

\begin{table}[h!]
    \centering
    \begin{tabular}{p{0.75cm}|p{1.9cm}|p{0.75cm}|p{0.75cm}|p{0.75cm}|p{0.75cm}}
        \toprule
        &\makecell{\textcolor{blue}{Original}/\\\textcolor{orange}{QMIX}/\textcolor{green}{IQL}} & 10\% & 20\% & 50\% & 75\% \\
        \midrule
        3m & \textcolor{blue}{$98.00\pm1.00$} \textcolor{orange}{$92.00\pm1.63$}  \textcolor{green}{$87.00\pm0.82$} & \makecell{$98.00$ \\$\pm 1.00$} & \makecell{$97.67$\\$\pm1.15$} & \makecell{$99.33$\\$\pm0.58$} & \makecell{$99.00$\\$\pm1.00$}\\
        \hline 
        2s3z & \textcolor{blue}{$97.00\pm2.00$}  \textcolor{orange}{$96.33\pm0.58$}  \textcolor{green}{$77.67\pm5.86$} & \makecell{$92.33$\\$\pm5.13$} & \makecell{$96.00$\\$ \pm 3.46$} & \makecell{$90.00$\\$ \pm 5.29$} & \makecell{$96.33$\\$\pm0.57$} \\
        \hline
        2m vs 1z & \textcolor{blue}{$99.00\pm0.00$}  \textcolor{orange}{$68.33\pm28.75$}  \textcolor{green}{$99.00\pm0.82$} & \makecell{$100.00$\\$\pm0.00$} & \makecell{$100.00$\\$\pm0.00$} & \makecell{$100.00$\\$\pm0.00$} & \makecell{$99.67$\\$\pm1.00$} \\
        \hline 
        so many baneling & \textcolor{blue}{$97.00\pm1.00$}  \textcolor{orange}{$86.00\pm3.61$}  \textcolor{green}{$92.34\pm5.03$} & \makecell{$84.50$\\$\pm11.84$} & \makecell{$98.00$\\$\pm2.51$} & \makecell{$95.50$\\$ \pm 2.64$} & \makecell{$93.50$\\$\pm5.13$} \\
        \bottomrule
    \end{tabular}
        \caption{End-of-training win-rates of MANSA-B under various CL call budget constraints. }
            \label{table:budget_mansa}
    \end{table}
End-of-training win-rates of MANSA-B under various CL call budget constraints. Here, the percentages shown on the top row indicate CL calls proportionate to the number of CL calls that was made by MANSA (blue), e.g., 10\% means we only allow MANSA-B total number of CL calls equal to 10\% of the calls of MANSA. The performance of QMIX (orange) and IQL (green) are also shown for reference. In this table we see further evidence of MANSA's remarkably granular control over using CL. In general, performance improves with  each budget increment of CL calls. 
\newpage
\section{MANSA is a Plug \& Play Enhancement Tool}\label{sec:appendix_plugplay}
\begin{table}[h!]
\centering
    \caption{Percentage of CL calls in MANSA on the maps shown in Fig. \ref{figure:further_lbf_exps}. }
    \begin{tabular}{c|c} 
        \toprule
        LBF Map& Percentage of CL (W-QMIX) calls\\
        \midrule
        Foraging-15x15-5p-3f-v2 & 53.15\% \\
        Foraging-15x15-5p-5f-v2 & 87.91\% \\
        Foraging-grid-2s-10x10-3p-3f-v2 & 5.86\% \\
      \bottomrule
    \end{tabular}
    \label{table:calls_to_cl_2}
\end{table}
In this experiment, we replaced QMIX in MANSA with a stronger CL component, W-QMIX, to test if MANSA is able to successfully delivery performance benefits even if the CL baseline is stronger than the IL baseline. As previously discussed in Section \ref{sec:plugnplay}, IQL is outperformed by a CL algorithm, W-QMIX. and in all maps, MANSA significantly outperforms the baselines. From Table \ref{table:calls_to_cl_2} we also see that MANSA uses CL much more in these maps than the maps indicated in Table \ref{table:calls_to_cl_LBF}. Moreover, as with previous experiments, MANSA seems to have correctly identified states that benefit from CL (and those that do not) and have only used CL to achieve significant performance gains.

\newpage

\clearpage
\section{Detailed Performance Plots}\label{sec:performances_appendix}
\subsection{Level-Based Foraging}

\begin{figure}[h]
    \begin{subfigure}{.35\textwidth}
        \centering
        \includegraphics[width=\textwidth]{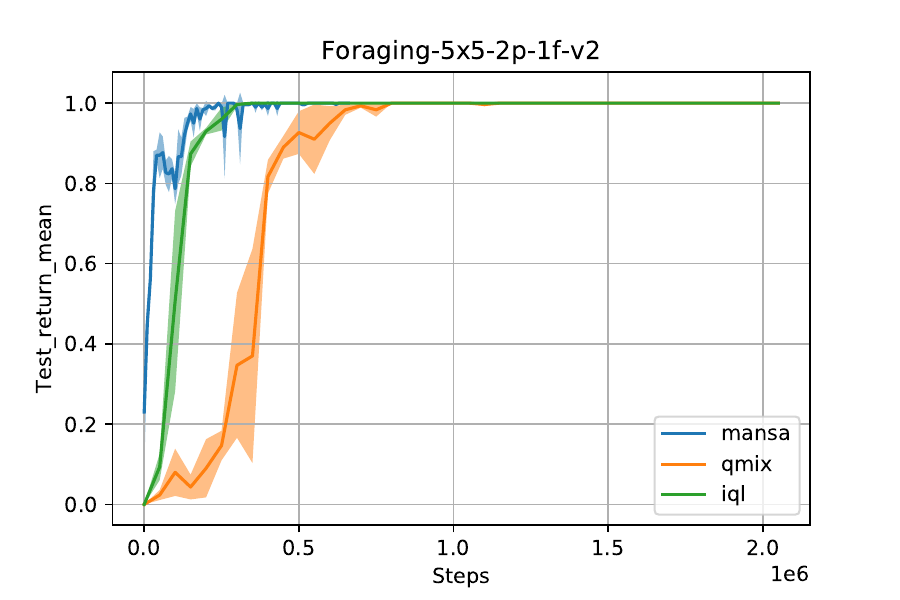}
    \end{subfigure}
    \begin{subfigure}{.35\textwidth}
        \centering
        \includegraphics[width=\linewidth]{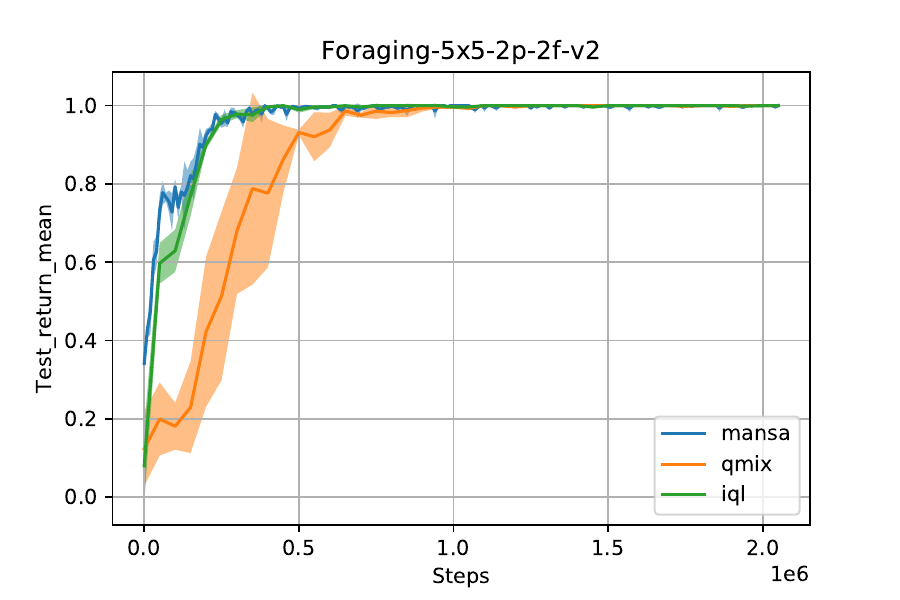}
    \end{subfigure}
    \begin{subfigure}{.35\textwidth}
        \centering
        \includegraphics[width=\linewidth]{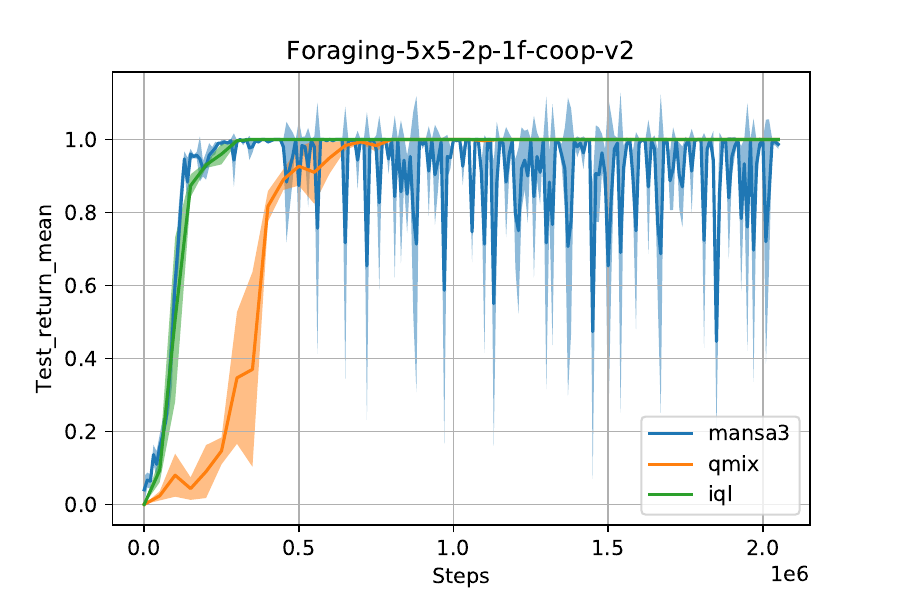}
    \end{subfigure}
    \begin{subfigure}{.35\textwidth}
        \centering
        \includegraphics[width=\linewidth]{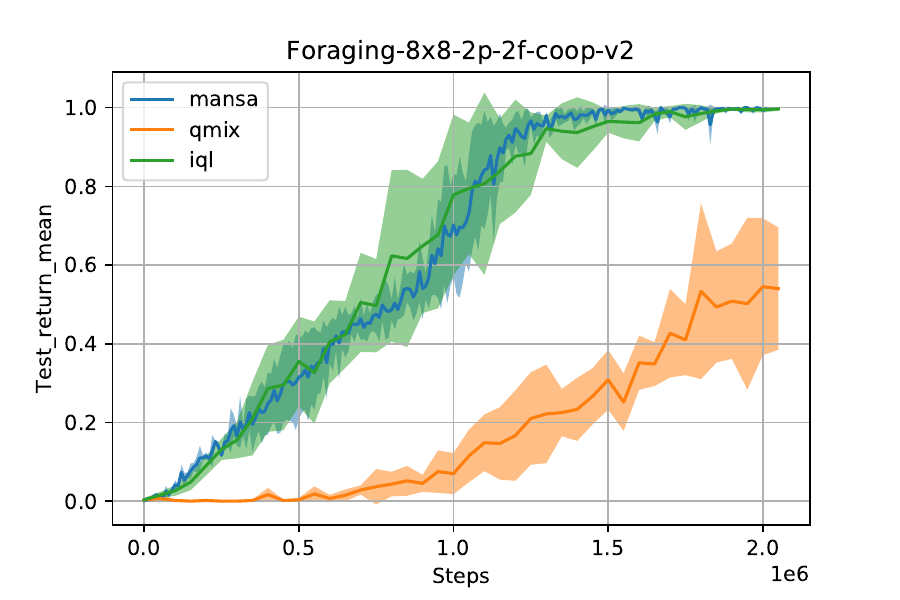}
    \end{subfigure}
    \begin{subfigure}{.35\textwidth}
        \centering
        \includegraphics[width=\linewidth]{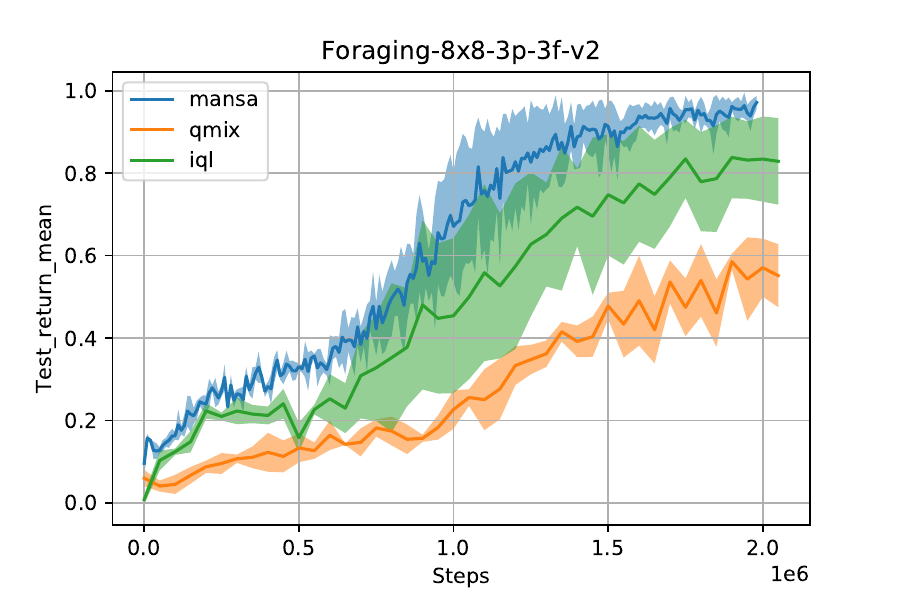}
    \end{subfigure}
    \begin{subfigure}{.35\textwidth}
        \centering
        \includegraphics[width=\linewidth]{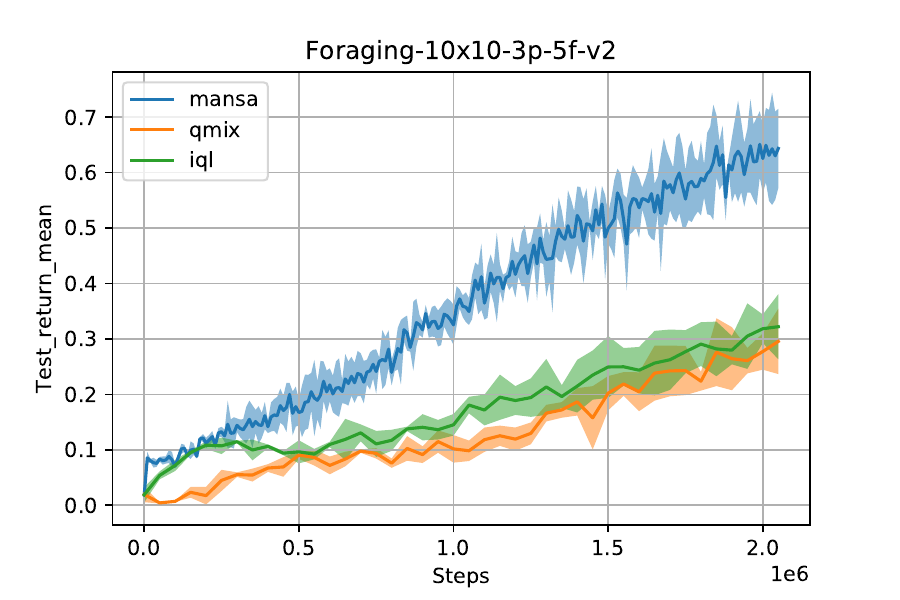}
    \end{subfigure}
    \begin{subfigure}{.35\textwidth}
        \centering
        \includegraphics[width=\textwidth]{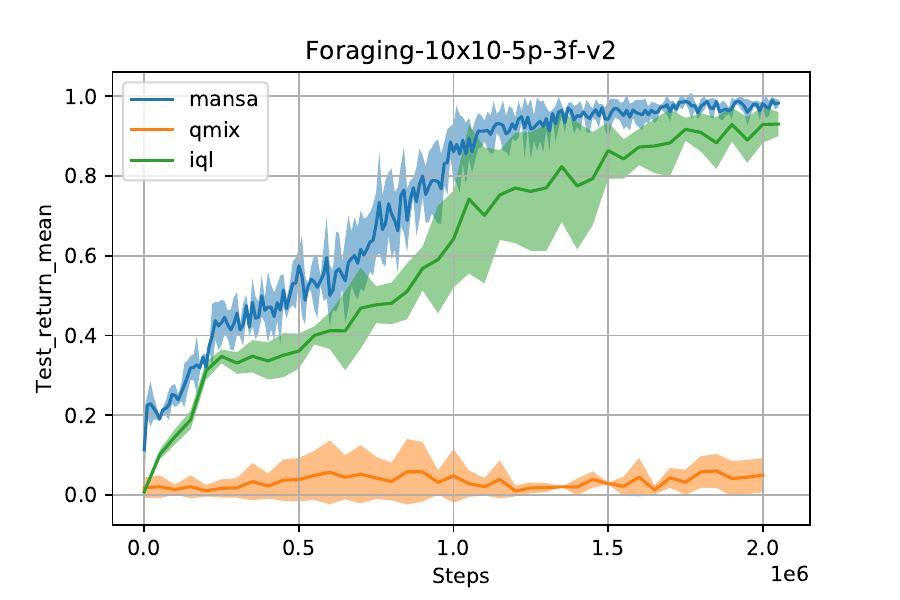}
    \end{subfigure}
    \begin{subfigure}{.35\textwidth}
        \centering
        \includegraphics[width=\linewidth]{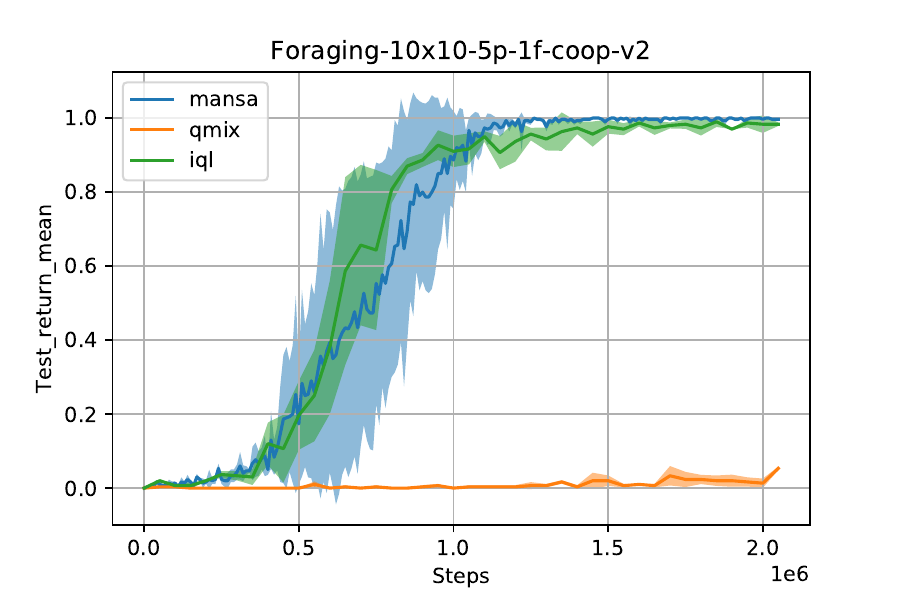}
    \end{subfigure}
    \begin{subfigure}{.35\textwidth}
        \centering
        \includegraphics[width=\linewidth]{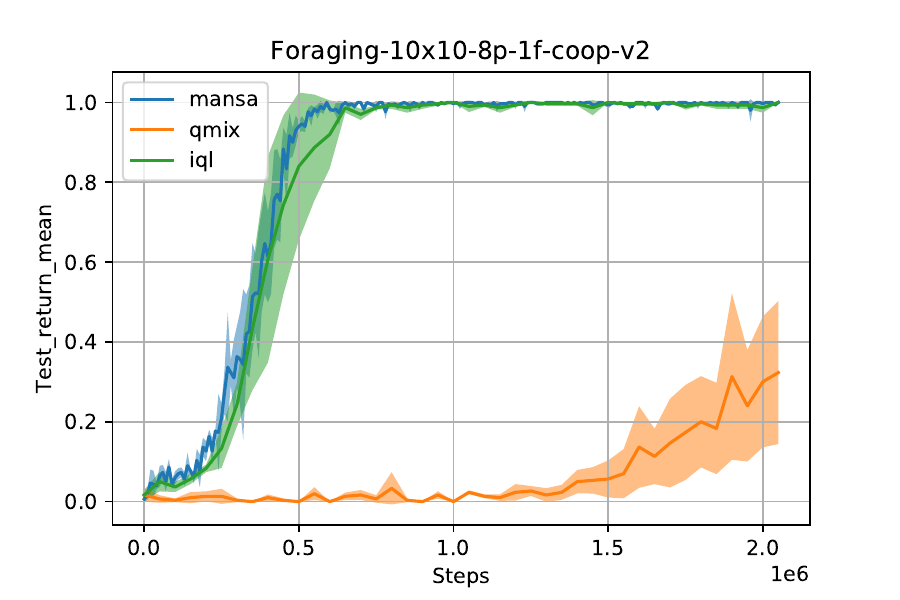}
    \end{subfigure}
    \hfill
    \begin{subfigure}{.35\textwidth}
        \centering
        \includegraphics[width=\linewidth]{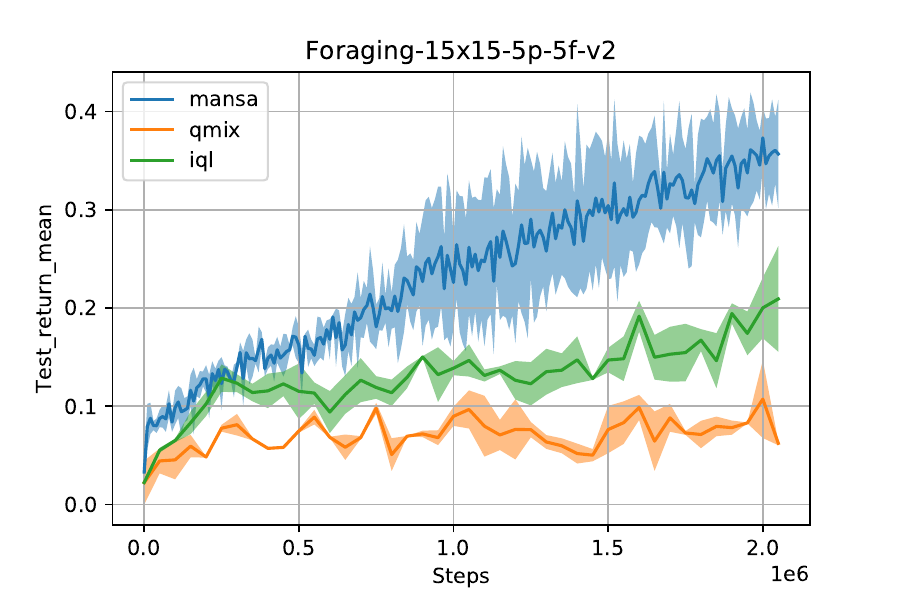}
    \end{subfigure}
    \hfill
\caption{Learning curves on individual LBF maps. }
\end{figure}

\newpage
\subsection{StarCraft Multi-Agent Challenge}
\begin{figure}[h!]
    \centering
    \begin{subfigure}{.35\textwidth}
        \centering
        \includegraphics[width=\textwidth]{Figures/SMAC_experiments/mansa_qmix_iql_/3m.pdf}
    \end{subfigure}\hfill
    \begin{subfigure}{.35\textwidth}
        \centering
        \includegraphics[width=\textwidth]{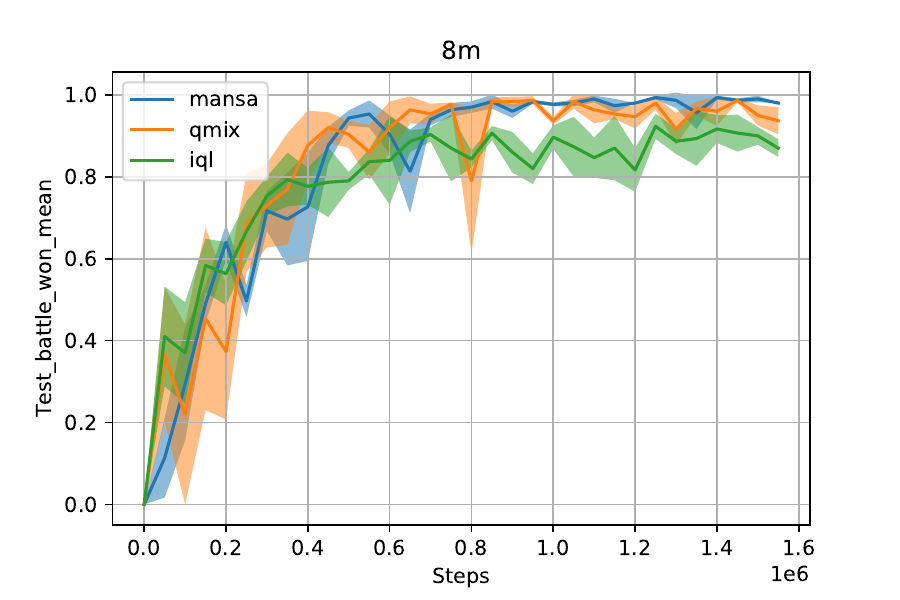}
    \end{subfigure}\hfill
    \begin{subfigure}{.35\textwidth}
        \centering
        \includegraphics[width=\textwidth]{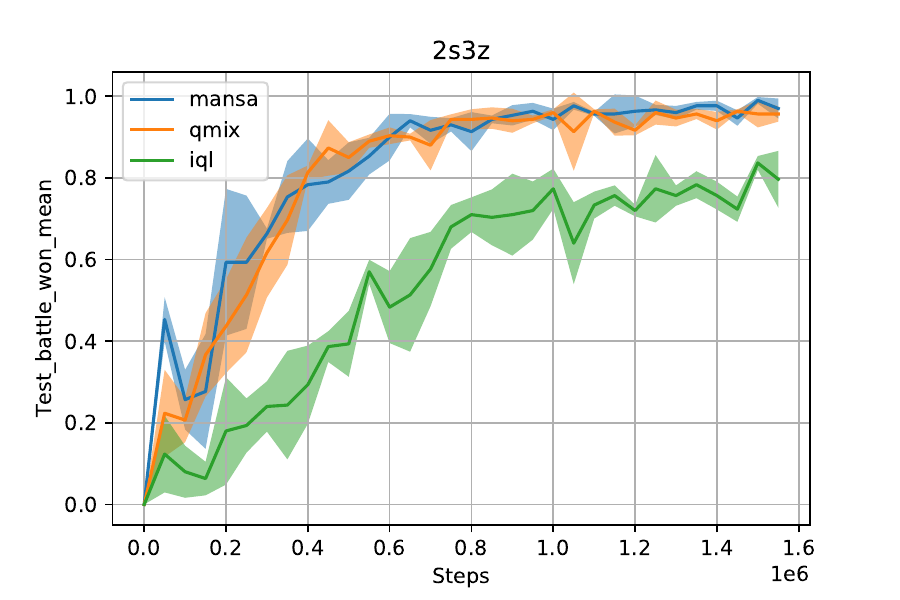}
    \end{subfigure}\hfill
    \begin{subfigure}{.35\textwidth}
        \centering
        \includegraphics[width=\textwidth]{Figures/SMAC_experiments/mansa_qmix_iql_/2m_vs_1z.pdf}
    \end{subfigure}\hfill
    \begin{subfigure}{.35\textwidth}
        \centering
        \includegraphics[width=\textwidth]{Figures/SMAC_experiments/mansa_qmix_iql_/1c3s5z.pdf}
    \end{subfigure}\hfill
    %
    \begin{subfigure}{.35\textwidth}
        \centering
        \includegraphics[width=\textwidth]{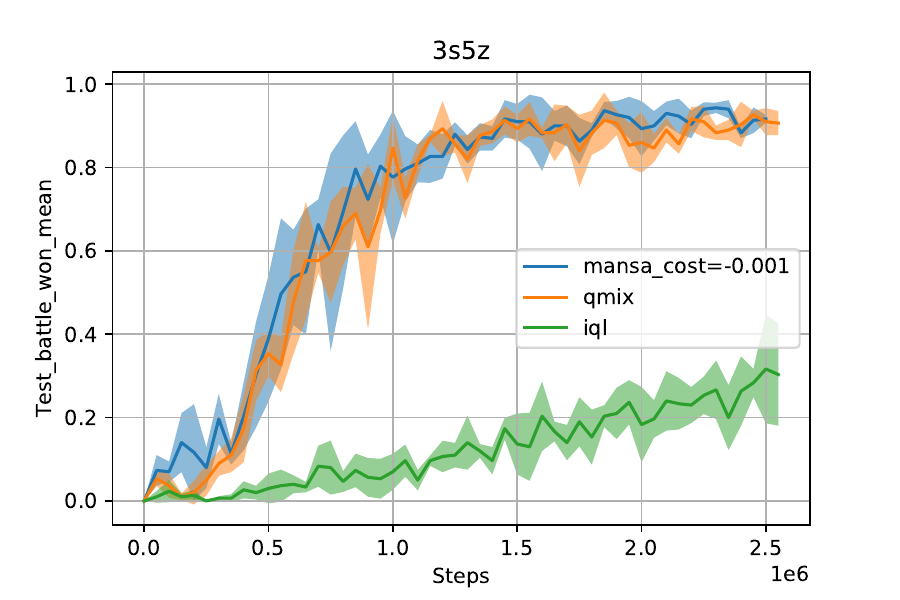}
    \end{subfigure}\hfill
    \begin{subfigure}{.35\textwidth}
        \centering
        \includegraphics[width=\textwidth]{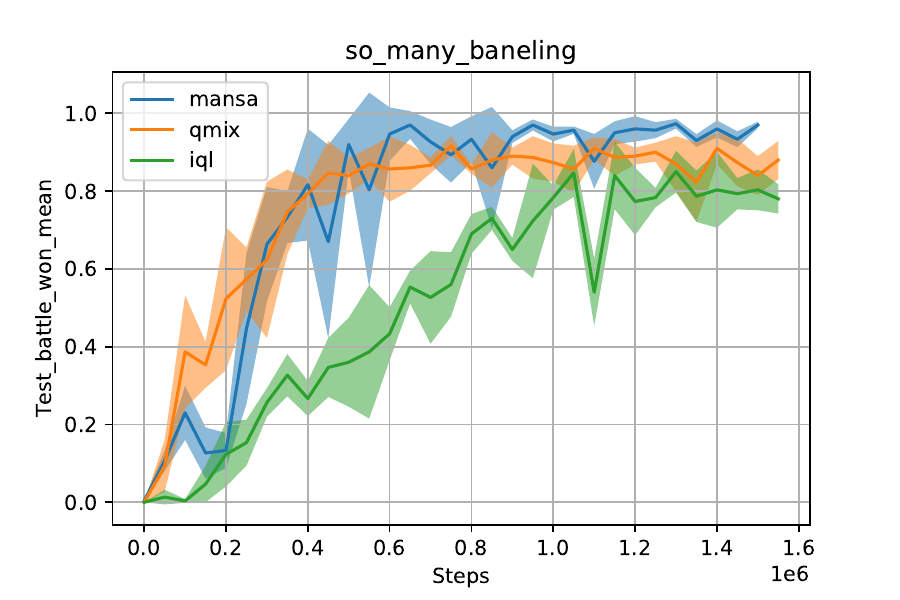}
    \end{subfigure}\hfill
    \begin{subfigure}{.35\textwidth}
        \centering
        \includegraphics[width=\textwidth]{MANSA_review_update/extra_corridor.pdf}
    \end{subfigure}
    \begin{subfigure}{.35\textwidth}
        \centering
        \includegraphics[width=\textwidth]{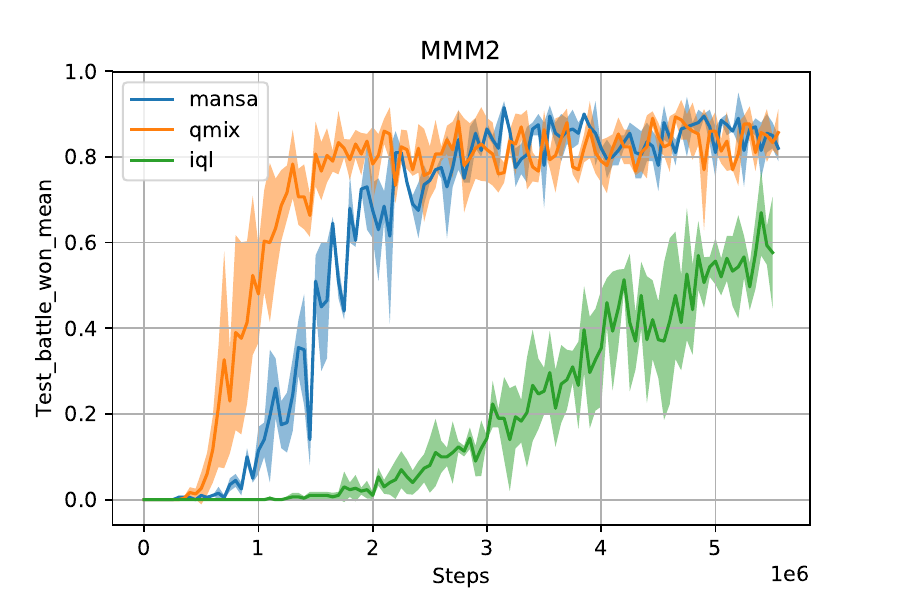}
    \end{subfigure}\hfill
\caption{Learning curves on individual SMAC maps. While QMIX fail to learn effective policies on two maps, and IQL fails on four maps, MANSA does not exhibit any failure cases. We define failure as achieving a win-rate of less than 80\%.}
\end{figure}
\newpage

\section{\textcolor{black}{MANSA with CL Update Restriction}}\label{sec:MANSA_update_restriction}
\textcolor{black}{MANSA includes a feature that imposes the condition that CL updates can only occur when the Global agent makes a CL call (i.e. when $g=1$).  In this section we provide training plots display the results for MANSA with this CL training restriction (MANSA\_CLR) against the baselines. As before, MANSA\_CLR substantially outperforms the baselines on all tested LBF tasks. Similarly, in SMAC, MANSA\_CLR outperforms the baselines on the majority tasks and matches their performance on others. }

\begin{figure}[h!]
    \begin{subfigure}[bh]{\textwidth}
\includegraphics[width=0.45\textwidth]{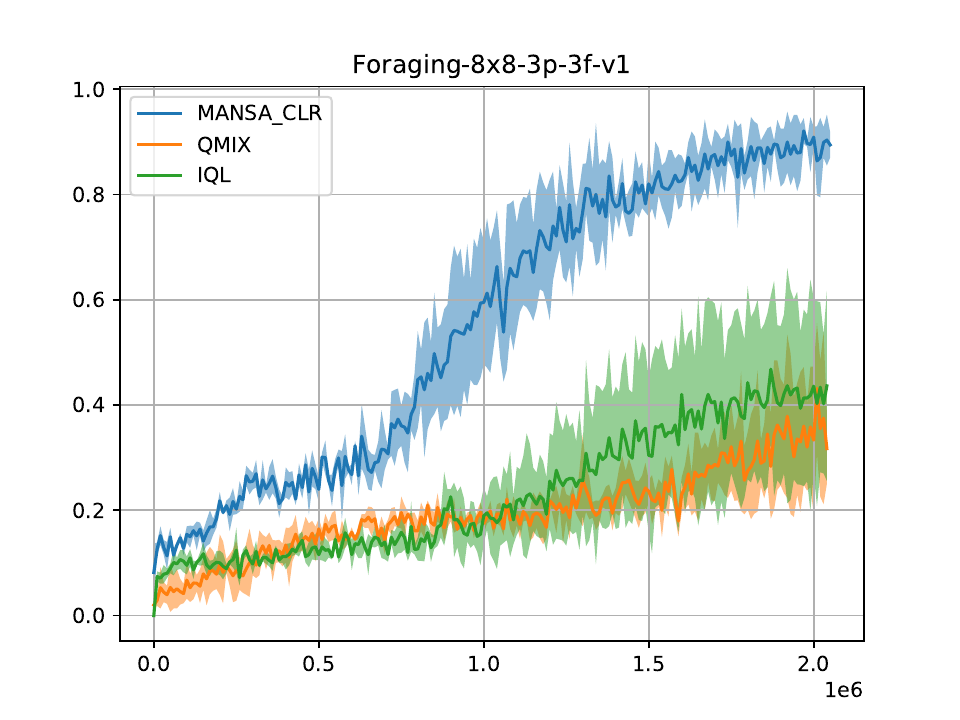}
\includegraphics[width=0.5\textwidth]{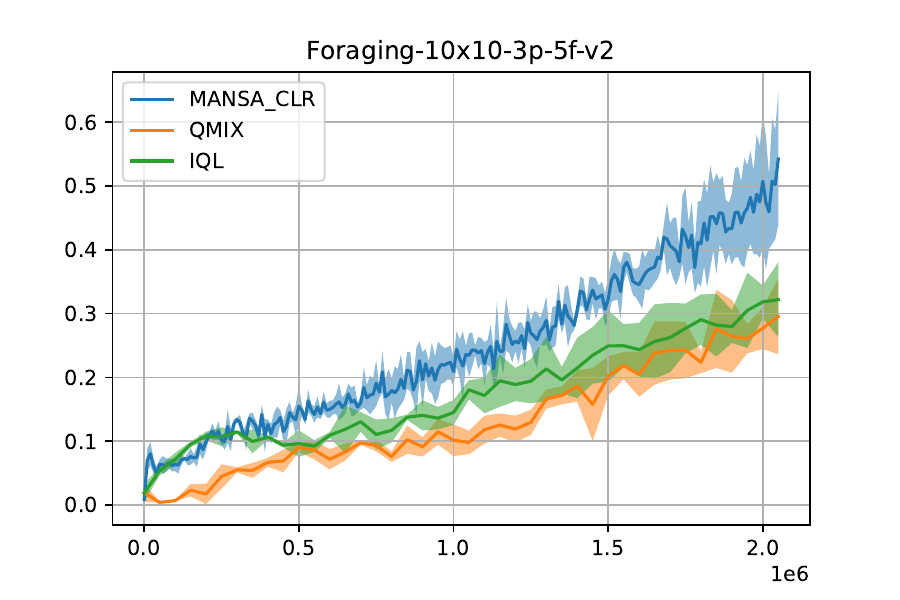}
\end{subfigure}
    \begin{subfigure}[b]{\textwidth}
\includegraphics[width=0.5\textwidth]{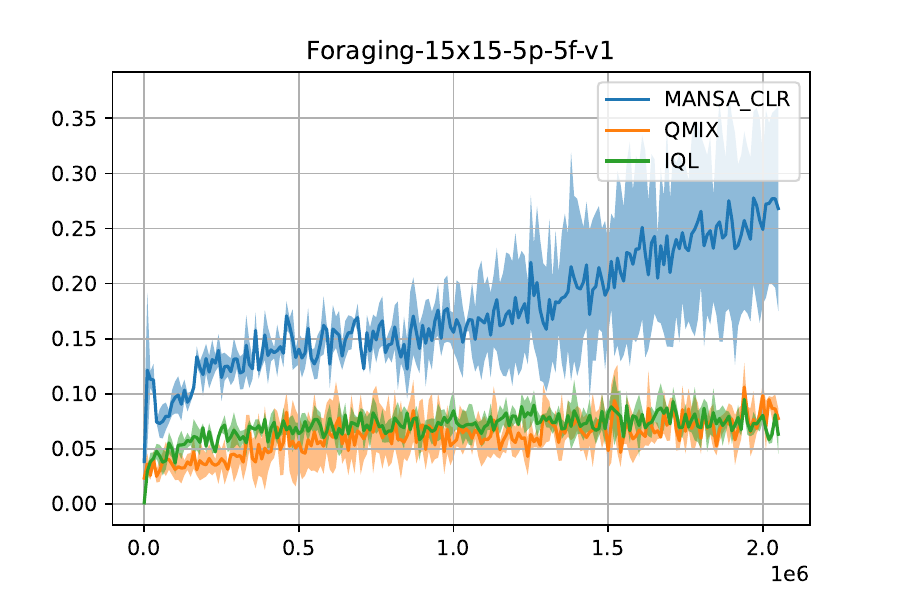}
\end{subfigure}
        \caption{\textcolor{black}{End-of-training returns of MANSA with CL update restriction (MANSA\_CLR) in Level-Based Foraging (LBF).}}
    \label{fig:returns_CLR}
\end{figure}

\begin{figure}[ht!]
    \begin{subfigure}[b]{\textwidth}
\includegraphics[width=0.5\textwidth]{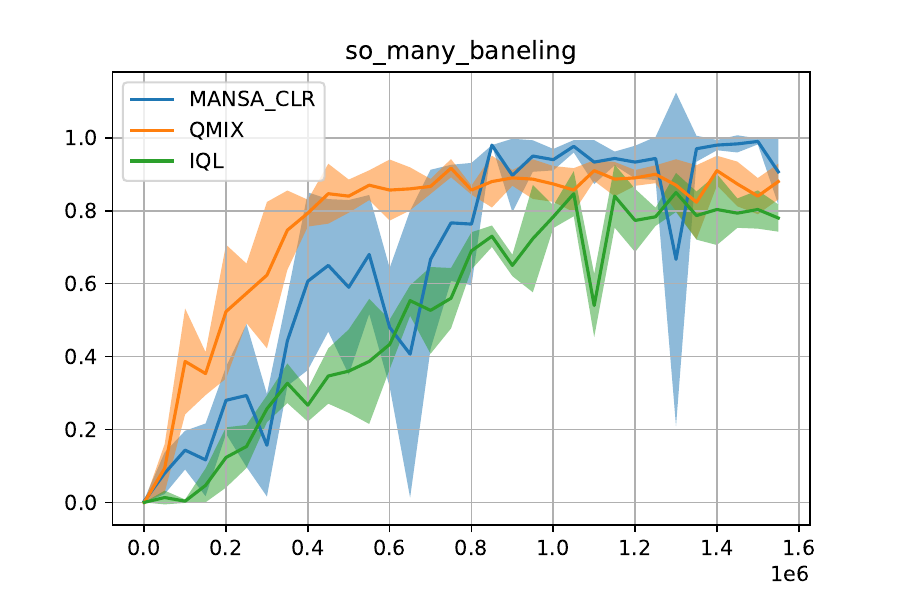}
    \includegraphics[width=0.5\textwidth]{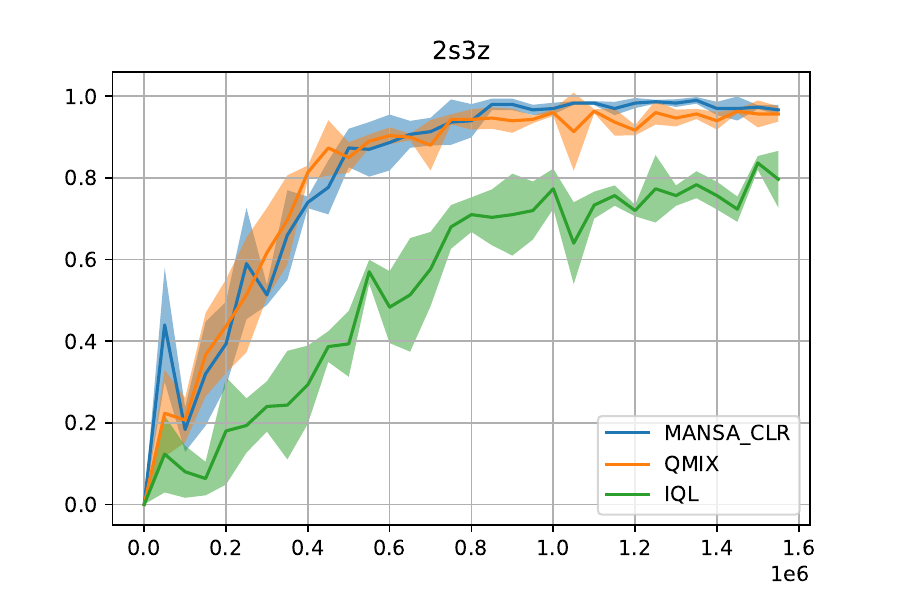}
\end{subfigure}
    \begin{subfigure}[b]{\textwidth}
\includegraphics[width=0.5\textwidth]{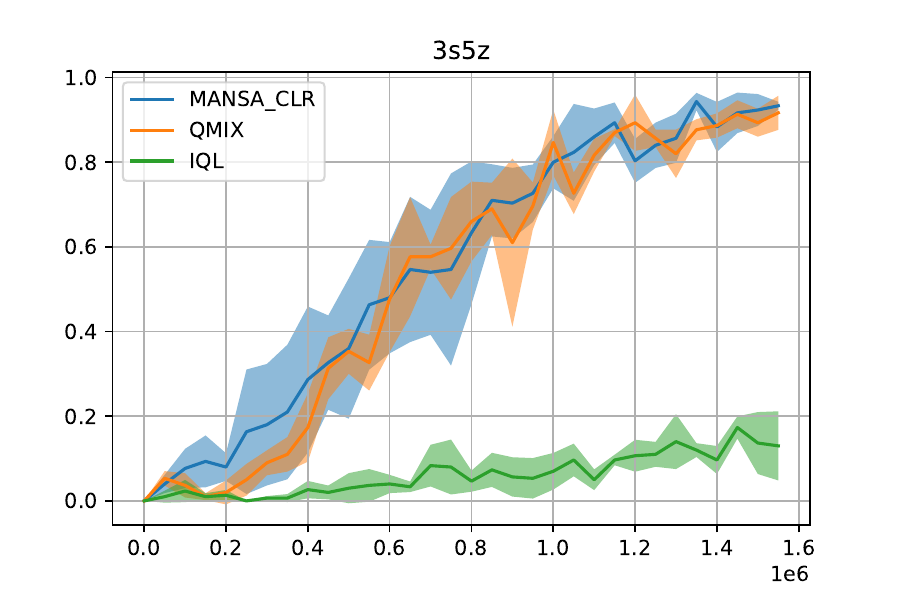}
\end{subfigure}
        \caption{\textcolor{black}{End-of-training win-rates of MANSA with implementation with CL update restriction (MANSA\_CLR) in StarCraft Multi-Agent Challenge (SMAC).}}
    \label{fig:}
\end{figure}
\clearpage
\subsection{\textcolor{black}{\textbf{MANSA-B}udget with CL Update Restriction}}
\textcolor{black}{In Section \ref{sec:MANSA_update_restriction}. we provided results for MANSA\_CLR which imposes the restriction that CL updates can only occur when the Global agent makes a CL call (i.e. when $g=1$). In this section, Table \ref{table:budget_mansa_CLR} displays the results for MANSA-B\_CLR which imposes the CL update restriction on the MANSA-B framework (i.e. MANSA which has a budget constraint on the number of CL calls). As before, MANSA\_CLR outperforms IQL when given a budget of just 20\% CL calls and outperforms QMIX on 2m\_vs\_1z with just a 10\% CL budget. In 2s3z MANSA outperforms QMIX when it has a budget of 75\% for its CL calls; i.e. it outperforms QMIX even though it is forced to make 25\% fewer CL calls than QMIX.} 
\begin{table}[bh!]
    \centering
    \begin{center}
    \begin{tabular}{p{2.0cm}|p{3.5cm}|p{0.75cm}|p{0.75cm}|p{0.75cm}|p{0.75cm}}
       \toprule
        &\makecell{\textcolor{blue}{\textbf{Original}}\newline/\textcolor{orange}{\textbf{QMIX}}\newline/\textcolor{green}{\textbf{IQL}}} & \textbf{10\%} & \textbf{20\%} & \textbf{50\%} & \textbf{75\%} 
        \\\midrule
        \makecell{\textbf{2m\_vs\_1z}} & \textcolor{blue}{\;\;\;\;\;$98.00\pm1.00$}\newline \textcolor{orange}{\;\;\;\;\;$92.00\pm1.63$}\newline  \textcolor{green}{\;\;\;\;\;$87.00\pm0.82$} & \makecell{\hspace{-1.5 mm}$100.00$ \\$\pm 0.00$} & \makecell{$99.67$\\$\pm0.57$} & \makecell{$96.67$\\$\pm3.05$} & \makecell{$99.00$\\$\pm0.00$}\\
        \hline
        \makecell{\textbf{2s3z}} & \textcolor{blue}{\;\;\;\;\;$96.67\pm1.24$}\newline \textcolor{orange}{\;\;\;\;\;$95.67\pm1.8$}\newline  \textcolor{green}{\;\;\;\;\;$79.67\pm6.69$} & \makecell{\hspace{-1.5 mm}$82.00$ \\$\pm 1.41$} & \makecell{$82.33$\\$\pm5.18$} & \makecell{$81.67$\\$\pm1.69$} & \makecell{$96.33$\\$\pm0.47$}\\
        \bottomrule
    \end{tabular}
    \end{center}
        \caption{\textcolor{black}{End-of-training win-rates of MANSA-B with CL update restriction and various CL call budget constraints against baselines.}}
    \label{table:budget_mansa_CLR}
\end{table}

\newpage
\section{Hyperparameter Settings}
In the table below we report all hyperparameters used in our experiments. Hyperparameter values in square brackets indicate ranges of values that were used for performance tuning.

\begin{center}
    \begin{tabular}{c|c} 
        \toprule
        Clip Gradient Norm & 1\\
        $\gamma_{E}$ & 0.99\\
        $\lambda$ & 0.95\\
        Learning rate & $1$x$10^{-4}$ \\
        Number of minibatches & 4\\
        Number of optimisation epochs & 4\\
        Number of parallel actors & 16\\
        Optimisation algorithm & Adam\\
        Rollout length & 128\\
        Sticky action probability & 0.25\\
        Use Generalized Advantage Estimation & True\\
        \midrule
        Coefficient of extrinsic reward & [1, 5]\\
        Coefficient of intrinsic reward & [1, 2, 5, 10, 20, 50]\\
        {\fontfamily{cmss}\selectfont Global} discount factor & 0.99\\
        Probability of terminating option & [0.5, 0.75, 0.8, 0.9, 0.95]\\
        $L$ function output size & [2, 4, 8, 16, 32, 64, 128, 256]\\
        \bottomrule
    \end{tabular}
\end{center}

\clearpage
\clearpage
\section{Notation \& Assumptions}\label{sec:notation_appendix}

We assume that $\mathcal{S}$ is defined on a probability space $(\Omega,\mathcal{F},\mathbb{P})$ and any $s\in\mathcal{S}$ is measurable with respect
to the Borel $\sigma$-algebra associated with $\mathbb{R}^p$. We denote the $\sigma$-algebra of events generated by $\{s_t\}_{t\geq 0}$
by $\mathcal{F}_t\subset \mathcal{F}$. In what follows, we denote by $\left( \cY,\|\|\right)$ any finite normed vector space and by $\mathcal{H}$ the set of all measurable functions.  Where it will not cause confusion (and with a minor abuse of notation) for a given function $h$ we use the shorthand $h^{(\pi^{i},\pi^{-i})}(s)= h(s,\pi^i,\pi^{-i})\equiv\mathbb{E}_{\pi^i,\pi^{-i}}[h(s,a^i,a^{-i})]$.

The results of the paper are built under the following assumptions which are standard within RL and stochastic approximation methods:

\textbf{Assumption 1}
The stochastic process governing the system dynamics is ergodic, that is  the process is stationary and every invariant random variable of $\{s_t\}_{t\geq 0}$ is equal to a constant with probability $1$.

\textbf{Assumption 2}
The agents' reward function $R$ is in $L_2$.

\textbf{Assumption 3}
For any positive scalar $c$, there exists a scalar $\mu_c$ such that for all $s\in\mathcal{S}$ and for any $t\in\mathbb{N}$ we have: $
    \mathbb{E}\left[1+\|s_t\|^c|s_0=s\right]\leq \mu_c(1+\|s\|^c)$.

\textbf{Assumption 4}
There exists scalars $C_1$ and $c_1$ such that  $|R(s,\cdot)|\leq C_2(1+\|s\|^{c_2})$ for some scalars $c_2$ and $C_2$ we have that: $
    \sum_{t=0}^\infty\left|\mathbb{E}\left[R(s_t,\cdot)|s_0=s\right]-\mathbb{E}[R(s_0,\cdot)]\right|\leq C_1C_2(1+\|s_t\|^{c_1c_2})$.

\textbf{Assumption 5}
There exists scalars $e$ and $E$ such that for any $s\in\mathcal{S}$ we have that: $
    |R(s,\cdot)|\leq E(1+\|s\|^e)$ .

\textbf{Assumption 6}
For any {\fontfamily{cmss}\selectfont Global} policy $\mathfrak{g}$, the total number of interventions is $K<\infty$.

\newpage\section{Proof of Technical Results}\label{sec:proofs_appendix}

We begin the analysis with some preliminary results and definitions required for proving our main results.

\begin{definition}{A.1}
Given a norm $\|\cdot\|$, an operator $T: \cY\to \cY$ is a contraction if there exists some constant $c\in[0,1[$ for which for any $J_1,J_2\in  \cY$ the following bound holds: $    \|TJ_1-TJ_2\|\leq c\|J_1-J_2\|$.
\end{definition}

\begin{definition}{A.2}
An operator $T: \cY\to  \cY$ is non-expansive if $\forall J_1,J_2\in  \cY$ the following bound holds: $    \|TJ_1-TJ_2\|\leq \|J_1-J_2\|$.
\end{definition}

\begin{lemma} \cite{mguni2019cutting} \label{max_lemma}
For any
$f: \cY\to\mathbb{R}: \cY\to\mathbb{R}$, we have that the following inequality holds:
\begin{align}
\left\|\underset{a\in \cY}{\max}\:f(a)-\underset{a\in \cY}{\max}\: g(a)\right\| \leq \underset{a\in \cY}{\max}\: \left\|f(a)-g(a)\right\|.    \label{lemma_1_basic_max_ineq}
\end{align}
\end{lemma}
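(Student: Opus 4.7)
The plan is to carry out a standard two-case analysis based on which maximum is larger, reducing the bound to a pointwise comparison at a single cleverly chosen point. Since $f,g$ map into $\mathbb{R}$, the norm $\|\cdot\|$ reduces to absolute value, so the claim becomes $\bigl|\max_a f(a)-\max_a g(a)\bigr|\le \max_a |f(a)-g(a)|$.

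First I would assume without loss of generality that $\max_{a\in\cY} f(a)\ge \max_{a\in\cY} g(a)$; the reverse case is symmetric and handled by swapping the roles of $f$ and $g$. Under this assumption the absolute value on the left can be dropped and the quantity to bound is $\max_{a\in\cY} f(a)-\max_{a\in\cY} g(a)$. Picking $a^\star\in\arg\max_{a\in\cY} f(a)$ (which exists since the statement is phrased with $\max$; otherwise one passes to an $\varepsilon$-maximiser and lets $\varepsilon\downarrow 0$), the left-hand side equals $f(a^\star)-\max_{a\in\cY} g(a)$.

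The key step is the one-line estimate $\max_{a\in\cY} g(a)\ge g(a^\star)$, which is just the definition of maximum. Substituting gives
\begin{align*}
\max_{a\in\cY} f(a)-\max_{a\in\cY} g(a) \;\le\; f(a^\star)-g(a^\star) \;\le\; |f(a^\star)-g(a^\star)| \;\le\; \max_{a\in\cY}|f(a)-g(a)|,
\end{align*}
which is the desired bound. The symmetric case (choosing instead $b^\star\in\arg\max_a g(a)$ and using $\max_a f(a)\ge f(b^\star)$) yields the same upper bound, so combining the two cases closes the absolute value on the left and completes the argument.

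There is essentially no main obstacle: the only subtlety worth flagging is the attainment of the maxima. If $\cY$ is not compact or $f,g$ are not upper semicontinuous, the argmax may fail to exist and one must replace $a^\star$ by an $\varepsilon$-near-maximiser satisfying $f(a_\varepsilon^\star)\ge \sup_a f(a)-\varepsilon$, repeat the chain above to obtain the bound up to an additive $\varepsilon$, and then send $\varepsilon\downarrow 0$. Since the paper writes $\max$ throughout (consistent with the finite or compact action-like sets used in the switching-control analysis), the attained version suffices and no further work is required.
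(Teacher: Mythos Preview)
Your argument is correct and is the standard proof of this inequality. The paper itself does not supply a proof of this lemma; it is simply stated with a citation to \cite{mguni2019cutting}, so there is no in-paper proof to compare against, and your two-case argmax argument (with the $\varepsilon$-maximiser remark for the non-attained case) is exactly the expected justification.
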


\begin{lemma} {A.4}\citep{tsitsiklis1999optimal}\label{non_expansive_P}
The probability transition kernel $P$ is non-expansive so that if $\forall J_1,J_2\in  \cY$ the following holds: $    \|PJ_1-PJ_2\|\leq \|J_1-J_2\|$.
\end{lemma}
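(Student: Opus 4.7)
The plan is to unpack the definition of the transition kernel $P$ acting as a linear operator on the function space $\cY$ of bounded measurable functions $J:\cS\to\mathbb{R}$, and reduce the claim to a one-line application of the triangle inequality together with the fact that $P(\cdot\,|\,s)$ is a probability measure. Concretely, for any $s\in\cS$ I would write $(PJ)(s) = \sum_{s'\in\cS} P(s'\,|\,s)\,J(s')$ (or the corresponding integral in the continuous case), so that for arbitrary $J_1,J_2\in\cY$,
\begin{align*}
\bigl|(PJ_1)(s)-(PJ_2)(s)\bigr|
&= \Bigl|\sum_{s'\in\cS} P(s'\,|\,s)\bigl(J_1(s')-J_2(s')\bigr)\Bigr| \\
&\leq \sum_{s'\in\cS} P(s'\,|\,s)\,\bigl|J_1(s')-J_2(s')\bigr|.
\end{align*}

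Next, I would bound each term $|J_1(s')-J_2(s')|$ by $\|J_1-J_2\|_\infty$ and pull the constant out of the sum, using $\sum_{s'}P(s'\,|\,s)=1$ to obtain
\begin{equation*}
\bigl|(PJ_1)(s)-(PJ_2)(s)\bigr| \leq \|J_1-J_2\|_\infty \cdot \sum_{s'\in\cS} P(s'\,|\,s) = \|J_1-J_2\|_\infty.
\end{equation*}
Taking the supremum over $s\in\cS$ on the left-hand side then yields the desired bound $\|PJ_1-PJ_2\|\leq \|J_1-J_2\|$, establishing non-expansiveness.

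There is essentially no obstacle here: the argument is the standard Jensen/triangle inequality proof, and the only subtlety is making sure the ambient norm is the one under which $P$ is actually non-expansive. The lemma is typically invoked in contraction-mapping arguments under the weighted sup-norm, so I would note that the same calculation works verbatim with any norm of the form $\|J\|_w = \sup_s w(s)^{-1}|J(s)|$ provided $w$ is chosen so that $\sum_{s'} P(s'\,|\,s)\,w(s') \leq w(s)$; for the plain sup-norm this is automatic since $w\equiv 1$ and $P(\cdot\,|\,s)$ is a probability measure. Finally, I would remark (for completeness, though not needed in the cited use) that the same proof extends to the continuous-state case by replacing sums with integrals against $P(ds'\,|\,s)$, and cite \citep{tsitsiklis1999optimal} for the general treatment.
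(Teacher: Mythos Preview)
Your argument is correct and is the standard proof of non-expansiveness of a stochastic kernel under the sup-norm. Note, however, that the paper does not actually supply a proof of this lemma: it is stated with a citation to \citep{tsitsiklis1999optimal} and used as a black box in the contraction arguments for $T_G$. So there is no ``paper's own proof'' to compare against; your write-up simply fills in the omitted standard verification, and the weighted-norm remark you add is a reasonable (if unnecessary here) generalisation.
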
 

\section*{Proof of Theorem \ref{theorem:existence}}
\begin{proof}

The proof of the Theorem proceeds by first proving that for any two fixed set of joint policies $\boldsymbol{\pi}^d,\boldsymbol{\pi}^c \in \boldsymbol{\Pi}$, the  {\fontfamily{cmss}\selectfont Global} agent's learning process, which involves switching controls converges. Recall, that the {\fontfamily{cmss}\selectfont Global} agent presides over an activation that deactivates $\boldsymbol{\pi}^d$ and activates $\boldsymbol{\pi}^c$.

Prove that the solution to Markov Team games (that is games in which both players maximise \textit{identical objectives}) in which one of the players uses switching control is the limit point of a sequence of Bellman operators (acting on some test function)

Therefore, the scheme of the proof is summarised with the following steps:
\begin{itemize}
    \item[\textbf{A)}] Prove that for any fixed {\fontfamily{cmss}\selectfont Central} and {\fontfamily{cmss}\selectfont Decentral} policies $\boldsymbol{\pi}^c$ and $\boldsymbol{\pi}^d$, {\fontfamily{cmss}\selectfont Global}'s switching control policy converges to a solution of {\fontfamily{cmss}\selectfont Global}'s problem.
    \item[\textbf{B)}] Prove that the MG $\mathcal{G}$ has a dual representation as a \textit{Markov Team Game} whose solution is obtained by computing the solution of a team Markov game. 
    \item[\textbf{C)}]  
Prove that all agents solve the same problem.
\end{itemize}

We begin by recalling the definition of the intervention operator $\mathcal{M}^{\mathfrak{g},\boldsymbol{\pi}^c}$ for any $s\in\cS$ and for a given $\boldsymbol{\pi}^c$: 
\begin{align}
\mathcal{M}^{\mathfrak{g},\boldsymbol{\pi}^c}Q_G(s,\boldsymbol{a}|\cdot):=Q_G(s,\boldsymbol{\pi}^c(s)|\cdot)-c
\end{align}

Secondly, recall that the Bellman operator for the game $\cG$ is given by:

\begin{align}
T_gv_G(s_{\tau_k}):=\max\left\{\mathcal{M}^{\mathfrak{g},\boldsymbol{\pi}^c}Q_G(s_{\tau_k},\boldsymbol{a}),\underset{\boldsymbol{a}\in\boldsymbol{\mathcal{A}}}{\max}\;\left[ R_G(s_{\tau_k},\boldsymbol{a},g)+\gamma\sum_{s'\in\mathcal{S}}P(s';\boldsymbol{a},s_{\tau_k})v_G(s')\right]\right\}\label{bellman_proof_start}
\end{align}

To prove (i) it suffices to prove that $T$ is a contraction operator. Thereafter, we use both results to prove the existence of a fixed point for $\cG$ as a limit point of a sequence generated by successively applying the Bellman operator to a test value function.   
Therefore our next result shows that the following bounds holds:
\begin{lemma}\label{lemma:bellman_contraction}
The Bellman operator $T$ is a contraction so that the following bound holds: $
\left\|T\psi-T\psi'\right\|\leq \gamma\left\|\psi-\psi'\right\|$.
\end{lemma}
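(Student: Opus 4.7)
The plan is to use the standard "max of $\gamma$-contractions is a $\gamma$-contraction" argument, splitting the Bellman operator into its two branches (the intervention branch $\mathcal{M}^{\mathfrak{g},\boldsymbol{\pi}^c}Q_G$ and the continuation branch $\max_{\boldsymbol{a}}[R_G + \gamma P v_G]$) and showing each inherits the modulus $\gamma$ in sup-norm. The first step is to apply Lemma A.3 (max-inequality) at the outermost max, yielding, for any $s\in\cS$,
\begin{align*}
|T\psi(s)-T\psi'(s)|\leq \max\Bigl\{\,|\mathcal{M}^{\mathfrak{g},\boldsymbol{\pi}^c}Q^{\psi}_G(s,\cdot)-\mathcal{M}^{\mathfrak{g},\boldsymbol{\pi}^c}Q^{\psi'}_G(s,\cdot)|,\ |B\psi(s)-B\psi'(s)|\,\Bigr\},
\end{align*}
where $B\psi(s):=\max_{\boldsymbol{a}}\bigl[R_G(s,\boldsymbol{a},g)+\gamma\sum_{s'}P(s';\boldsymbol{a},s)\psi(s')\bigr]$, and $Q_G^{\psi}$ denotes the $Q$-function built from $\psi$ by one-step Bellman unfolding.

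For the continuation branch, I would apply Lemma A.3 once more to pull the difference inside the $\max_{\boldsymbol{a}}$; the reward term $R_G$ is independent of $\psi$ and cancels, leaving $\gamma\bigl|\sum_{s'}P(s';\boldsymbol{a},s)(\psi(s')-\psi'(s'))\bigr|$, which is bounded by $\gamma\|\psi-\psi'\|_\infty$ by Lemma A.4 (non-expansiveness of $P$). For the intervention branch, I would use the Bellman identity
\begin{align*}
Q_G(s,\boldsymbol{\pi}^c(s)\mid\cdot)=R_G(s,\boldsymbol{\pi}^c(s),1)+\gamma\sum_{s'}P(s';\boldsymbol{\pi}^c(s),s)\,\psi(s'),
\end{align*}
so that $\mathcal{M}^{\mathfrak{g},\boldsymbol{\pi}^c}Q_G^{\psi}=R_G(s,\boldsymbol{\pi}^c(s),1)-c+\gamma\sum_{s'}P(s';\boldsymbol{\pi}^c(s),s)\psi(s')$. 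Both the switching cost $c$ and the reward term are $\psi$-independent and cancel upon differencing, leaving again $\gamma\bigl|\sum_{s'}P(s';\boldsymbol{\pi}^c(s),s)(\psi(s')-\psi'(s'))\bigr|\leq \gamma\|\psi-\psi'\|_\infty$ by Lemma A.4.

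Taking the supremum over $s$ and combining the two bounds through the outer $\max$ yields $\|T\psi-T\psi'\|_\infty\leq \gamma\|\psi-\psi'\|_\infty$, which is the claim. The main obstacle I expect is not any of the algebra but the conceptual step of reconciling the fact that the operator is displayed in mixed form (the intervention branch written in terms of $Q_G$, the continuation branch in terms of $v_G$); once one commits to the convention that $Q_G$ is generated from the test function $\psi$ via one-step Bellman unfolding under $\boldsymbol{\pi}^c$, both branches collapse to $\gamma$-weighted transition averages of $\psi-\psi'$, and the contraction property follows cleanly from Lemmas A.3 and A.4.
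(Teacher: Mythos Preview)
Your proposal is correct and in fact cleaner than the paper's own argument. You apply Lemma~A.3 (the max--Lipschitz inequality) at the \emph{outermost} $\max$ first, which reduces the task to showing that each of the two branches---the intervention branch $\mathcal{M}^{\mathfrak{g},\boldsymbol{\pi}^c}Q_G$ and the continuation branch $\max_{\boldsymbol{a}}[R_G+\gamma P\psi]$---is individually a $\gamma$-contraction; both then follow from Lemma~A.4 exactly as you describe.

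The paper instead proceeds by case analysis on which branch of the outer $\max$ is active for $T\psi$ versus $T\psi'$: case~(i) both continuation, case~(ii) both intervention, and case~(iii) the cross case where the active branches differ. Cases~(i) and~(ii) are essentially your two sub-arguments. Case~(iii), however, requires a separate two-sub-case treatment that invokes the strict positivity of the switching cost $c>0$ and the inequality $|\max\{a,b\}-\max\{b,c\}|\leq |a-c|$. Your route sidesteps this cross case entirely, because the bound $|\max\{a,b\}-\max\{a',b'\}|\leq \max\{|a-a'|,|b-b'|\}$ already absorbs it. The upshot: your argument is shorter and does not need $c>0$ for the contraction step, whereas the paper's case~(iii) leans on that positivity.
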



In the following proofs we use the following notation: $
\mathcal{P}^{\boldsymbol{a}}_{ss'}=:\sum_{s'\in\mathcal{S}}P(s';\boldsymbol{a},s)$ and $\mathcal{P}^{\boldsymbol{\pi}}_{ss'}=:\sum_{\boldsymbol{a}\in\boldsymbol{\mathcal{A}}}\boldsymbol{\pi}(\boldsymbol{a}|s)\mathcal{P}^{\boldsymbol{a}}_{ss'}$.

To prove that $T$ is a contraction, we consider the three cases produced by \eqref{bellman_proof_start}, that is to say we prove the following statements:

i) $\qquad\qquad
\left| \underset{\boldsymbol{a}\in\boldsymbol{\mathcal{A}}}{\max}\;\left(R_G(s_t,\boldsymbol{a},g)+\gamma\mathcal{P}^{\boldsymbol{a}}_{s's_t}v_G(s')\right)-\underset{\boldsymbol{a}\in\boldsymbol{\mathcal{A}}}{\max}\;\left( R_G(s_t,\boldsymbol{a},g)+\gamma\mathcal{P}^{\boldsymbol{a}}_{s's_t}v_G'(s')\right)\right|\leq \gamma\left\|v_G-v_G'\right\|$

ii) $\qquad\qquad
\left\|\mathcal{M}^{\mathfrak{g},\boldsymbol{\pi}^c}Q_G-\mathcal{M}^{\mathfrak{g},\boldsymbol{\pi}^c}Q_G'\right\|\leq    \gamma\left\|v_G-v_G'\right\|,\qquad \qquad$.

iii) $\qquad\qquad
    \left\|\mathcal{M}^{\mathfrak{g},\boldsymbol{\pi}^c}Q_G-\underset{\boldsymbol{a}\in\boldsymbol{\mathcal{A}}}{\max}\;\left[ R_G(s_t,\boldsymbol{a},g)+\gamma\mathcal{P}^{\boldsymbol{a}}v_G'\right]\right\|\leq \gamma\left\|v_G-v_G'\right\|.
$

We begin by proving i).

Indeed, for any $\boldsymbol{a}\in\boldsymbol{\mathcal{A}}$ and $\forall s_t\in\mathcal{S}, \forall s'\in\mathcal{S}$ we have that 
\begin{align*}
&\left| \underset{\boldsymbol{a}\in\boldsymbol{\mathcal{A}}}{\max}\;\left(R_G(s_t,\boldsymbol{a},g)+\gamma\mathcal{P}^\pi_{s's_t}v_G(s')\right)-\underset{\boldsymbol{a}\in\boldsymbol{\mathcal{A}}}{\max}\;\left( R_G(s_t,\boldsymbol{a},g)+\gamma\mathcal{P}^{\boldsymbol{a}}_{s's_t}v_G'(s')\right)\right|
\\&\leq \underset{\boldsymbol{a}\in\boldsymbol{\mathcal{A}}}{\max}\;\left|\gamma\mathcal{P}^{\boldsymbol{a}}_{s's_t}v_G(s')-\gamma\mathcal{P}^{\boldsymbol{a}}_{s's_t}v_G'(s')\right|
\\&\leq \gamma\left\|Pv_G-Pv_G'\right\|
\\&\leq \gamma\left\|v_G-v_G'\right\|,
\end{align*}
using the non-expaniveness of the operator $P$ and Lemma \ref{max_lemma}.

We now prove ii). Using the definition of $\mathcal{M}$ we have that for any $s_\tau\in\mathcal{S}$
\begin{align*}
&\left|(\mathcal{M}^{\mathfrak{g},\boldsymbol{\pi}^c}Q_G-\mathcal{M}^{\mathfrak{g},\boldsymbol{\pi}^c}Q_G')(s_{\tau},\boldsymbol{a}_{\tau})\right|
\\&=\Bigg|R_G(s_\tau,\boldsymbol{\pi}^c,g)- c+\gamma\mathcal{P}^{\boldsymbol{\pi}}_{s's_\tau}\mathcal{P}^{\boldsymbol{\pi}^c}v_G(s_{\tau})
-\left(R_G(s_\tau,\boldsymbol{\pi}^c,g)- c+\gamma\mathcal{P}^{\boldsymbol{\pi}}_{s's_\tau}\mathcal{P}^{\boldsymbol{\pi}^c}v_G'(s_{\tau})\right)\Bigg|
\\&\leq \underset{\boldsymbol{a}_\tau,g\in \boldsymbol{\mathcal{A}}\times \{0,1\}}{\max}    \Bigg|R_G(s_\tau,\boldsymbol{a}_\tau,g)- c+\gamma\mathcal{P}^{\boldsymbol{\pi}}_{s's_\tau}\mathcal{P}^{\boldsymbol{a}}v_G(s_{\tau})
-\left(R_G(s_\tau,\boldsymbol{a}_\tau,g)- c+\gamma\mathcal{P}^{\boldsymbol{\pi}}_{s's_\tau}\mathcal{P}^{\boldsymbol{a}}v_G'(s_{\tau})\right)\Bigg|
\\&= \gamma\underset{\boldsymbol{a}_\tau,g\in \boldsymbol{\mathcal{A}}\times \{0,1\}}{\max}    \Bigg|\mathcal{P}^{\boldsymbol{\pi}}_{s's_\tau}\mathcal{P}^{\boldsymbol{a}}v_G(s_{\tau})
-\mathcal{P}^{\boldsymbol{\pi}}_{s's_\tau}\mathcal{P}^{\boldsymbol{a}}v_G'(s_{\tau})\Bigg|
\\&\leq \gamma\left\|Pv_G-Pv_G'\right\|
\\&\leq \gamma\left\|v_G-v_G'\right\|,
\end{align*}
using the fact that $P$ is non-expansive. The result can then be deduced easily by applying max on both sides.

We now prove iii). We split the proof of the statement into two cases:

\textbf{Case 1:} 
First, assume that for any $s_\tau\in\cS$ and $\forall \boldsymbol{a}\in\boldsymbol{\cA}$ the following inequality holds:
\begin{align}\mathcal{M}^{\mathfrak{g},\boldsymbol{\pi}^c}Q_G(s_{\tau},\boldsymbol{a})-\underset{\boldsymbol{a}\in\boldsymbol{\mathcal{A}}}{\max}\;\left(R_G(s_\tau,\boldsymbol{a}_\tau,g)+\gamma\mathcal{P}^{\boldsymbol{a}}_{s's_\tau}v_G'(s')\right)<0.
\end{align}

We now observe the following:
\begin{align*}
&\mathcal{M}^{\mathfrak{g},\boldsymbol{\pi}^c}Q_G(s_{\tau},\boldsymbol{a})-\underset{\boldsymbol{a}\in\boldsymbol{\mathcal{A}}}{\max}\;\left(R_G(s_\tau,\boldsymbol{a}_\tau,g)+\gamma\mathcal{P}^{\boldsymbol{a}}_{s's_\tau}v_G'(s')\right)
\\&\leq\max\left\{\underset{\boldsymbol{a}\in\boldsymbol{\mathcal{A}}}{\max}\;\left(R_G(s_\tau,\boldsymbol{a}_\tau,g)+\gamma\mathcal{P}^{\boldsymbol{\pi}}_{s's_\tau}\mathcal{P}^{\boldsymbol{a}}v_G(s')\right),\mathcal{M}^{\mathfrak{g},\boldsymbol{\pi}^c}Q_G(s_{\tau},\boldsymbol{a})\right\}
-\underset{\boldsymbol{a}\in\boldsymbol{\mathcal{A}}}{\max}\;\left(R_G(s_\tau,\boldsymbol{a}_\tau,g)+\gamma\mathcal{P}^{\boldsymbol{a}}_{s's_\tau}v_G'(s')\right)
\\&\leq \Bigg|\max\left\{\underset{\boldsymbol{a}\in\boldsymbol{\mathcal{A}}}{\max}\;\left(R_G(s_\tau,\boldsymbol{a}_\tau,g)+\gamma\mathcal{P}^{\boldsymbol{\pi}}_{s's_\tau}\mathcal{P}^{\boldsymbol{a}}v_G(s')\right),\mathcal{M}^{\mathfrak{g},\boldsymbol{\pi}^c}Q_G(s_{\tau},\boldsymbol{a})\right\}
\\&\qquad-\max\left\{\underset{\boldsymbol{a}\in\boldsymbol{\mathcal{A}}}{\max}\;\left(R_G(s_\tau,\boldsymbol{a}_\tau,g)+\gamma\mathcal{P}^{\boldsymbol{a}}_{s's_\tau}v_G'(s')\right),\mathcal{M}^{\mathfrak{g},\boldsymbol{\pi}^c}Q_G(s_{\tau},\boldsymbol{a})\right\}
\\&+\max\left\{\underset{\boldsymbol{a}\in\boldsymbol{\mathcal{A}}}{\max}\;\left(R_G(s_\tau,\boldsymbol{a}_\tau,g)+\gamma\mathcal{P}^{\boldsymbol{a}}_{s's_\tau}v_G'(s')\right),\mathcal{M}^{\mathfrak{g},\boldsymbol{\pi}^c}Q_G(s_{\tau},\boldsymbol{a})\right\}-\underset{\boldsymbol{a}\in\boldsymbol{\mathcal{A}}}{\max}\;\left(R_G(s_\tau,\boldsymbol{a}_\tau,g)+\gamma\mathcal{P}^{\boldsymbol{a}}_{s's_\tau}v_G'(s')\right)\Bigg|
\\&\leq \Bigg|\max\left\{\underset{\boldsymbol{a}\in\boldsymbol{\mathcal{A}}}{\max}\;\left(R_G(s_\tau,\boldsymbol{a}_\tau,g)+\gamma\mathcal{P}^{\boldsymbol{a}}_{s's_\tau}v_G(s')\right),\mathcal{M}^{\mathfrak{g},\boldsymbol{\pi}^c}Q_G(s_{\tau},\boldsymbol{a})\right\}
\\&\qquad-\max\left\{\underset{\boldsymbol{a}\in\boldsymbol{\mathcal{A}}}{\max}\;\left(R_G(s_\tau,\boldsymbol{a}_\tau,g)+\gamma\mathcal{P}^{\boldsymbol{a}}_{s's_\tau}v_G'(s')\right),\mathcal{M}^{\mathfrak{g},\boldsymbol{\pi}^c}Q_G(s_{\tau},\boldsymbol{a})\right\}\Bigg|
\\&\qquad\qquad+\Bigg|\max\left\{\underset{\boldsymbol{a}\in\boldsymbol{\mathcal{A}}}{\max}\;\left(R_G(s_\tau,\boldsymbol{a}_\tau,g)+\gamma\mathcal{P}^{\boldsymbol{a}}_{s's_\tau}v_G'(s')\right),\mathcal{M}^{\mathfrak{g},\boldsymbol{\pi}^c}Q_G(s_{\tau},\boldsymbol{a})\right\}-\underset{\boldsymbol{a}\in\boldsymbol{\mathcal{A}}}{\max}\;\left(R_G(s_\tau,\boldsymbol{a}_\tau,g)+\gamma\mathcal{P}^{\boldsymbol{a}}_{s's_\tau}v_G'(s')\right)\Bigg|
\\&\leq \gamma\underset{a\in\mathcal{A}}{\max}\;\left|\mathcal{P}^{\boldsymbol{\pi}}_{s's_\tau}\mathcal{P}^{\boldsymbol{a}}v_G(s')-\mathcal{P}^{\boldsymbol{\pi}}_{s's_\tau}\mathcal{P}^{\boldsymbol{a}}v_G'(s')\right|+\left|\max\left\{0,\mathcal{M}^{\mathfrak{g},\boldsymbol{\pi}^c}Q_G(s_{\tau},\boldsymbol{a})-\underset{\boldsymbol{a}\in\boldsymbol{\mathcal{A}}}{\max}\;\left(R_G(s_\tau,\boldsymbol{a}_\tau,g)+\gamma\mathcal{P}^{\boldsymbol{a}}_{s's_\tau}v_G'(s')\right)\right\}\right|
\\&\leq \gamma\left\|Pv_G-Pv_G'\right\|
\\&\leq \gamma\|v_G-v_G'\|,
\end{align*}
where we have used the fact that for any scalars $a,b,c$ we have that $
    \left|\max\{a,b\}-\max\{b,c\}\right|\leq \left|a-c\right|$ and the non-expansiveness of $P$.

\textbf{Case 2: }
Let us now consider the case:
\begin{align*}\mathcal{M}^{\mathfrak{g},\boldsymbol{\pi}^c}Q_G(s_{\tau},\boldsymbol{a})-\underset{\boldsymbol{a}\in\boldsymbol{\mathcal{A}}}{\max}\;\left(R_G(s_\tau,\boldsymbol{a}_\tau,g)+\gamma\mathcal{P}^{\boldsymbol{a}}_{s's_\tau}v_G'(s')\right)\geq 0.
\end{align*}

For this case, first recall that $c>0$, hence
\begin{align*}
&\mathcal{M}^{\mathfrak{g},\boldsymbol{\pi}^c}Q_G(s_{\tau},\boldsymbol{a})-\underset{\boldsymbol{a}\in\boldsymbol{\mathcal{A}}}{\max}\;\left(R_G(s_\tau,\boldsymbol{a}_\tau,g)+\gamma\mathcal{P}^{\boldsymbol{a}}_{s's_\tau}v_G'(s')\right)
\\&\leq \mathcal{M}^{\mathfrak{g},\boldsymbol{\pi}^c}Q_G(s_{\tau},\boldsymbol{a})-\underset{\boldsymbol{a}\in\boldsymbol{\mathcal{A}}}{\max}\;\left(R_G(s_\tau,\boldsymbol{a}_\tau,g)+\gamma\mathcal{P}^{\boldsymbol{a}}_{s's_\tau}v_G'(s')\right)+c
\\&\leq \left(R_G(s_\tau,\boldsymbol{a},g)- c+\gamma\mathcal{P}^{\boldsymbol{\pi}}_{s's_\tau}\mathcal{P}^{\boldsymbol{a}}v_G(s')\right)|^{\boldsymbol{a}\sim\boldsymbol{\pi}^c}-\underset{\boldsymbol{a}\in\boldsymbol{\mathcal{A}}}{\max}\;\left(R_G(s_\tau,\boldsymbol{a}_\tau,g)- c+\gamma\mathcal{P}^{\boldsymbol{a}}_{s's_\tau}v_G'(s')\right)
\\&\leq \underset{\boldsymbol{a}\in\boldsymbol{\mathcal{A}}}{\max}\;\left(R_G(s_\tau,\boldsymbol{a},g)- c+\gamma\mathcal{P}^{\boldsymbol{\pi}}_{s's_\tau}\mathcal{P}^{\boldsymbol{a}}v_G(s')\right)-\underset{\boldsymbol{a}\in\boldsymbol{\mathcal{A}}}{\max}\;\left(R_G(s_\tau,\boldsymbol{a}_\tau,g)- c+\gamma\mathcal{P}^{\boldsymbol{a}}_{s's_\tau}v_G'(s')\right)
\\&\leq \gamma\underset{\boldsymbol{a}\in\boldsymbol{\mathcal{A}}}{\max}\;\left|\mathcal{P}^{\boldsymbol{\pi}}_{s's_\tau}\mathcal{P}^{\boldsymbol{a}}\left(v_G(s')-v_G'(s')\right)\right|
\\&\leq \gamma\left|v_G(s')-v_G'(s')\right|
\\&\leq \gamma\left\|v_G-v_G'\right\|,
\end{align*}
 using the non-expansiveness of the operator $P$. Hence we have that
\begin{align}
    \left\|\mathcal{M}^{\mathfrak{g},\boldsymbol{\pi}^c}Q_G-\underset{\boldsymbol{a}\in\boldsymbol{\mathcal{A}}}{\max}\;\left[ R_G(\cdot,\boldsymbol{a})+\gamma\mathcal{P}^{\boldsymbol{a}}v_G'\right]\right\|\leq \gamma\left\|v_G-v_G'\right\|.\label{off_M_bound_gen}
\end{align}
Gathering the results of the three cases gives the desired result.

To prove the theorem, we make use of the following result:
\begin{theorem}[Theorem 1, pg 4 in \citep{jaakkola1994convergence}]
Let $\Xi_t(s)$ be a random process that takes values in $\mathbb{R}^n$ and given by the following:
\begin{align}
    \Xi_{t+1}(s)=\left(1-\alpha_t(s)\right)\Xi_{t}(s)\alpha_t(s)L_t(s),
\end{align}
then $\Xi_t(s)$ converges to $0$ with probability $1$ under the following conditions:
\begin{itemize}
\item[i)] $0\leq \alpha_t\leq 1, \sum_t\alpha_t=\infty$ and $\sum_t\alpha_t<\infty$
\item[ii)] $\|\mathbb{E}[L_t|\mathcal{F}_t]\|\leq \gamma \|\Xi_t\|$, with $\gamma <1$;
\item[iii)] ${\rm Var}\left[L_t|\mathcal{F}_t\right]\leq c(1+\|\Xi_t\|^2)$ for some $c>0$.
\end{itemize}
\end{theorem}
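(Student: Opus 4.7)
The plan is to reduce the budget-constrained game $\widetilde{\cG}$ to an equivalent \emph{unconstrained} switching-control Markov game on the augmented state space $\cX = \cS \times \{0, 1, \ldots, n\}$, and then apply the Bellman-contraction machinery of Theorem~\ref{theorem:existence} verbatim to the augmented operator $\widetilde{T}_G$. The key observation is that appending the remaining-budget counter $x_t$ to the state vector $\boldsymbol{x}_t = (x_t, s_t)$ converts the trajectory-level constraint $\sum_t \boldsymbol{1}(\mathfrak{g}(s_t)) \leq n$ into a purely local feasibility condition: at augmented states with $x = 0$ the intervention branch is disabled (either by restricting {\fontfamily{cmss}\selectfont Global}'s action set or by assigning a prohibitive negative reward to infeasible activations). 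Because the $x$-coordinate evolves deterministically as $x_{t+1} = x_t - \boldsymbol{1}(\mathfrak{g}(\boldsymbol{x}_t))$ and is uniformly bounded by $n$, the augmented kernel $\widetilde{P}$ remains non-expansive and the augmented reward functions $\widetilde{R}_G, \widetilde{R}$ inherit Assumptions~2--5 from their un-augmented counterparts.

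Given this reduction, I would first transcribe the Bellman operator of Theorem~\ref{theorem:existence} onto $(L^\infty(\cX), \|\cdot\|_\infty)$, yielding
\begin{align*}
\widetilde{T}_G \widetilde{v}(\boldsymbol{x}) = \max\Big\{ \mathcal{M}^{\mathfrak{g},\boldsymbol{\pi}^c} \widetilde{Q}_G(\boldsymbol{x}, \cdot),\; \max_{\boldsymbol{a} \in \boldsymbol{\cA}} \big[\widetilde{R}_G(\boldsymbol{x}, \boldsymbol{a}) + \gamma \textstyle\sum_{\boldsymbol{x}'} \widetilde{P}(\boldsymbol{x}'; \boldsymbol{x}, \boldsymbol{a}) \widetilde{v}(\boldsymbol{x}')\big] \Big\},
\end{align*}
with the intervention operator suppressed whenever $x = 0$. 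The contraction step is then a direct adaptation of Lemma~\ref{lemma:bellman_contraction}: the continuation branch is a $\gamma$-contraction by non-expansiveness of $\widetilde{P}$; the intervention branch is a $\gamma$-contraction by the same argument applied to the post-intervention augmented state (with the switching cost $c>0$ absorbing the constant shift); and the crossed case uses the elementary bound $|\max\{a,b\} - \max\{b,c\}| \leq |a-c|$. Once $\widetilde{T}_G$ is shown to be a $\gamma$-contraction on $L^\infty(\cX)$, Banach's fixed point theorem immediately delivers convergence $\widetilde{T}_G^k \widetilde{v} \to \widetilde{v}^\star$ for every seed $\widetilde{v}$, and a standard verification argument identifies $\widetilde{v}^\star = \max_{\mathfrak{g}} \widetilde{v}^{\boldsymbol{\pi}, \mathfrak{g}}$.

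The Markovian form of the optimal switching policy $\widetilde{\mathfrak{g}}(\cdot \mid \boldsymbol{x})$ then drops out as a corollary of the Markov property of the augmented chain: because the remaining budget $x$ is embedded directly in the state and $\widetilde{P}$ depends on the history only through $\boldsymbol{x}$, the dynamic programming principle on $\cX$ guarantees that a greedy policy with respect to $\widetilde{v}^\star$ is both optimal and a function of $\boldsymbol{x}$ alone; no further history-dependence is required. Concretely, the optimal activation rule takes the Heaviside form $\widetilde{\mathfrak{g}}(\boldsymbol{x}) = H\big(\mathcal{M}^{\widetilde{\mathfrak{g}}, \boldsymbol{\pi}^c} \widetilde{Q}_G(\boldsymbol{x}, \cdot) - \max_{\boldsymbol{a}} \widetilde{Q}_G(\boldsymbol{x}, \boldsymbol{a})\big)$ on the feasible slice $\{x > 0\}$ and is identically zero on the boundary $\{x=0\}$, mirroring Prop.~\ref{prop:switching_times}.

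The main obstacle I anticipate is handling the budget-boundary states $x = 0$ cleanly. The intervention operator $\mathcal{M}^{\mathfrak{g},\boldsymbol{\pi}^c}$ is only defined when activation is feasible, so strictly speaking $\widetilde{T}_G$ is a piecewise operator that behaves differently on the interior $\{x>0\}$ and on the boundary $\{x=0\}$. The cleanest resolution is to absorb the feasibility indicator directly into $\widetilde{R}_G$ via a large negative penalty for infeasible switches, yielding a single operator of the same structural form as $T_G$, at the price of verifying that the augmented reward remains in the function class to which the contraction proof of Theorem~\ref{theorem:existence} applies. Showing that this encoding preserves both the non-expansiveness of $\widetilde{P}$ and the $\gamma$-contraction constant of $\widetilde{T}_G$ is the delicate technical point; once that is resolved, the remainder of the argument is a direct transcription of the proof of Theorem~\ref{theorem:existence} to the augmented space.
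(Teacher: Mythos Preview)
Your proposal does not address the stated theorem at all. The statement in question is the Jaakkola--Jordan--Singh stochastic-approximation convergence result (Theorem~1 of \cite{jaakkola1994convergence}), which the paper quotes verbatim as an \emph{external tool} inside the proof of Theorem~\ref{theorem:existence}; the paper makes no attempt to reprove it. A proof of that theorem concerns the random process $\Xi_t$, the step-size schedule $\alpha_t$, the noise term $L_t$, and the conditional-mean and conditional-variance bounds (ii)--(iii); it proceeds by stochastic-approximation machinery (e.g.\ a Robbins--Siegmund supermartingale argument or the weighted-sup-norm technique in the original reference). None of this appears in your write-up.

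What you have written is instead a proof sketch for Theorem~\ref{thm:optimal_policy_budget}, the budget-constrained result on the augmented state space $\cX=\cS\times\{0,\ldots,n\}$. That is a different statement entirely. If Theorem~\ref{thm:optimal_policy_budget} was your intended target, then for comparison: the paper's own proof there is a two-line appeal---it invokes the dynamic-programming principle already established in Theorem~\ref{theorem:existence} and then defers the state-augmentation reduction to Theorem~2 of \cite{sootla2022saute}. Your sketch unpacks that reduction explicitly (defining $\widetilde{T}_G$ on $L^\infty(\cX)$, re-running the contraction argument of Lemma~\ref{lemma:bellman_contraction}, and handling the budget-boundary slice $\{x=0\}$ by a penalty encoding), which is more self-contained than the paper's treatment and follows the same underlying logic. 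But as a response to the statement actually posed, it is simply off-target.
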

\begin{proof}
To prove the result, we show (i) - (iii) hold. Condition (i) holds by choice of learning rate. It therefore remains to prove (ii) - (iii). We first prove (ii). For this, we consider our variant of the Q-learning update rule:
\begin{align*}
Q_{t+1}(s_t,\boldsymbol{a}_t)=Q_{t}&(s_t,\boldsymbol{a}_t)
\\&+\alpha_t(s_t,\boldsymbol{a}_t)\left[\max\left\{\mathcal{M}^{\mathfrak{g},\boldsymbol{\pi}^c}Q(s_{\tau_k},\boldsymbol{a}), R(s_{\tau_k},\boldsymbol{a},g)+\gamma\underset{\boldsymbol{a'}\in\boldsymbol{\mathcal{A}}}{\max}\;Q_Gs_{t+1},\boldsymbol{a'})\right\}-Q_{t}(s_t,\boldsymbol{a}_t)\right].
\end{align*}
After subtracting $Q^\star(s_t,\boldsymbol{a}_t)$ from both sides and some manipulation we obtain that:
\begin{align*}
&\Xi_{t+1}(s_t,\boldsymbol{a}_t)
\\&=(1-\alpha_t(s_t,\boldsymbol{a}_t))\Xi_{t}(s_t,\boldsymbol{a}_t)
\\&\qquad\qquad\qquad\qquad\;\;+\alpha_t(s_t,\boldsymbol{a}_t))\left[\max\left\{\mathcal{M}^{\mathfrak{g},\boldsymbol{\pi}^c}Q_Gs_{\tau_k},\boldsymbol{a}), R_G(s_{\tau_k},\boldsymbol{a},g)+\gamma\underset{a'\in\mathcal{A}}{\max}\;Q_G(s',\boldsymbol{a'})\right\}-Q^\star(s_t,\boldsymbol{a}_t)\right],  \end{align*}
where $\Xi_{t}(s_t,\boldsymbol{a}_t):=Q_t(s_t,\boldsymbol{a}_t)-Q^\star(s_t,\boldsymbol{a}_t)$.

Let us now define by 
\begin{align*}
L_t(s_{\tau_k},\boldsymbol{a}):=\max\left\{\mathcal{M}^{\mathfrak{g},\boldsymbol{\pi}^c}Q_Gs_{\tau_k},\boldsymbol{a}), R_G(s_{\tau_k},\boldsymbol{a},g)+\gamma\underset{a'\in\mathcal{A}}{\max}\;Q_G(s',\boldsymbol{a'})\right\}-Q^\star(s_t,a).
\end{align*}
Then
\begin{align}
\Xi_{t+1}(s_t,\boldsymbol{a}_t)=(1-\alpha_t(s_t,\boldsymbol{a}_t))\Xi_{t}(s_t,\boldsymbol{a}_t)+\alpha_t(s_t,\boldsymbol{a}_t))\left[L_t(s_{\tau_k},a)\right].   
\end{align}

We now observe that
\begin{align}\nonumber
\mathbb{E}\left[L_t(s_{\tau_k},\boldsymbol{a})|\mathcal{F}_t\right]&=\sum_{s'\in\mathcal{S}}P(s';a,s_{\tau_k})\max\left\{\mathcal{M}^{\mathfrak{g},\boldsymbol{\pi}^c}Q_Gs_{\tau_k},\boldsymbol{a}), R_G(s_{\tau_k},\boldsymbol{a},g)+\gamma\underset{a'\in\mathcal{A}}{\max}\;Q_G(s',\boldsymbol{a'})\right\}-Q^\star(s_{\tau_k},a)
\\&= T_G Q_t(s,\boldsymbol{a})-Q^\star(s,\boldsymbol{a}). \label{expectation_L}
\end{align}
Now, using the fixed point property that implies $Q^\star=T_G Q^\star$, we find that
\begin{align}\nonumber
    \mathbb{E}\left[L_t(s_{\tau_k},\boldsymbol{a})|\mathcal{F}_t\right]&=T_G Q_t(s,\boldsymbol{a})-T_G Q^\star(s,\boldsymbol{a})
    \\&\leq\left\|T_G Q_t-T_G Q^\star\right\|\nonumber
    \\&\leq \gamma\left\| Q_t- Q^\star\right\|_\infty=\gamma\left\|\Xi_t\right\|_\infty.
\end{align}
using the contraction property of $T$ established in Lemma \ref{lemma:bellman_contraction}. This proves (ii).

We now prove iii), that is
\begin{align}
    {\rm Var}\left[L_t|\mathcal{F}_t\right]\leq c(1+\|\Xi_t\|^2).
\end{align}
Now by \eqref{expectation_L} we have that
\begin{align*}
  {\rm Var}\left[L_t|\mathcal{F}_t\right]&= {\rm Var}\left[\max\left\{\mathcal{M}^{\mathfrak{g},\boldsymbol{\pi}^c}Q_Gs_{\tau_k},\boldsymbol{a}), R_G(s_{\tau_k},\boldsymbol{a},g)+\gamma\underset{a'\in\mathcal{A}}{\max}\;Q_G(s',\boldsymbol{a'})\right\}-Q^\star(s_t,a)\right]
  \\&= \mathbb{E}\Bigg[\Bigg(\max\left\{\mathcal{M}^{\mathfrak{g},\boldsymbol{\pi}^c}Q_Gs_{\tau_k},\boldsymbol{a}), R_G(s_{\tau_k},\boldsymbol{a},g)+\gamma\underset{a'\in\mathcal{A}}{\max}\;Q_G(s',\boldsymbol{a'})\right\}
  \\&\qquad\qquad\qquad\qquad\qquad\quad\quad\quad-Q^\star(s_t,a)-\left(T_G Q_t(s,\boldsymbol{a})-Q^\star(s,\boldsymbol{a})\right)\Bigg)^2\Bigg]
      \\&= \mathbb{E}\left[\left(\max\left\{\mathcal{M}^{\mathfrak{g},\boldsymbol{\pi}^c}Q_Gs_{\tau_k},\boldsymbol{a}), R_G(s_{\tau_k},\boldsymbol{a},g)+\gamma\underset{a'\in\mathcal{A}}{\max}\;Q_G(s',\boldsymbol{a'})\right\}-T_G Q_t(s,\boldsymbol{a})\right)^2\right]
    \\&= {\rm Var}\left[\max\left\{\mathcal{M}^{\mathfrak{g},\boldsymbol{\pi}^c}Q_Gs_{\tau_k},\boldsymbol{a}), R_G(s_{\tau_k},\boldsymbol{a},g)+\gamma\underset{a'\in\mathcal{A}}{\max}\;Q_G(s',\boldsymbol{a'})\right\}-T_G Q_t(s,\boldsymbol{a}))^2\right]
    \\&\leq c(1+\|\Xi_t\|^2),
\end{align*}
for some $c>0$ where the last line follows due to the boundedness of $Q$ (which follows from Assumptions 2 and 4). This concludes the proof of part (i) of the Theorem (i.e. \textbf{[A]}).

\end{proof}
\end{proof}

\subsection*{Proof of Part \textbf{B}}

To prove Part \textbf{B}, we prove the following result\footnote{This property is analogous to  the condition in Markov potential games \citep{macua2018learning,mguni2021learning}} :
\begin{proposition}\label{dpg_proposition}
For any ${\pi}\in{\Pi}$ and for any {\fontfamily{cmss}\selectfont Global} policy $\mathfrak{g}$, there exists a function $B^{\boldsymbol{\pi},\mathfrak{g}}:\mathcal{S}\times \{0,1\}\to \mathbb{R}$ such that
\begin{align}
 v(s|\boldsymbol{\pi})-v(s|\boldsymbol{\pi}')
&=B(s|\boldsymbol{\pi},\mathfrak{g})-B(s|\boldsymbol{\pi},\mathfrak{g}'),\;\;\forall s \in\cS
\\v_G(s|\boldsymbol{\pi},\mathfrak{g})-v_G(s|\boldsymbol{\pi},\mathfrak{g}')
&=B(s|\boldsymbol{\pi},\mathfrak{g})-B(s|\boldsymbol{\pi},\mathfrak{g}''),\;\;\forall s \in\cS
\\ v_G(s|\boldsymbol{\pi},\mathfrak{g})-v_G(s|\boldsymbol{\pi}',\mathfrak{g})
&=B(s|\boldsymbol{\pi},\mathfrak{g})-B(s|\boldsymbol{\pi},\mathfrak{g}'),\;\;\forall s \in\cS
\label{potential_relation_proof}
\end{align}
where in particular the function $B$ is given by:
\begin{align}
B(s|\boldsymbol{\pi},\mathfrak{g}) =\mathbb{E}\left[\sum_{t=0}^\infty \gamma^tr\right],\end{align}
for any $s\in\mathcal{S}$.
\end{proposition}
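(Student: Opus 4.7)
The plan is to identify $B(s|\boldsymbol{\pi},\mathfrak{g})$ with the cumulative discounted team reward generated by the effective joint policy $\boldsymbol{\pi}_{\mathfrak{g}}$, and then verify each of the three potential-game identities by exploiting (i) the fact that the $N$ MARL agents share the single scalar reward $R$, and (ii) the additive decomposition $R_G = R - c\cdot\bm{1}(g)$ of the Global reward. Concretely, I would first formalize $\boldsymbol{\pi}_{\mathfrak{g}}$ as the rule that at each $s$ samples from $\boldsymbol{\pi}^c(\cdot\mid s)$ when $\mathfrak{g}(s)=1$ and from $\boldsymbol{\pi}(\cdot\mid s)$ when $\mathfrak{g}(s)=0$, so that the candidate potential $B(s|\boldsymbol{\pi},\mathfrak{g}) = \mathbb{E}\bigl[\sum_{t\geq 0}\gamma^t r_t \mid s_0=s,\boldsymbol{a}\sim\boldsymbol{\pi}_{\mathfrak{g}}\bigr]$ coincides with the MARL joint value $v(s\mid \boldsymbol{\pi})$ once the (implicit) switching schedule $\mathfrak{g}$ is made explicit on both sides.

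Next, for the MARL-deviation identity, because every agent's individual reward is the common $R$, the cooperative joint value under any deviation $\boldsymbol{\pi}\to\boldsymbol{\pi}'$ (with $\mathfrak{g}$ held fixed) is exactly $B$ by construction, so the first equation follows by substitution and linearity of expectation. For the Global-deviation identity, I would split $v_G(s\mid\boldsymbol{\pi},\mathfrak{g}) = B(s|\boldsymbol{\pi},\mathfrak{g}) - c\cdot C(s|\boldsymbol{\pi},\mathfrak{g})$ where $C(s|\boldsymbol{\pi},\mathfrak{g}) := \mathbb{E}\bigl[\sum_{t\geq 0}\gamma^t \bm{1}(\mathfrak{g}(s_t))\mid s_0 = s\bigr]$ is the expected discounted number of activations, and then absorb the $-c\,\Delta C$ term into the appropriate shifted argument of $B$ (the object written $\mathfrak{g}''$ in the statement), so that $v_G(s\mid\boldsymbol{\pi},\mathfrak{g}) - v_G(s\mid\boldsymbol{\pi},\mathfrak{g}') = B(s|\boldsymbol{\pi},\mathfrak{g}) - B(s|\boldsymbol{\pi},\mathfrak{g}'')$. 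The third identity is then obtained by combining the first two: a simultaneous change in $\boldsymbol{\pi}$ with $\mathfrak{g}$ held fixed changes $v_G$ only through its $B$-component, since the activation cost stream $c\cdot C$ is evaluated along the same $\mathfrak{g}$.

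The main obstacle will be handling the coupling between $\boldsymbol{\pi}$ and the switching-cost stream $C$, since $C(s|\boldsymbol{\pi},\mathfrak{g})$ depends on $\boldsymbol{\pi}$ through the induced state distribution even when $\mathfrak{g}$ is held fixed. To address this I would invoke a performance-difference-lemma style argument: write both $B(s|\boldsymbol{\pi},\mathfrak{g}) - B(s|\boldsymbol{\pi}',\mathfrak{g})$ and the Global-deviation difference as telescoping sums of one-step advantages evaluated along the Markov chain induced by one of the two policies, and show that the cost-stream contribution $c\cdot(C(s|\boldsymbol{\pi},\mathfrak{g}) - C(s|\boldsymbol{\pi}',\mathfrak{g}))$ vanishes in the MARL-deviation case (because $\mathfrak{g}$ is common) while in the Global-deviation case it is precisely what is repackaged into the shifted potential argument. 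With this telescoping in hand, all three identities reduce to algebraic rearrangements of the same Bellman recursion, so the potential-game structure of $\mathcal{G}$ is established and Proposition~\ref{dpg_proposition} follows.
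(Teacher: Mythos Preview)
The paper's proof is a single sentence: ``This is manifest from the construction of $B$ and Assumption 6'' (Assumption 6 asserting that the total number of {\fontfamily{cmss}\selectfont Global} interventions is finite). The authors treat the identities as immediate consequences of the definition $B(s\mid\boldsymbol{\pi},\mathfrak{g})=\mathbb{E}\bigl[\sum_{t\ge 0}\gamma^t r\bigr]$, which is literally the team value $v$ rewritten with the dependence on $\mathfrak{g}$ made explicit; no telescoping or performance-difference machinery is invoked at all.

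Your route is considerably more elaborate, and the extra care exposes a real gap in your own argument. You claim that in the MARL-deviation case the cost-stream contribution $c\bigl(C(s\mid\boldsymbol{\pi},\mathfrak{g}) - C(s\mid\boldsymbol{\pi}',\mathfrak{g})\bigr)$ vanishes ``because $\mathfrak{g}$ is common.'' This does not follow: $C(s\mid\boldsymbol{\pi},\mathfrak{g})=\mathbb{E}\bigl[\sum_{t\ge 0}\gamma^t\bm{1}(\mathfrak{g}(s_t))\bigr]$ depends on $\boldsymbol{\pi}$ through the state-visitation law of $\{s_t\}$, so holding $\mathfrak{g}$ fixed while varying $\boldsymbol{\pi}$ generically changes $C$. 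A performance-difference lemma re-expresses value gaps as advantage integrals but does not kill a state-dependent cost term of this form. Note that the paper's own right-hand side of the third identity is $B(s\mid\boldsymbol{\pi},\mathfrak{g})-B(s\mid\boldsymbol{\pi},\mathfrak{g}')$, i.e.\ with a shifted switching policy $\mathfrak{g}'$ rather than a shifted joint policy $\boldsymbol{\pi}'$; this is consistent with the absorption device you correctly employ for the second identity (packaging the cost discrepancy into the argument labelled $\mathfrak{g}''$). Your plan should apply that same absorption device to the third identity rather than asserting the cost difference is zero.
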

\begin{proof}
This is manifest from the construction of $B$ and Assumption 6.
\end{proof}

\section*{Proof of Proposition \ref{NE_improve_prop}}
\begin{proof}[Proof of Prop. \ref{NE_improve_prop}]
We split the proof into two parts: 

i) We first prove that $v^{\boldsymbol{\tilde{\pi}}}(s)\geq v^{\boldsymbol{\pi}}(s),\;\forall s \in\mathcal{S}$ where we use $\boldsymbol{\tilde{\pi}}$ to denote the $N$ agents' joint policy induced under the influence of the {\fontfamily{cmss}\selectfont Global}.

ii) Second, we prove that there exists a finite integer $M$ such that  $v^{\boldsymbol{\tilde{\pi}}_m}(s)\geq v^{\boldsymbol{\pi}_m}(s)$ for any $m\geq M$.

The proof of part (i) is achieved by proof by contradiction.
Denote by $v^{\boldsymbol{\pi},\mathfrak{g}\equiv {0}}$ the value function for the {\fontfamily{cmss}\selectfont Controller} for the system \textit{without the {\fontfamily{cmss}\selectfont Global}}. 
Indeed, 
let $(\hat{\boldsymbol{\pi}},\mathfrak{g})$ be the policy profile at the stable point of the system (Markov perfect equilibrium) and assume that {\fontfamily{cmss}\selectfont Global}'s interventions lead to a decrease in total system returns. Then by construction 
$    v^{\hat{\boldsymbol{\pi}},\mathfrak{g}}(s)< v^{\boldsymbol{\pi},\mathfrak{g}\equiv {0}}(s)
$
which is a contradiction since $(\hat{\boldsymbol{\pi}},\mathfrak{g})$ is a stable point (MPE profile). 

We now prove part (ii). 

By part (i) we have that $v^{\boldsymbol{\tilde{\pi}}}(s)=\underset{m\to\infty}{\lim}v^{\boldsymbol{\tilde{\pi}}_m}(s)\geq v^{\boldsymbol{\pi}}(s)= \underset{m\to\infty}{\lim}v^{\boldsymbol{\pi}_m}(s)$. Since $v^{\boldsymbol{\pi}}(s)$ is maximal in the sequence $v^{\boldsymbol{\pi}_1}(s),v^{\boldsymbol{\pi}_2}(s),\ldots,v^{\boldsymbol{\pi}}(s)$ we can deduce that $v^{\boldsymbol{\pi}}(s)\geq v^{\boldsymbol{\pi}_n}(s)$ for any $n\leq \infty$. Hence for any $n$ there exists a $c\geq 0$ such that $\underset{m\to\infty}{\lim}v^{\boldsymbol{\tilde{\pi}}_m}(s)= v^{\boldsymbol{\pi}_n}(s)- c$.
Now by construction $v^{\boldsymbol{\tilde{\pi}}_m}(s)\to v^{\boldsymbol{\tilde{\pi}}}(s)$ as $m\to\infty$, therefore the sequence $\tilde{v}^{\boldsymbol{\pi}_1},\tilde{v}^{\boldsymbol{\pi}_2},\ldots,$ forms a Cauchy sequence. Therefore, there exists an $M$ such that for any $\epsilon >0$, $v^{\tilde{\boldsymbol{\pi}}_n}(s)- (v^{\boldsymbol{\pi}_n}(s)- c)< \epsilon$ $\forall n\geq M$. Since $\epsilon$ is arbitrary we can conclude that $v^{\tilde{\boldsymbol{\pi}}_n}(s)- (v^{\boldsymbol{\pi}_n}(s)- c)= 0$ $\forall n\geq M$. Since $c\geq 0$, we immediately deduce that $v^{\tilde{\boldsymbol{\pi}}_n}(s)\geq v^{\boldsymbol{\pi}_n}(s), \forall n\geq M$  which is the required result.    
\end{proof}

\section*{Proof of Proposition \ref{prop:switching_times}}
\begin{proof}
We begin by re-expressing the \textit{activation times} at which the {\fontfamily{cmss}\selectfont Global} agent activates \textbf{{\fontfamily{cmss}\selectfont Central}}. In particular,an activation time $\tau_k$ is defined recursively $\tau_k=\inf\{t>\tau_{k-1}|s_t\in A,\tau_k\in\mathcal{F}_t\}$ where $A=\{s\in \mathcal{S},g(s_t)=1\}$.
The proof is given by deriving a contradiction.  Let us there suppose that $\mathcal{M}v_G(s_{\tau_k})\leq v_G(s_{\tau_k})$ and that the activation time $\tau'_1>\tau_1$ is an optimal activation time. Construct the $\mathfrak{g}'$ and $\mathfrak{g}$ policy switching times by $(\tau'_0,\tau'_1,\ldots,)$ and $(\tau'_0,\tau_1,\ldots)$ respectively.  Define by $l=\inf\{t>0;\mathcal{M}v_G(s_t)= v_G(s_t)\}$ and $m=\sup\{t;t<\tau'_1\}$.
By construction we have that
{\small
\begin{align*}
& \quad v_G(s|\boldsymbol{\pi},\mathfrak{g}')
\\&=\mathbb{E}\left[R_G(s_{0},\boldsymbol{a}_{0},g)+\mathbb{E}\left[\ldots+\gamma^{l-1}\mathbb{E}\left[R(s_{\tau_1-1},\boldsymbol{a}_{\tau_1-1},g)+\ldots+\gamma^{m-l-1}\mathbb{E}\left[ R_G(s_{\tau'_1-1},\boldsymbol{a}_{\tau'_1-1},g)+\gamma\mathcal{M}^{\boldsymbol{\pi},\mathfrak{g}'}v_G(s_{\tau_1}|\boldsymbol{\pi},\mathfrak{g}')\right]\right]\right]\right]
\\&<\mathbb{E}\left[R_G(s_{0},\boldsymbol{a}_{0},g)+\mathbb{E}\left[\ldots+\gamma^{l-1}\mathbb{E}\left[ R_G(s_{\tau_1-1},\boldsymbol{a}_{\tau_1-1},g)+\gamma\mathcal{M}^{\boldsymbol{\pi},\tilde{\mathfrak{g}}}v_G(s_{\tau_1}|\boldsymbol{\pi},\mathfrak{g}')\right]\right]\right]
\end{align*}}
We make use of the following observation 
\begin{align}
&\mathbb{E}\left[ R_G(s_{\tau_1-1},\boldsymbol{a}_{\tau_1-1},g)+\gamma\mathcal{M}^{\boldsymbol{\pi},\tilde{\mathfrak{g}}}v_G(s_{\tau_1}|\boldsymbol{\pi},\mathfrak{g}')\right]
\\&\leq \max\left\{\mathcal{M}^{\boldsymbol{\pi},\tilde{\mathfrak{g}}}v_G(s_{\tau_1}|\boldsymbol{\pi},\mathfrak{g}'),\underset{\boldsymbol{a}_{\tau_1}\in\boldsymbol{\mathcal{A}}}{\max}\;\left[ R_G(s_{\tau_1},\boldsymbol{a}_{\tau_1},g)+\gamma\sum_{s'\in\mathcal{S}}P(s';a_{\tau_1},s_{\tau_1})v_G(s'|\boldsymbol{\pi},\mathfrak{g})\right]\right\}.
\end{align}

Using this we deduce that
{\tiny
\begin{align*}
&v_G(s|\boldsymbol{\pi},\mathfrak{g}')\leq\mathbb{E}\Bigg[R_G(s_{0},\boldsymbol{a}_{0},g)+\mathbb{E}\Bigg[\ldots
\\&+\gamma^{l-1}\mathbb{E}\left[ R_G(s_{\tau_1-1},\boldsymbol{a}_{\tau_1-1},g)+\gamma\max\left\{\mathcal{M}^{\boldsymbol{\pi},\tilde{\mathfrak{g}}}v_G(s_{\tau_1}|\boldsymbol{\pi},\mathfrak{g}'),\underset{\boldsymbol{a}_{\tau_1}\in\boldsymbol{\mathcal{A}}}{\max}\;\left[ R_G(s_{\tau_{k}},\boldsymbol{a}_{\tau_{k}},g)+\gamma\sum_{s'\in\mathcal{S}}P(s';a_{\tau_1},s_{\tau_1})v_G(s'|\boldsymbol{\pi},\mathfrak{g}),\right]\right\}\right]\Bigg]\Bigg]
\\&=\mathbb{E}\left[R_G(s_{0},\boldsymbol{a}_{0},g)+\mathbb{E}\left[\ldots+\gamma^{l-1}\mathbb{E}\left[ R_G(s_{\tau_1-1},\boldsymbol{a}_{\tau_1-1},g)+\gamma\left[T_G v_G(s_{\tau_1}|\boldsymbol{\pi},\tilde{\mathfrak{g}})\right]\right]\right]\right]=v_G(s|\boldsymbol{\pi},\tilde{\mathfrak{g}}),
\end{align*}}
where the first inequality is true by assumption on $\mathcal{M}$. This is a contradiction since $\pi'$ is an optimal policy for Player 2. Using analogous reasoning, we deduce the same result for $\tau'_k<\tau_k$ after which deduce the result. Moreover, by invoking the same reasoning, we can conclude that it must be the case that $(\tau_0,\tau_1,\ldots,\tau_{k-1},\tau_k,\tau_{k+1},\ldots,)$ are the optimal switching times. This completes the proof.
\end{proof}
\section{Proof of Theorem \ref{thm:optimal_policy_budget}}
\begin{proof}
The proof of the Theorem is straightforward since by Theorem \ref{theorem:existence}, {\fontfamily{cmss}\selectfont Global}'s problem can be solved using a dynamic programming principle. The proof immediately by application of Theorem 2 in \cite{sootla2022saute}.

\end{proof}

\end{document}